\documentclass[11pt,letterpaper]{article}

\usepackage[margin=1in]{geometry}
\usepackage[utf8]{inputenc}

\usepackage{amsmath,amssymb,amsthm,mathrsfs}
\usepackage{physics}
\usepackage{complexity}
\usepackage{caption,subcaption}
\usepackage{algorithm}
\usepackage{algpseudocode}
\usepackage{setspace}

\usepackage{graphicx}
\usepackage[colorlinks=true]{hyperref}
\usepackage[dvipsnames]{xcolor}
\definecolor{darkred}  {rgb}{0.5,0,0}
\definecolor{darkblue} {rgb}{0,0,0.5}
\definecolor{darkgreen}{rgb}{0,0.5,0}
\hypersetup{
urlcolor      = blue,         
linkcolor     = darkblue,     
citecolor     = darkgreen,    
filecolor     = darkred,
linkcolor     = darkblue
}
\usepackage{aliascnt}

\usepackage{tcolorbox}
\tcbuselibrary{breakable}
\usepackage{caption,subcaption}
\usepackage{tikz-cd}
\usetikzlibrary{decorations.pathreplacing, positioning,calc,arrows.meta,shapes.geometric}
\pgfmathsetmacro\MathAxis{height("$\vcenter{}$")}

\usepackage[toc,page]{appendix}
\usepackage{tikz,pgfplots}\pgfplotsset{compat=1.16}
\usetikzlibrary{positioning,fit,calc}

\definecolor{darkred}{rgb}{0.5,0,0}
\definecolor{darkblue}{rgb}{0,0,0.5}
\definecolor{darkgreen}{rgb}{0,0.5,0}
\definecolor{DarkRed}{RGB}{170,0,0}

\makeatletter
\@ifpackageloaded{hyperref}{}{
  \usepackage[colorlinks=true]{hyperref}
}
\makeatother

\hypersetup{
  urlcolor  = blue,
  linkcolor = darkblue,
  citecolor = darkgreen,
  filecolor = darkred
}

\theoremstyle{plain}

\newtheorem{theorem}{Theorem}[section]
\newtheorem*{theorem*}{Theorem}

\newaliascnt{proposition}{theorem}

\aliascntresetthe{proposition}

\newaliascnt{conjecture}{theorem}

\aliascntresetthe{conjecture}

\newaliascnt{lemma}{theorem}
\newtheorem{lemma}[lemma]{Lemma}
\aliascntresetthe{lemma}

\newaliascnt{claim}{theorem}

\aliascntresetthe{claim}

\newaliascnt{fact}{theorem}
\newtheorem{fact}[fact]{Fact}
\aliascntresetthe{fact}

\newaliascnt{corollary}{theorem}
\newtheorem{corollary}[corollary]{Corollary}
\aliascntresetthe{corollary}

\theoremstyle{definition}

\newaliascnt{definition}{theorem}
\newtheorem{definition}[definition]{Definition}
\aliascntresetthe{definition}

\newaliascnt{remark}{theorem}
\newtheorem{remark}[remark]{Remark}
\aliascntresetthe{remark}

\newaliascnt{properties}{theorem}

\aliascntresetthe{properties}

\newaliascnt{assumption}{theorem}
\newtheorem{assumption}[assumption]{Assumption}
\aliascntresetthe{assumption}

\newaliascnt{observation}{theorem}
\newtheorem{observation}[observation]{Observation}
\aliascntresetthe{observation}

\newaliascnt{example}{theorem}

\aliascntresetthe{example}

\newaliascnt{question}{theorem}
\newtheorem{question}[question]{Question}
\aliascntresetthe{question}

\usepackage[nameinlink, capitalize]{cleveref}

\crefname{theorem}{Theorem}{Theorems}
\Crefname{theorem}{Theorem}{Theorems}
\crefname{proposition}{Proposition}{Propositions}
\Crefname{proposition}{Proposition}{Propositions}
\crefname{conjecture}{Conjecture}{Conjectures}
\Crefname{conjecture}{Conjecture}{Conjectures}
\crefname{lemma}{Lemma}{Lemmas}
\Crefname{lemma}{Lemma}{Lemmas}
\crefname{claim}{Claim}{Claims}
\Crefname{claim}{Claim}{Claims}
\crefname{fact}{Fact}{Facts}
\Crefname{fact}{Fact}{Facts}
\crefname{corollary}{Corollary}{Corollaries}
\Crefname{corollary}{Corollary}{Corollaries}
\crefname{definition}{Definition}{Definitions}
\Crefname{definition}{Definition}{Definitions}
\crefname{remark}{Remark}{Remarks}
\Crefname{remark}{Remark}{Remarks}
\crefname{properties}{Properties}{Properties}
\Crefname{properties}{Properties}{Properties}
\crefname{assumption}{Assumption}{Assumptions}
\Crefname{assumption}{Assumption}{Assumptions}
\crefname{observation}{Observation}{Observations}
\Crefname{observation}{Observation}{Observations}
\crefname{example}{Example}{Examples}
\Crefname{example}{Example}{Examples}
\crefname{protocol}{Protocol}{Protocols}
\Crefname{protocol}{Protocol}{Protocols}
\crefname{equation}{Equation}{Equations}
\Crefname{equation}{Equation}{Equations}
\crefname{section}{Section}{Sections}
\Crefname{section}{Section}{Sections}
\crefname{question}{Question}{Question}

\usepackage{enumerate}

\renewcommand{\epsilon}{\varepsilon}

\providecommand{\tr}{\operatorname{tr}}

\newcommand{\labs}[1]{\left\lvert #1\right\rvert}

\providecommand{\MathAxis}{2.5}

\providecommand{\BC}{\mathbb{C}}

\providecommand{\BR}{\mathbb{R}}

\providecommand{\CB}{\mathcal{B}}
\providecommand{\CC}{\mathcal{C}}

\providecommand{\CE}{\mathcal{E}}
\providecommand{\CF}{\mathcal{F}}

\providecommand{\calH}{\mathcal{H}}

\providecommand{\CK}{\mathcal{K}}
\providecommand{\CL}{\mathcal{L}}

\providecommand{\CN}{\mathcal{N}}
\providecommand{\CO}{\mathcal{O}}
\providecommand{\calP}{\mathcal{P}}

\providecommand{\CS}{\mathcal{S}}

\providecommand{\vA}{\mathbf{A}}
\providecommand{\vB}{\mathbf{B}}
\providecommand{\vC}{\mathbf{C}}

\providecommand{\vH}{\mathbf{H}}
\providecommand{\vI}{\mathbf{I}}

\providecommand{\vO}{\mathbf{O}}
\providecommand{\vP}{\mathbf{P}}

\providecommand{\vX}{\mathbf{X}}
\providecommand{\vY}{\mathbf{Y}}
\providecommand{\vZ}{\mathbf{Z}}

\providecommand{\vh}{\mathbf{h}}

\providecommand{\vrho}{\boldsymbol{\rho}}
\providecommand{\vsigma}{\boldsymbol{\sigma}}

\providecommand{\sA}{\mathsf{A}}
\providecommand{\sB}{\mathsf{B}}

\providecommand{\sI}{\mathsf{I}}

\providecommand{\sM}{\mathsf{M}}

\providecommand{\sd}{\mathsf{d}}
\providecommand{\sk}{\mathsf{k}}
\providecommand{\sJ}{\mathsf{J}}
\providecommand{\sq}{\mathsf{q}}
\providecommand{\sF}{\mathsf{F}}
\providecommand{\sU}{\mathsf{U}}

\newcommand{\sT}{\mathsf{T}}

\newcommand{\rd}{\mathrm{d}}
\newcommand{\triple}[1]{\vert\!\vert\!\vert #1 \vert\!\vert\!\vert}

\usepackage[backend=biber,style=alphabetic,url=false,isbn=false,maxnames=10,minnames=3,maxalphanames=4,minalphanames=3]{biblatex}
\begin{document}

\title{Fast mixing of all-to-all quantum systems \\at high temperatures}
\author{Thiago Bergamaschi\thanks{Department of EECS, UC Berkeley, CA, USA. \href{mailto:thiagob@berkeley.edu}{thiagob@berkeley.edu}}}

\maketitle

\begin{abstract}

 It is shown that arbitrary quantum $k$-local Hamiltonians with bounded strength interactions admit a quantum Gibbs sampler \cite{chen2023efficient} with a system-size independent spectral gap, at sufficiently high temperatures. This generalizes the existing quantum fast-mixing results beyond the geometrically-local setting. As a consequence, such systems admit fully-polynomial time quantum approximation algorithms for partition functions and global expectation values.

\end{abstract}

\setcounter{tocdepth}{2}
\tableofcontents
\newpage

\section{Introduction}

A central theme in quantum many-body physics is that locality constrains dynamics. 
In \textit{geometrically local} systems (that is, lattices) such constraints are typically captured by Lieb-Robinson bounds \cite{Lieb1972, hastings2006spectral, nachtergaele2010LRB}.
These bounds quantify the speed at which information percolates throughout the system, confining the time-evolution of local operators, to an effective \textit{lightcone} around the operator. 
Lieb-Robinson bounds have played a foundational role in our understanding of locality  in quantum systems, underlying results on tensor network descriptions and area laws \cite{hastings2006solving, HastingsWen2005, hastingsAreaLaw, Alhambra_2021}, quantum and classical simulation algorithms \cite{Osborne_2006, haah2020quantum, wild2023classical, McDonough_2025}, and correlation-decay and stability properties \cite{hastings2006spectral, nachtergaele2006lieb, Bravyi2006, bravyi2010topological}.

At the same time, many quantum systems of interest are not naturally described by lattice geometry. Examples include electronic-structure Hamiltonians \cite{WLA13, ogorman2021electronicstructurefixedbasis}, strongly correlated systems \cite{Sachdev_1993, SYK, Maldacena_rmkSYK}, mean-field models \cite{toth90, Bj_rnberg_2013, Bj_rnberg_2016}, expander-based qLDPC codes \cite{Tillich_2014, Hastings_2021, panteleev2022asymptoticallygoodquantumlocally, dinur2025expansionhigherdimensionalcubicalcomplexes}, all of which exhibit long-range or non-geometric interactions. Understanding how such \textit{all-to-all systems} scramble quantum information under time-evolution can be significantly more challenging, since 
the usual notion of a lightcone is no longer the correct measure of operator growth  \cite{Lashkari_2013,Lucas_2020, chen2021operator}.  

In this paper, we consider such families of all-to-all interacting Hamiltonians, and study the complexity of their thermal (or `Gibbs') state-preparation algorithms. Following the modern quantum Gibbs sampling paradigm, we study their dissipative state preparation under Lindbladian evolution (or, a continuous-time quantum Markov chain) \cite{Rall2023thermalstate,chen2023quantum,chen2023efficient, ding2024efficient, jiang2024quantum}. Recently, there has been significant interest in proofs of efficient convergence, or \textit{mixing times} for these algorithms, albeit, existing arguments for non-commuting systems are all limited to lattice Hamiltonians \cite{rouze2024efficient, rouze2024optimal, bakshi2025dobrushinconditionquantummarkov, BC26, vsmid2025rapid, tong2025fermions}. In part, this is since the analysis in prior work fundamentally relies on Lieb-Robinson bounds.

The main result in this paper is a proof that arbitrary all-to-all $\sk$-local Hamiltonians with bounded strength interactions (see \cref{def:all-to-all}) admit a  Gibbs sampler \cite{chen2023efficient} with a system-size independent spectral gap at sufficiently high temperatures. Thus, such systems are \textit{fast-mixing}, and consequently admit fully-polynomial time quantum approximation schemes for partition functions and expectation values. 
Our argument is based on revisiting the recently introduced non-commutative versions of the Dobrushin conditions, sufficient conditions for rapid-mixing in quantum systems \cite{bakshi2025dobrushinconditionquantummarkov,MajewskiOlkiewiczZegarlinski1998, temme2015fast, rouze2024optimal}, under the quantum cluster expansion of  Neto\v{c}n\'y
and Redig \cite{Neto2004, Mann_2021}. 
In the process, we hope that our techniques shed some light on the challenges in studying mixing times in all-to-all quantum systems, and serve as a stepping stone towards more complex disordered systems; see \cref{section:related} for further discussion.


\subsection{Contributions}

We consider generic $\sk$-local Hamiltonians on systems of $n$ qudits. We will state our results as a function of three notions of locality for such systems. 

\begin{definition}
    [All-to-all Hamiltonians]\label{def:all-to-all} Let $\vH:=\sum_{e\in \Gamma} \vh_e\otimes \vI_{[n]\setminus e}$ be a local Hamiltonian on $n$ qudits of local dimension $2^\sq$. The \textit{locality} $\sk$ of $\vH$ is the uniform support size $|\mathsf{supp}(\vh_e)| = \mathsf{k}.$ The \textit{degree} $\sd$ is the number of qudits any given qudit interacts with 
        \begin{equation}
        u\in [n]:\quad     \big|\{v:\exists e\in \Gamma \text{ s.t. } \{u, v\}\subseteq e\}\big|\leq \sd
        \end{equation}
   The \textit{interaction strength} $\sJ$ is the sum over terms coupling a pair of qudits $ u\neq v: \sum_{e\supseteq\{u,v\}} \|\vh_e\|\leq \sJ$.
\end{definition}

We focus on the regime where $\sk, \sq$ and $(\sd\cdot \sJ)$ are fixed constants independent of $n$.

\begin{remark}\label{remark:simple-case}
    Our statements are phrased as a function of $\sk, \sd, \sJ$ and $\sq$ for completeness. In a first pass, the reader may want to imagine a qubit $\sk=2$ local Hamiltonian (a graph) of bounded degree $\sd$ where each edge-term is bounded by $\sJ=\sd^{-1}$. We emphasize $\sd$ is allowed to scale with $n$.
\end{remark}

Our goal is to study the rate of (open-system) thermalization \cite{Breuer2006open} of such Hamiltonians, when weakly-coupled to a heat-bath at some fixed (constant) inverse-temperature $\beta$, eventually converging to its Gibbs state:
\begin{equation}
    \vrho:= \frac{e^{-\beta \vH}}{\tr[e^{-\beta \vH}]}.
\end{equation}

Following the modern paradigm in the quantum Gibbs sampling literature, we model this thermalization process by the evolution of a quantum Markov semigroup, generated by the thermal Lindbladian $\CL$ of \cite{chen2023efficient}:
\begin{equation}
    \frac{\rd}{\rd t}\vsigma=\CL[\vsigma], \quad \text{where} \quad \CL[\vsigma]=\sum_{a\in \CS_{[n]}^1}\CL_a [\vsigma], \quad \text{ and }\quad \CL_a[\vrho]=0,\label{eq:lind-eq-intro}
\end{equation}
which satisfies some form of quasi-locality and (KMS) detailed-balance, thus fixing the Gibbs state.\footnote{The detailed structure of the dynamics depends on $\vH, \beta$ and a choice of updates (here, single-site Pauli operators $\CS_{[n]}^1$); see \cref{section:prelim} for further background.} The main result of this paper is a proof of a spectral gap for the \cite{chen2023efficient} Lindbladian $\CL$ for arbitrary all-to-all Hamiltonians, at sufficiently high temperatures.

\begin{theorem}
    [Fast-mixing of all-to-all quantum systems at high temperatures]\label{theorem:all-to-all-mixing} 
    Let $\vH$ be an all-to-all Hamiltonian as in \cref{def:all-to-all}. Then, there exists a universal constant $\mathsf{c}\geq 1$ and threshold $\beta_\mathsf{c}:=[\mathsf{c}^{\sq\cdot \sk}\cdot \sd\sJ]^{-1} $ such that for all $\beta<\beta_\mathsf{c}$ the Lindbladian $\CL$ \eqref{eq:lind-eq-intro} admits a system-size independent spectral gap, depending only on $\sq, \sk$.
\end{theorem}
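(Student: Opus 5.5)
We plan to prove \cref{theorem:all-to-all-mixing} through a non-commutative Dobrushin-type condition, in the spirit of \cite{bakshi2025dobrushinconditionquantummarkov}. The first step is to write the Lindbladian as a sum of single-site generators $\CL=\sum_a \CL_a$, each KMS-detailed-balanced with respect to $\vrho$, and to pass to the Heisenberg picture on the KMS-weighted space $L^2(\vrho)$. There $-\CL^\dagger$ is a positive semidefinite self-adjoint operator. A Dobrushin criterion then asserts the following: suppose each local generator has a gap $\lambda_0$ on its own site, uniformly in $n$. Suppose also that the influence matrix $C_{uv}$ satisfies $\sup_u\sum_v C_{uv}\le 1/2$, where $C_{uv}$ measures how much the local conditional-expectation-like map at site $u$ is perturbed by operators supported at $v$. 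Then the global gap is at least $\Omega(\lambda_0)$.

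I would establish this criterion via a coupling/contraction argument on a site-wise oscillation seminorm $\triple{\vO}:=\sum_u \delta_u(\vO)$, where $\delta_u$ measures dependence on qudit $u$. The argument shows that $e^{t\CL^\dagger}$ contracts $\triple{\cdot}$ at rate $\lambda_0(1-\sup_u\sum_v C_{uv})$. It then converts this contraction into a spectral gap by the standard bound $\mathrm{gap}\ge$ the exponential decay rate of oscillation, using finite dimension and detailed balance.

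The second step is to control the quasi-locality of the jump operators and transition weights. In \cite{chen2023efficient}, these involve $e^{\pm\beta \vH/2}$, Heisenberg evolutions $e^{i\vH t}\vA e^{-i\vH t}$ integrated against a Gaussian filter, and the operator $\vrho^{1/2}\cdot\vrho^{-1/2}$. Lieb-Robinson bounds are unavailable here, so instead I would expand $\vrho^{\pm 1/2}\vA\vrho^{\mp1/2}$ and the filtered time evolution as nested commutator (Taylor) series in $\beta$. Each term is then organized as a polymer or cluster expansion over connected sequences of interaction terms $e\in\Gamma$ rooted at $a$, following Neto\v{c}n\'y--Redig \cite{Neto2004}. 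The number of connected clusters of size $m$ containing a given site is at most $(\mathsf{c}'^{\sk}\sd)^m$, and each contributes at most $(\beta\sJ)^m/m!$ times $2^{O(\sq m)}$ from normalization. Hence the total weight of clusters that touch site $v$, starting from site $u$, decays like $(\mathsf{c}^{\sq\sk}\beta\sd\sJ)^{\mathrm{dist}}$, which is summable once $\beta<\beta_\mathsf{c}$. The Gaussian filter must be handled so that the time integral converges. I would truncate it at $|t|\lesssim\beta$, so that imaginary- and real-time growth are both controlled by the same combinatorial series.

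The third step is to assemble the pieces. Summing the cluster bound over $v$ gives $\sum_v C_{uv}\le O(\beta\sd\sJ\,\mathsf{c}^{\sq\sk})$, and this is at most $1/2$ below the threshold. For the local gap, at $\beta=0$ each $\CL_a$ reduces to the single-site depolarizing generator, whose gap depends only on $\sq$. Perturbation theory in $\beta$, with errors controlled by the same expansion, keeps $\lambda_0$ bounded below by a constant depending on $\sq,\sk$. The main obstacle is the second step: in the all-to-all setting the number of clusters grows with $\sd$, which may scale with $n$. The argument must therefore carefully charge each degree factor $\sd$ to a factor $\sJ$, using that $\sJ$ bounds the \emph{sum} of terms coupling a pair of qudits. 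It must also avoid combinatorial blowups from the non-commutative ordering of nested commutators. I would first try bounding nested commutators by walks in the interaction hypergraph, with weights $\|\vh_e\|$ and $\sum_{e\ni u}\|\vh_e\|\le \sd\sJ$, as in standard Lieb-Robinson-free commutator bounds.
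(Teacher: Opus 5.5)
There is a genuine gap at the core of your plan. The Dobrushin criterion you state for $\CL$ is not available for the \cite{chen2023efficient} generator, and the locality input it would need is exactly what is not known in the all-to-all setting.

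A pairwise criterion (an influence matrix $C_{uv}$ plus a gap ``on its own site'') requires each local term to act only through its own site, i.e.\ $\CL_u[\vX]=\CL_u[\delta_u[\vX]]$ with $\delta_u[\vX]=\vX-2^{-\sq}\mathrm{tr}_u[\vX]$. Only then is $[\delta_v,\CL_u][\vX]$ bounded by $\|\delta_v[\vX]\|$ alone. The terms $\CL_a$ fail this. Their jumps $\hat{\vA}^a(\omega,t)$ are time-evolved, so $\CL_a$ acts nontrivially on operators trivial at $u$, and its kernel is not the algebra of such operators. Perturbing from $\beta=0$ only shows that $\CL_u$ is close to the depolarizer; it does not produce a local kernel or a site-local gap.

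This is why the lattice Dobrushin arguments for $\CL$ use a tripartite criterion. That criterion needs each $\CL_a$ written as a sum of strictly local pieces with $\sum(\text{norm}\times\text{support size})$ bounded uniformly in $n$, i.e.\ a decomposition of $\hat{\vA}^a(\omega)$. The paper singles this out as open (\cref{question:ahat}), and the available tools fall short:
\begin{itemize}
\item Ball or shell decompositions lose a factor $\sd^{r}$ per shell, which is fatal when $\sd$ grows with $n$.
\item The real-time cluster expansion converges absolutely only for $|t|\lesssim(\sk\sd\sJ)^{-1}$. A cluster of $m$ terms carries weight $\prod_{e}(e^{2|t|\|\vh_e\|}-1)$ with \emph{no} $1/m!$, because the Taylor factorial is consumed by the multinomial count of orderings; your estimate is wrong on this point. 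Integrating termwise against the Gaussian therefore gives $\sum_m (c\,\beta\sk\sd\sJ)^m\,\Gamma(\tfrac{m+1}{2})$, which diverges.
\item Truncating the filter at $|t|\lesssim\beta$ does not fix this. Since $\sigma\le\beta^{-1}$, the filter has width $\gtrsim\beta$, so the discarded tail has norm only a fixed small constant, not $O(1/n)$. It is also a delocalized operator. Without a few-site decomposition, the only bound on its contribution to $\|[\hat{\vA}^a(\omega),\vX]\|$ goes through $\|\vX-2^{-\sq n}\mathrm{tr}[\vX]\,\vI\|\le\sum_{\ell\in[n]}\|\delta_\ell[\vX]\|$. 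That is an influence on every site, hence a factor $n$.
\item Lieb--Robinson-type path bounds on single-site commutators $\|[\vA(t),\vB_v]\|$ do not supply the needed decomposition either.
\end{itemize}

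The missing idea is to avoid localizing $\CL$ at all.
\begin{itemize}
\item The paper runs the Dobrushin argument for the pseudo-Lindbladian $\CK=\sum_a\CK_a$ of \cite{BC26}, with $\CK_a[\vX]=[\vA^a,\vX]\,\vrho^{1/2}\vA^{a\dagger}\vrho^{-1/2}-\vrho^{-1/2}\vA^{a\dagger}\vrho^{1/2}[\vA^a,\vX]$.
\item This generator satisfies $\CK_u=\CK_u\circ\delta_u$ exactly and involves only imaginary-time conjugations. Both Dobrushin parameters therefore reduce to $\mathsf{conv}_{\beta/4}$ and $\mathsf{corr}_{\beta/4}(i,j)$, which the convergent cluster expansion controls.
\item Because $e^{t\CK}$ is not CP, contraction is proved in the KMS oscillator norm $\triple{\vX}_{\vrho}=\sum_i\|\delta_i[\vX]\|_{\vrho}$, not the operator-norm seminorm you propose.
\item The gap is then transferred to $\CL$ by the Dirichlet form comparison $\langle\vO,-\CL^\dagger[\vO]\rangle_{\vrho}\ge\mathsf{c}_2^{\sq\sk}\langle\vO,-\CK[\vO]\rangle_{\vrho}$. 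This is proved by peeling off $\omega$ and then $t$ using Duhamel's formula, KMS H\"older, and a path-comparison/Cauchy--Schwarz step. It needs only averaged bounds on $\|[\vA(t),\vO]\|_{\vrho}$, never a local decomposition of $\hat{\vA}^a(\omega)$.
\end{itemize}
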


 In the context of balanced 2-local Hamiltonians as in \cref{remark:simple-case}, $\beta_\mathsf{c}$ is simply some explicit universal constant, which assuming $\mathsf{BQP}\neq \mathsf{NP}$ is optimal due to the hardness of approximate counting \cite{sly2010computational, sly2012computational}. To the extent of our knowledge, prior proofs of unconditional, system-size independent spectral gaps for quantum Markov chains have been restricted to lattice Hamiltonians. Notably, one-dimensional systems at all temperatures \cite{kastoryano2016commuting, BC26}, and in finite-dimensional lattices at high-temperatures \cite{rouze2024efficient, rouze2024optimal, bakshi2025dobrushinconditionquantummarkov}.\footnote{Albeit, these systems have been shown to be rapid-mixing, i.e. $\CO(\log \frac{n}{\epsilon})$ mixing time.}

\begin{corollary}
    [Quantum algorithms for partition functions and expectation values]
    In the context of \cref{theorem:all-to-all-mixing}, for any $\epsilon>0$ there exists a $\mathsf{poly}(n,\frac{1}{\epsilon})$ time quantum algorithm to estimate the partition function $Z_\beta:=\tr[e^{-\beta \vH}]$ up to relative error $1\pm \epsilon$, and global expectation values up to additive error $\epsilon$, in all-to-all quantum systems at sufficiently high temperatures.
\end{corollary}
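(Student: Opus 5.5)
The corollary follows from \cref{theorem:all-to-all-mixing} by the standard reduction from counting to sampling. It also uses the fact that the \cite{chen2023efficient} Lindbladian can be simulated efficiently on a quantum computer. The plan has three steps.

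\textbf{Step 1 (state preparation).} Since the spectral gap $\lambda$ of $\CL$ is a constant independent of $n$, the $\chi^2$-to-trace-distance bound gives a mixing time
\begin{equation}
t_{\mathrm{mix}}(\epsilon)\le \lambda^{-1}\big(\log\|\vrho^{-1}\|^{1/2}+\log\epsilon^{-1}\big),
\end{equation}
starting from the maximally mixed state. For $\beta\,\mathrm{d}\mathsf{J}=\CO(1)$ we have $\log\|\vrho^{-1}\|\le 2\beta\|\vH\|+n\sq\log 2=\CO(n)$. Hence $t_{\mathrm{mix}}=\CO(n+\log\epsilon^{-1})$.

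Next, simulate $e^{t\CL}$ using the \cite{chen2023efficient} algorithm. Its cost is polynomial in $t$, the number $|\CS^1_{[n]}|=\CO(n)$ of jump operators, $\beta$, $\|\vH\|$ and $\log(1/\epsilon)$. Here $\|\vH\|\le n\,\mathrm{d}\mathsf{J}$, and Hamiltonian simulation of a $\sk$-local $\vH$ with polynomially many terms is efficient. This yields an $\epsilon$-approximate Gibbs state in $\mathsf{poly}(n,1/\epsilon)$ time.

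\textbf{Step 2 (expectation values).} Prepare $\CO(\epsilon^{-2})$ copies of the approximate Gibbs state and measure the global observable. Averaging and Hoeffding's inequality give additive error $\epsilon\|\vO\|$.

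\textbf{Step 3 (partition function).} Use a cooling schedule $0=\beta_0<\beta_1<\dots<\beta_\ell=\beta$ and the telescoping product
\begin{equation}
Z_\beta=Z_0\prod_{i}\frac{Z_{\beta_{i+1}}}{Z_{\beta_i}},\qquad Z_0=2^{\sq n}.
\end{equation}
Each ratio equals $\tr[\vrho_{\beta_i}e^{-(\beta_{i+1}-\beta_i)\vH}]$, up to an operator-ordering correction. The correction is avoided by writing the ratio as
\begin{equation}
\tr\big[\vrho_{\beta_i}^{1/2}e^{-\delta\vH}\vrho_{\beta_i}^{1/2}\big]=\tr[\vrho_{\beta_i}e^{-\delta\vH}],
\end{equation}
which holds exactly because all operators here are functions of $\vH$ and commute.

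Choose $\delta=\beta_{i+1}-\beta_i=1/\|\vH\|=\Theta(1/n)$. Then each ratio's random variable lies in $[e^{-1},e]$, so $\ell=\CO(n)$ stages suffice.

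Every intermediate $\beta_i<\beta_{\mathsf c}$, so \cref{theorem:all-to-all-mixing} applies uniformly along the schedule. Estimate each ratio by preparing $\vrho_{\beta_i}$ (Step 1) and then either applying a block-encoding of $e^{-\delta\vH}/e$ or measuring in the energy basis via phase estimation. Taking $\CO(\ell^2\epsilon^{-2})$ samples per stage gives relative error $\epsilon/\ell$ per ratio. The errors multiply to an overall $1\pm\epsilon$ with constant success probability, which is boosted by the median trick.

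\textbf{Main obstacle.} The only delicate point is controlling the bias from imperfect Gibbs states in a relative-error estimate. Because each ratio estimator is bounded above and below by constants, a trace-distance error of $\epsilon/\ell$ in $\vrho_{\beta_i}$ perturbs the ratio by only $\CO(\epsilon/\ell)$ relatively. This costs only polynomial overhead through the $\log(1/\epsilon)$ dependence of the mixing time.
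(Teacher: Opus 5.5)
Your proposal is correct and takes the same route as the paper. The paper simply notes that the system-size independent gap gives $t_{\mathsf{mix}}(\epsilon)\le \CO(n+\log\frac{1}{\epsilon})$ via \cref{lem:mixing_from_gap}, invokes efficient simulation of $\CL$ \cite[Theorem I.2]{chen2023efficient}, and leaves the standard sampling-to-counting reductions implicit. Your telescoping schedule with $\ell=\CO(\beta\|\vH\|)=\CO(n)$ stages fills in that reduction correctly; strictly, making all $\ell$ ratio estimates accurate at once needs an extra $\log\ell$ factor per stage or a product-of-variances argument, which does not affect polynomiality.
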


Indeed, \cref{theorem:all-to-all-mixing} implies that the evolution converges to within $\epsilon$ trace distance of $\vrho$ in time $t_{\mathsf{mix}}(\epsilon)\leq \CO(n+ \log \frac{1}{\epsilon})$, entailing efficient state-preparation on a quantum computer by simulation of the dynamics \cite[Theorem I.2]{chen2023efficient}. We remark that existing classical algorithms for the same task \cite{Mann_2021, harrow2020classical, yin2023, bakshi2024high, zlokapa2026rigorousquasipolynomialtimeclassicalalgorithm} require $n^{\CO(\log \sd)}$ time. In fact, it is conjectured in  \cite[Section V]{Mann_2021} that improved runtimes could be achieved via Markov chain methods.


\subsection{Discussion and related work}\label{section:related}

\textbf{Cluster expansions.} As we discuss shortly, the starting point behind \cref{theorem:all-to-all-mixing} is a cluster expansion for the short (complex) time evolution of all-to-all Hamiltonians. While Lieb-Robinson bounds for lattice Hamiltonians quantify the locality of the time-evolution of local observables as a function of \textit{distance} on the lattice \cite{Lieb1972,hastings2006spectral, nachtergaele2010LRB, chen2023speed}, such cluster expansions will express the time evolution as linear combinations of connected operators \cite{Lashkari_2013, Lucas_2020, chen2021operator}. 

Cluster expansions have a rich history in classical and quantum many-body physics, including e.g. establishing uniqueness and analyticity of the Gibbs measure \cite{dobrushin1985completely, dobrushin1985constructive, KoteckyPreiss1986, harrow2020classical}, correlation decay properties \cite{ueltschi2005clusterexpansionscorrelationfunctions, Kliesch_2014, Fr_hlich_2015},
algorithms for partition functions and local expectation values at high temperatures \cite{Mann_2021, yin2023, wild2023classical, Mann_2024,
hongrui2025convergencecumulantexpansionpolynomialtime} and low temperatures \cite{Helmuth2019, Chen2019FastAA, Ber23, HM23lowT}, concentration inequalities \cite{Neto2004, KuwaharaSaito2020Gaussian, cramer2026berryesseenboundquantumlattice}. The specific quantum cluster expansion we use is based on the approach in \cite{Neto2004} and its algorithmic version in \cite{Mann_2021}; which in turn is based on the seminal classical Koteck\'y and Preiss condition \cite{KoteckyPreiss1986}. 

\paragraph{Discussion on classical simulation.} Perhaps the most direct applications of our techniques would be to design new or improve existing classical simulation algorithms for all-to-all systems. Although we do not attempt this effort here for conciseness, we believe the complex-time operator growth bounds of \cref{section:quasi-locality} could be leveraged for several different styles of classical algorithms:
\begin{enumerate}
    \item To prove \textit{separability} (unentanglement) of such systems at high temperatures, under the ``quasi-local perturbations of identity'' approach of \cite{bakshi2024high} (see also \cite{Scalet2026SpatialES, bakshi2026entanglementquantumspinchains}). 
    \item To prove a zero-free region in $\beta$ for the complex-valued partition function, in turn implying classical quasi-polynomial  time algorithms under Barvinok's method \cite{barvbook, barvinok2014computingpartitionfunctioncliques, barvinok2014computingpermanentsomecomplex, harrow2020classical}. See also \cite{zlokapa2026sykthermalexpectationsclassically, zlokapa2026rigorousquasipolynomialtimeclassicalalgorithm} for a recent application of this method to the Sachdev-Ye-Kitaev (SYK) model \cite{Sachdev_1993, SYK, Maldacena_rmkSYK}, similarly based on Koteck\'y and Preiss criteria.

    \item The spectral gap for $\CL$ implies, following the quasi-adiabatic framework \cite{chen2023efficient, BC26}, that the purified Gibbs state can be prepared by short-time evolution of a time-dependent quasi-local all-to-all Hamiltonian. It is conceivable that real-time L.R. bounds for such systems also give quasi-polynomial time simulation algorithms e.g. \cite[Section 4]{chen2023speed}, \cite{McDonough_2025}.
\end{enumerate}


\paragraph{Quantum mixing times.} There is a longstanding literature on the design and analysis of quantum Markov chains, largely focused on lattice Hamiltonians. \cite{kastoryano2016commuting, Bardet_2021, Bardet2024, capel2021modified,kochanowski2024rapid, capel2024quasi} studied commuting lattice systems, and proved such systems are fast (and rapid) mixing by relying on certain decay-of-correlation assumptions. \cite{temme2015fast, temme2016thermalizationtimeboundspauli, Ding:2024ltx} studied bounded degree stabilizer models, and proved rapid/fast-mixing by relying on the absence of ``free energy barriers'' in such systems. A variety of special cases can also be analyzed under certain quantum-to-classical mappings \cite{Hwang24, BCL24, BGL25, paezvelasco2025efficientsimplegibbsstate,basso_quantum_2025}.

Proofs of mixing times in non-commuting (lattice) systems began with the work of \cite{rouze2024efficient}, who proved a spectral gap for the \cite{chen2023efficient} Lindbladian at high temperatures by relying on techniques from the study of gapped ground states (namely, the stability of the gap \cite{Michalakis2011StabilityOF}). This was later strengthened by \cite{rouze2024optimal, bakshi2025dobrushinconditionquantummarkov} to rapid-mixing on lattices by means of quantum analogues of the Dobrushin condition, see also \cite{tong2025fermions, vsmid2025polynomial, vsmid2025rapid, bakshi2026externalfields} for applications to related systems. This framework, based on the decay of the oscillator norm (and its dual), was largely introduced in \cite{MajewskiZegarlinski1995QSDI, MajewskiZegarlinski1996QSDII, MajewskiOlkiewiczZegarlinski1998} as a quantum analog of the results of  Aizenman and Holley \cite{Aizenman1987}. We note that classically, these Dobrushin-style conditions  \cite{dobrushin1985completely, dobrushin1987completely} suffice to establish rapid-mixing for all-to-all classical systems at high-temperatures, and it is this gap that we attempt to close in this work. However, as mentioned, we fall short of proving \textit{rapid} mixing. 

Recently \cite{BC26} studied one-dimensional quantum systems, and proved such systems are fast-mixing at all temperatures. At the heart of their analysis was the introduction of a ``pseudo-Lindbladian'' generator $\CK$, which admits locality properties absent in the generator $\CL$ \cite{chen2023efficient}, at the cost of not generating a CPTP map (indeed, this will later be the reason why our results are limited to fast-mixing). In this paper, we revisit this pseudo-Lindbladian in the context of all-to-all systems, see \cref{sec:overview}.

\paragraph{Outlook on disordered systems.} As mentioned, the most interesting future direction lies in the study of mixing times for random quantum Hamiltonians, in particular quantum spin glasses \cite{AJBray_1980} and the Sachdev-Ye-Kitaev Hamiltonian \cite{Sachdev_1993, SYK, Maldacena_rmkSYK}. The results in this paper, much like the Dobrushin condition \cite{dobrushin1985completely, dobrushin1987completely} applied to classical spin glasses, are far too weak to address such systems. Indeed, in the classical Markov chain literature, proofs of mixing times for spin glasses came about over $20$ years after their counterparts for lattice systems; developed through a prominent framework known as \textit{spectral independence} \cite{ALG24,anari2021entropic, AJKPV22,Eldan2021} and variants. 

We hope that the techniques developed here for deterministic systems can nevertheless serve as a stepping stone, shedding some light on the relevant challenges in all-to-all quantum systems, as well as their connections to operator growth theory and classical simulation algorithms.

\subsection{Technical overview}\label{sec:overview}

\textbf{Organization.} We dedicate this section to an overview of our analysis. We refer the reader to \cref{section:prelim} for the relevant background on weighted inner products, detailed-balance, and Lindbladian dynamics. A cartoon of our argument is presented in \cref{fig:argument-summary}, which we elaborate on in the sequence.

\begin{enumerate}
    \item We begin in \cref{section:challenges} by posing a central challenge: on establishing quasi-locality properties for Lindbladian dynamics in all-to-all quantum systems, integral to the previous proofs of fast-mixing in the literature \cite{rouze2024efficient, rouze2024optimal, bakshi2025dobrushinconditionquantummarkov}.
    \item  In \cref{section:overview-cte}, we present a simple cluster expansion for the short (complex) time evolution of local operators under all-to-all Hamiltonians, based on \cite{Neto2004, Mann_2021}. 
    \item In \cref{section:overview-K}, we discuss how this convergence of complex time evolution naturally feeds into the pseudo-Lindbladian $\CK$ framework of \cite{BC26}, and we introduce a simplified Dobrushin-style condition for $\CK$; sufficient to prove a spectral gap.
    
    \item In \cref{section:overview-L-to-K}, we discuss how to lift the spectral gap for $\CK$ into a spectral gap for the original Lindbladian dynamics $\CL$ \cite{chen2023efficient}, by comparing their Dirichlet forms. 
\end{enumerate}

The technical proofs of points 2, 3, 4 are presented in \cref{section:quasi-locality,section:dobrushin,section:d-form-comparison} respectively.

\begin{figure}[t]
\centering
\begin{tikzpicture}[
  box/.style={rounded corners, minimum width=.9cm, minimum height=1.4cm, text width=3.5cm, align=center, font=\small, draw=black, fill=gray!2},
  arrow/.style={-{Latex}, thick},
  node distance=1cm and 1.6cm,
  box2/.style={rounded corners, minimum width=1.0cm, minimum height=1.4cm, text width=4.0cm, align=center, font=\small, draw=black, fill=gray!2},
  box3/.style={rounded corners, minimum width=.8cm, minimum height=1.4cm, text width=2.8cm, align=center, font=\small, draw=black, fill=gray!2},
  box4/.style={rounded corners, minimum width=1.0cm, minimum height=1.0cm, text width=2.6cm, align=center, font=\small, draw=black, fill=gray!2},
    boundingbox/.style={draw=black, thick, rounded corners=5pt, inner sep=10pt, fit=#1}
]

\node[box2] (T1) {
  \textbf{Quasi-locality of} \\ \vspace{3pt}$\vA(z)= e^{iz\vH}\vA e^{-iz\vH}$
};

\node[box3, right =of T1] (T2) {
  \textbf{Convergence of} \\ \vspace{3pt}$\vA(z)\approx \vA$
};

\node[box2, below=of T1] (T3) {
  \textbf{Approx. commutation} \\ \vspace{3pt}$[\vA_i^\dagger(z), \vB_j(z)]\approx 0$
};

\node[box3, right=of T2] (T4) {
  \textbf{Locality in $\beta$} \\\vspace{3pt}
  $\CK_i^{(\beta)}\approx \CK_i^{(0)}$
};

\node[box, below=of T2] (T5) {
 \textbf{Pairwise influence} \\ \vspace{3pt} $[\CK_j^{(0)},  \CK_i^{(\beta)}]\approx 0$
};

\node[box4](T6) at ($(T4.south) + (0,-1.3cm)$) {
  \textbf{$\CK$ is gapped}\\[3pt]
};

\node[box4] (T7) at ($(T6.south) + (0,-1.3cm)$) {
  \textbf{$\CL$ is gapped}
};

\draw[arrow] (T1.east) -- (T2.west) node[midway, above = 4pt] {\footnotesize Cor~\ref{cor:cte-conv}};

\draw[arrow] (T1) -- (T3) node[midway, right = 2pt] {\footnotesize Cor~\ref{cor:total-corr}};
\draw[arrow] (T2) -- (T4) node[midway, above = 2pt] {\footnotesize  Lem~\ref{lemma:Kt-approx}};

\draw[arrow] (T6.south) -- (T7.north) node[midway, right] {\footnotesize Thm~\ref{theorem:K-to-ckg}};

\coordinate (mergepoint1) at ($ (T1)!0.5!(T3) + (2.8cm,0.1cm) $);
\coordinate (corner2) at ($ (T3.east) + (-0.05cm,0.65cm)$);
\draw[-, thick] (corner2) -- (mergepoint1);
\coordinate (corner3) at ($ (T2.west) + (0,-0.6cm)$);
\draw[-, thick] (corner3) -- (mergepoint1);
\draw[arrow] (mergepoint1) -- ($ (T5.west) + (0.05cm,.67cm)$) node[midway, above=-1pt, xshift=30pt] {\footnotesize Lem~\ref{lemma:Kmf-approx}}; 
\node[draw,circle,inner sep=1.5pt,thick,fill=white] at (mergepoint1) {\small\textbf{+}};


\coordinate (mergepoint2) at ($ (T3)!0.5!(T4) + (2.8cm,.1cm) $);
\coordinate (corner4) at ($ (T5.east) + (-0.05,0.65cm)$);
\draw[-, thick] (corner4) -- (mergepoint2);
\coordinate (corner5) at ($ (T4.west) + (0,-0.6cm)$);
\draw[-, thick] (corner5) -- (mergepoint2);
\draw[arrow] (mergepoint2) -- ($ (T6.west) + (0.04cm,.45cm)$) node[midway,  above=0pt, xshift=30pt] {\footnotesize Lem~\ref{lemma:d-conditions}};
\node[draw,circle,inner sep=1.5pt,thick,fill=white] at (mergepoint2) {\small\textbf{+}};

\node[boundingbox=(T1)(T2)(T3)(T4)(T5)(T6)(T7)] {};

\end{tikzpicture}

\caption{
An outline of the proof of \cref{theorem:all-to-all-mixing}.}
 \label{fig:argument-summary}

\end{figure}

\newpage

\subsubsection{A central challenge}
\label{section:challenges}

Central to existing proofs of fast or rapid mixing for the \cite{chen2023efficient} Lindbladian dynamics \cite{rouze2024efficient, rouze2024optimal, vsmid2025rapid, vsmid2025polynomial, tong2025fermions, BC26} is to establish basic quasi-locality properties for the operator Fourier transform of a local operator $\vA$:
\begin{align}
    \hat{\vA}(\omega):=\frac{1}{\sqrt{2\pi}}\int_{-\infty}^\infty f_\sigma(t)\cdot e^{-i\omega t}\cdot e^{i\vH t}\vA e^{-i\vH t} \rd t
\end{align}
where $f_\sigma(t)$ is a Gaussian filter function, see \cref{section:prelim} for further background. These operators are the ``updates'' or transitions of the Lindbladian $\CL$ \eqref{eq:lind-eq-intro}, and in prior work for lattice systems, Lieb-Robinson bounds were used to argue that they can be written as a linear combination of Pauli operators whose norm decays with the distance of said operators to the support of $\vA$.

\begin{question}\label{question:ahat}
    Can one establish such a ``shell decomposition'' for $\hat{\vA}(\omega)$ in all-to-all systems?
\end{question}

We remark that the quasi-locality of $\hat{\vA}(\omega)$ in lattice systems holds at \textit{all temperatures}; here, even allowing the Gaussian filter width $\sigma$ to depend mildly on $\beta$ (so, a high-temperature statement), it remains unclear how to address \cref{question:ahat}. In the absence of basic locality properties for $\CL$, we are forced to perform a roundabout approach to establishing a spectral gap, where we first replace $\CL$ by a strictly more localized Markovian dynamics. The starting point to this approach is to study the \textit{short} complex-time evolution, which we discuss in the sequence. 

\subsubsection{Quasi-locality of complex-time evolution.}
\label{section:overview-cte}

In \cref{section:quasi-locality}, we derive a series of quasi-locality properties tied to the convergence of complex time evolution $z\in \mathbb{C}:\vA(z):=e^{iz\vH}\vA e^{-iz\vH}$ of local operators $\vA$ at high-temperatures (i.e. small $|z|$), which roughly express $\vA(z)$ as a linear combination (or superposition) of \textit{sparse} Pauli operators. The following cluster expansion is based on minor modifications to the expansion in \cite{Neto2004, Mann_2021} done for the partition function.

\begin{lemma}
    [Cluster expansion for $\vA(z)$, informal] The complex-time evolution of a local operator $\vA$ on site $j\in [n]$ under an all-to-all Hamiltonian $\vH$ (\cref{def:all-to-all}) satisfies the expansion:
    \begin{equation}
    e^{iz\vH}\vA e^{-iz\vH}-\vA := \sum_{\substack{\sF\subseteq \Gamma \\ \sF \textsf{ connected to }j}} \vA_{\sF}(z), \quad \|\vA_{\sF}(z)\|\leq \prod_{e\in \sF}\big(e^{2|z|\|\vh_e\|}-1\big),
\label{eq:intro-cte}
\end{equation}
\end{lemma}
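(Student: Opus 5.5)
The plan is to write $e^{iz\vH}\vA e^{-iz\vH}=\lim_{N\to\infty}(e^{iz\vH/N})^{N}\vA (e^{-iz\vH/N})^N$. Instead of this Trotter limit, I will use an inclusion–exclusion (Möbius) decomposition over subsets of interaction terms. For each $\sF\subseteq\Gamma$, define the partial evolution $\vA^{\sF}(z):=e^{iz\vH_\sF}\vA e^{-iz\vH_\sF}$ with $\vH_\sF:=\sum_{e\in\sF}\vh_e$, and set
\begin{equation}
\vA_\sF(z):=\sum_{\sF'\subseteq \sF}(-1)^{|\sF\setminus\sF'|}\,\vA^{\sF'}(z).
\end{equation}
Möbius inversion on the Boolean lattice then gives $\vA^{\Gamma}(z)=\sum_{\sF\subseteq\Gamma}\vA_\sF(z)$, and the $\sF=\emptyset$ term is exactly $\vA$. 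The two tasks are: (i) show $\vA_\sF(z)=0$ unless $\sF$ is connected to $j$, meaning that the hypergraph formed by the edges of $\sF$ together with the vertex $j$ is connected; (ii) prove the norm bound.

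For (i), suppose $\sF=\sF_1\sqcup\sF_2$, where $\sF_2\neq\emptyset$ consists of the edges not connected to $j$ within $\sF$. Then $\vH_{\sF'}=\vH_{\sF'\cap\sF_1}+\vH_{\sF'\cap\sF_2}$, and the two pieces act on disjoint supports. The second piece is also disjoint from the support of $\vA^{\sF'\cap\sF_1}(z)$, which lives on the sites covered by $\sF_1$ together with $j$. Hence $\vA^{\sF'}(z)=\vA^{\sF'\cap\sF_1}(z)$ does not depend on $\sF'\cap\sF_2$. The alternating sum over subsets of $\sF_2$ therefore vanishes.

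For (ii), the naive bound $\|\vA_\sF\|\le 2^{|\sF|}e^{2|z|\|\vH_\sF\|}$ is too weak, so I need a product of the factors $(e^{2|z|\|\vh_e\|}-1)$. My approach is to expand the exponentials as Taylor/Dyson series in the individual terms. Write $e^{iz\vH_{\sF'}}\vA e^{-iz\vH_{\sF'}}=\sum_{m}\frac{(iz)^m}{m!}\mathrm{ad}_{\vH_{\sF'}}^m(\vA)$, and expand $\mathrm{ad}_{\vH_{\sF'}}^m=\sum_{e_1,\dots,e_m\in\sF'}\mathrm{ad}_{\vh_{e_m}}\cdots\mathrm{ad}_{\vh_{e_1}}$. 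Each word appears in the expansion of $\vA^{\sF'}$ exactly when its letter set $\{e_1,\dots,e_m\}\subseteq\sF'$. Inclusion–exclusion over $\sF'\subseteq\sF$ then keeps precisely the words whose letter set equals $\sF$. This gives
\begin{equation}
\|\vA_\sF(z)\|\le\|\vA\|\sum_{m}\frac{|z|^m}{m!}\sum_{\substack{(e_1,\dots,e_m)\in\sF^m\\ \text{surjective onto }\sF}}\prod_{i}2\|\vh_{e_i}\|.
\end{equation}
The sum over surjective words with exponential weights factorizes: counting multiplicities $n_e\ge1$ with multinomial coefficients yields $\prod_{e\in\sF}\sum_{n_e\ge1}\frac{(2|z|\|\vh_e\|)^{n_e}}{n_e!}=\prod_{e\in\sF}(e^{2|z|\|\vh_e\|}-1)$. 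I normalize $\|\vA\|\le1$, or carry it as a prefactor. The convergence of the series justifies the rearrangements, since all sums are absolutely convergent for finite $\Gamma$.

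The main obstacle is making the surjectivity and factorization step rigorous. In particular, I must check that the Möbius combination cancels exactly the words that miss some letter. This follows from $\sum_{\sF'\supseteq S,\ \sF'\subseteq\sF}(-1)^{|\sF\setminus\sF'|}=[S=\sF]$. I also must check that the bound $\|\mathrm{ad}_{\vh}\|\le 2\|\vh\|$ is what yields the factor $2$ in the exponent. Connectivity then matches the support argument above. Note that the words surviving for connected $\sF$ may individually vanish, but this only helps the bound.
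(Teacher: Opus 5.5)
Your proposal is correct and is essentially the paper's proof. The paper uses the same steps: expand $e^{iz\vH}\vA e^{-iz\vH}$ in nested commutators $\mathsf{adj}_{e_m}\cdots\mathsf{adj}_{e_1}[\vA]$, group the words by their letter set $\sU(\mathbf{e})=\sF$, apply $\|\mathsf{adj}_e\|_{\infty\to\infty}\le 2\|\vh_e\|$, and factorize via multinomial coefficients into $\prod_{e\in\sF}(e^{2|z|\|\vh_e\|}-1)$. Your M\"obius-inversion definition over partial evolutions $\vA^{\sF'}(z)$ reproduces the paper's $\vA_\sF(z)$, as your inclusion--exclusion identity shows, and additionally gives a clean support-based proof that $\vA_\sF(z)=0$ for clusters not connected to $j$, which the paper only asserts.
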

\noindent which, roughly speaking, expresses the contribution to $\vA(z)$ associated to a connected cluster of interactions $\sF\subseteq \Gamma$ as a function of the product of interaction strengths in $\sF$. 

In some intuitive sense, \eqref{eq:intro-cte} suggests that information from $j$ \textit{only spreads to few other sites at a time}. To make this quantitative, we introduce two quasi-locality parameters which later play a central role in our mixing-time analysis. 

\begin{definition}\label{def:cte-intro}
    We refer to the \textit{complex-time evolution constant} $\mathsf{conv}_\beta$ as the maximum deviation of any single-site Pauli operator from itself  under complex-time evolution:
    \begin{equation}
        \mathsf{conv}_\beta:=\max_{a\in \CS_{[n]}^1}  \sup_{|z|\leq \beta}\|\vA^a(z)-\vA^a\| \label{eq:cte-intro}
    \end{equation}
\end{definition}

We remark that the reason this constant will play such a central role in the latter stages of this paper, is due to liberal use of KMS H\"older's inequality, see \cref{lem:kms-holder}.

\begin{definition}\label{def:pairwise-corr-intro}
    We refer to the \textit{pairwise correlations constant} $\mathsf{corr}_\beta$ as the maximum commutator-norm of any pair of single-site Pauli operators (on distinct sites) under complex-time evolution:
    \begin{equation}
        \mathsf{corr}_\beta(i, j):=\max_{\substack{a\in \CS_{i} \\ b\in \CS_j}}  \sup_{|z|\leq \beta}\|[\vA^a(z)^\dagger, \vB^b(z)]\|, \quad \text{and}\quad \mathsf{corr}_\beta:=\max_{i\in [n]} \sum_{j\in [n]} \mathsf{corr}_\beta(i, j)
    \end{equation}
\end{definition}

Intuitively, \cref{def:pairwise-corr-intro} captures how much of the support of $\vA(z)$ has percolated to site $j$; and $\mathsf{corr}_\beta$ is a measure of the total influence of $i$ on the other sites. Next, we discuss how fast-mixing can be reduced directly to these parameters.

\begin{remark}
    In \cref{section:cluster-expansion}, we argue the definitions above can be reduced to series expansions akin to \eqref{eq:intro-cte}, and we express sufficient conditions for the convergence of the series at small $\beta$. 
\end{remark}

\subsubsection{Dobrushin conditions for the pseudo-Lindbladian $\CK$ \cite{BC26}.}
\label{section:overview-K}

The convergence of complex-time evolution enables us to turn to the ``pseudo-Lindbladian'' of \cite{BC26}, who designed a superoperator $\CK = \sum_{i\in [n]} \CK_i$  resembling the \cite{chen2023efficient} Lindbladian, but with significantly sharper quasi-locality properties instrumental in proving spectral gaps.

\begin{definition}
    For every single-site Pauli operator $\vA^a\in \CS_{[n]}^{1}$, we introduce the superoperator:
    \begin{equation}\label{eq:CK}
 \CK_a[\vX]:=
  [\vA^{a},\vX]\,
  \bigl(\vrho^{1/2}\vA^{a\dagger}\vrho^{-1/2}\bigr)
  \;-\;
  \bigl(\vrho^{-1/2}\vA^{a\dagger}\vrho^{1/2}\bigr)
  [\vA^{a},\vX],
\end{equation}
and further let $\CK_j:= \sum_{a\in \CS_j^1}\CK_a$ be the generator defined by all single-site Pauli jumps on $j\in [n]$.
\end{definition}

 This generator satisfies a series of nice properties:

\begin{lemma}
    [Properties of $\CK$ {\cite[Lemma III.5]{BC26}}]\label{lemma:K-properties}
    Each $\CK_j$ is $\vrho$-detailed-balanced and is negative semi-definite. Furthermore, the kernel of $\CK_j$ is exactly the set of locally trivial operators:
    \begin{equation}
        \CK_j[\vX]=0\iff \vX=\vX_{[n]\setminus j}\otimes \mathbb{I}_j\label{eq:kernel-K}
    \end{equation}
\end{lemma}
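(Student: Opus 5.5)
The plan is to compute the quadratic (Dirichlet) form of each $\CK_a$ directly in the KMS inner product $\langle \vX,\vY\rangle_{\vrho}:=\tr[\vX^\dagger\vrho^{1/2}\vY\vrho^{1/2}]$. We aim to show it equals minus a KMS-norm squared of a commutator. All three properties should then follow essentially for free.

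\textbf{Step 1: the sesquilinear form.} Fix $a\in\CS_j^1$ and write $\vA=\vA^a$. For arbitrary $\vX,\vY$, substitute \eqref{eq:CK} into $\langle \vY,\CK_a[\vX]\rangle_{\vrho}$. The conjugations $\vrho^{\pm1/2}$ sitting next to $\vA^\dagger$ cancel against the $\vrho^{1/2}$ factors of the inner product, giving
\begin{equation*}
\langle \vY,\CK_a[\vX]\rangle_{\vrho}
=\tr\big[\vY^\dagger\vrho^{1/2}[\vA,\vX]\vrho^{1/2}\vA^\dagger\big]-\tr\big[\vY^\dagger\vA^\dagger\vrho^{1/2}[\vA,\vX]\vrho^{1/2}\big].
\end{equation*}
By cyclicity of the trace this becomes $\tr\big[(\vA^\dagger\vY^\dagger-\vY^\dagger\vA^\dagger)\vrho^{1/2}[\vA,\vX]\vrho^{1/2}\big]$. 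Since $\vA^\dagger\vY^\dagger-\vY^\dagger\vA^\dagger=-[\vA,\vY]^\dagger$, we obtain
\begin{equation*}
\langle \vY,\CK_a[\vX]\rangle_{\vrho}=-\big\langle [\vA,\vY],[\vA,\vX]\big\rangle_{\vrho}.
\end{equation*}
The only thing to check carefully here is the bookkeeping of the $\vrho^{\pm 1/2}$ factors; this is the one place an error could creep in.

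\textbf{Step 2: detailed balance and negativity.} The right-hand side of Step 1 is a Hermitian sesquilinear form in $(\vY,\vX)$. This is because $\langle\cdot,\cdot\rangle_{\vrho}$ is an inner product; it is positive definite since $\vrho$ is full rank. Hence $\langle \vY,\CK_a\vX\rangle_{\vrho}=\overline{\langle \vX,\CK_a\vY\rangle_{\vrho}}=\langle\CK_a\vY,\vX\rangle_{\vrho}$, so $\CK_a$ is self-adjoint with respect to $\langle\cdot,\cdot\rangle_{\vrho}$, i.e. $\vrho$-detailed-balanced. Taking $\vY=\vX$ gives $\langle\vX,\CK_a\vX\rangle_{\vrho}=-\|[\vA,\vX]\|_{\vrho}^2\le 0$. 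Summing over $a\in\CS_j^1$, these properties pass to $\CK_j$, with $\langle\vX,\CK_j\vX\rangle_{\vrho}=-\sum_{a\in\CS_j^1}\|[\vA^a,\vX]\|_{\vrho}^2$.

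\textbf{Step 3: the kernel.} If $\vX=\vX_{[n]\setminus j}\otimes\mathbb{I}_j$, then every $[\vA^a,\vX]=0$, so $\CK_a[\vX]=0$ directly from \eqref{eq:CK}. Conversely, suppose $\CK_j[\vX]=0$. Then $0=\langle\vX,\CK_j\vX\rangle_{\vrho}=-\sum_a\|[\vA^a,\vX]\|^2_{\vrho}$, and positive definiteness of the KMS norm forces $[\vA^a,\vX]=0$ for every single-site Pauli on $j$. The single-site Paulis on $j$, together with the identity, span the full matrix algebra on the qudit $j$. So $\vX$ lies in the commutant of $\mathbb{I}_{[n]\setminus j}\otimes M_{2^{\sq}}(\BC)$, which is exactly $M_{[n]\setminus j}\otimes\mathbb{I}_j$. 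This gives \eqref{eq:kernel-K}.
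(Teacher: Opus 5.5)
Your proof is correct: the identity $\langle \vY,\CK_a[\vX]\rangle_{\vrho}=-\langle[\vA^a,\vY],[\vA^a,\vX]\rangle_{\vrho}$ checks out, including the $\vrho^{\pm1/2}$ bookkeeping, and detailed balance, negativity and the kernel characterization via the commutant of the single-site algebra follow as you describe. The paper gives no proof of its own and cites \cite{BC26} instead; your identity is the same one the paper uses, in the form $\langle\vO,-\CK[\vO]\rangle_{\vrho}=\sum_{a}\|[\vA^a,\vO]\|_{\vrho}^2$, in the proof of \cref{theorem:K-to-ckg}, so this is essentially the same route.
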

We emphasize that this explicit, local characterization of the kernel of the single-site generators $\CK_i$ \eqref{eq:kernel-K}, is not true for the \cite{chen2023efficient} generator $\CL_i$. We refer to this superoperator as a \textit{pseudo-Lindbladian} as it admits non-positive spectra but does not generate a CP map, and thus doesn't define a physical channel one could implement on a quantum computer. Nevertheless, shortly we show why a spectral gap for $\CK$ implies a spectral gap for $\CL$ (\cref{section:overview-L-to-K}). \\

\noindent \textbf{Dobrushin conditions for $\CK$.} We now establish a set of sufficient conditions for $\CK$ to admit a spectral gap. Akin to the previous Dobrushin-style conditions in the quantum Gibbs sampling literature \cite{rouze2024optimal, bakshi2025dobrushinconditionquantummarkov, MajewskiOlkiewiczZegarlinski1998}, these conditions intuitively capture two locality properties of the local generators $\CK$: locality in \textit{temperature}, and locality in \textit{space}.\\

\noindent \textit{Locality in temperature.} We assume that for each site $j\in [n]$ the single-site generator $\CK_j^{(\beta)}:=\CK_j$ can be approximated by that at infinite-temperature (the depolarizing semigroup):
\begin{equation}
    \forall j\in [n]:\quad  \|\CK_j^{(\beta)}-\CK_j^{(0)}\|_{\vrho}\leq \eta_j.\label{eq:overview-temp}
\end{equation}
\noindent \textit{Pairwise influence.} We assume that for each site $j\in [n]$, the generator $\CK_j$ approximately commutes with the depolarizing semigroup on another site $\ell\neq j$:
    \begin{equation}
       \forall j\neq \ell:\quad  \|[\CK^{(0)}_\ell, \CK_j^{(\beta)}]\|_{\vrho}\leq \kappa_{j\ell}.\label{eq:overview-space}
    \end{equation}

    \begin{remark}
    The intuition behind \eqref{eq:overview-space} is that at high-temperatures, a single update $\CK_j$ on $j\in [n]$ cannot simultaneously influence all the other $\ell\neq j;$ resembling a mean-field statement. 
\end{remark}

    Shortly, we provide further intuition on why both of these conditions hold for a suitable choice of constants $\{\eta_j\}_j, \{\kappa_{j\ell}\}_{j, \ell}$ (and scale roughly linearly in $\beta$) at sufficiently high temperatures. In \cref{section:dob-conditions}, we prove that they define a simple sufficient condition for fast-mixing:

\begin{lemma}
    [A Dobrushin condition for $\CK$, informal]\label{lemma:intro-dob-cond} There exists a universal constant $\lambda$ satisfying the following guarantee. Assume \eqref{eq:overview-temp},\eqref{eq:overview-space} hold for a choice of constants  $\{\eta_j\}_j, \{\kappa_{j\ell}\}_{j, \ell}$, as well as the convergence of complex-time evolution \eqref{eq:cte-intro} with constant $\mathsf{conv}_\beta$. 
    
    Then, if said constants satisfy the inequality:
    \begin{equation}
        \gamma:= \max_j \big[2(1+\mathsf{conv}_\beta)^2 \cdot \eta_j + \sum_{\ell\neq j} \kappa_{j\ell}\big] < \lambda,
    \end{equation}
    then $\CK$ admits a system-size independent spectral gap: $\mathsf{gap}(\CK)\geq \lambda-\gamma$.
\end{lemma}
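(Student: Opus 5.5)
The plan is a perturbative comparison in the $\vrho$-weighted (KMS) inner product $\langle \vX,\vY\rangle_{\vrho}=\tr[\vX^\dagger\vrho^{1/2}\vY\vrho^{1/2}]$, in which each $\CK_j$ is self-adjoint and negative semi-definite by \cref{lemma:K-properties}. The benchmark is the infinite-temperature generator $\CK^{(0)}=\sum_j\CK_j^{(0)}$. Each $\CK^{(0)}_j$ is, up to normalization, the depolarizer $\lambda(\mathcal{E}_j-\mathrm{id})$, where $\mathcal{E}_j[\vX]=\tr_j[\vX]\otimes\mathbb{I}_j/2^{\sq}$. These local projections commute, so the spectral gap of $\CK^{(0)}$ is exactly the universal constant $\lambda$ appearing in the statement.

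The first step is to introduce the projectors $\mathcal{P}_j:=\mathbb{I}-\mathcal{E}_j$. I would then write $-\langle\vX,\CK\vX\rangle_{\vrho}\ge \lambda\|\vX-\mathcal{E}\vX\|^2_{\vrho}-\gamma\|\cdot\|^2$ and argue through a Dobrushin-style local-variance decomposition. For $\vX\perp\ker\CK$ in $\langle\cdot,\cdot\rangle_{\vrho}$, I would expand
\[
-\langle \vX,\CK\vX\rangle_\vrho=\sum_j -\langle\vX,\CK_j\vX\rangle_\vrho ,
\]
and compare each term with $\lambda\langle \vX,\mathcal{P}_j\vX\rangle$. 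The cleanest route is to compare $\CK$ with $\CK^{(0)}$ through the operator $\CK\CK^{(0)}$ or its symmetrization, in the spirit of the Aizenman--Holley argument. Concretely, I would bound
\[
\langle \CK^{(0)}\vX,\CK\vX\rangle_\vrho=\sum_{j,\ell}\langle \CK_\ell^{(0)}\vX,\CK_j^{(\beta)}\vX\rangle_\vrho
\]
and split it into diagonal ($\ell=j$) and off-diagonal parts.

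The diagonal part is handled by writing $\CK_j^{(\beta)}=\CK_j^{(0)}+(\CK_j^{(\beta)}-\CK_j^{(0)})$ and applying \eqref{eq:overview-temp}. This gives at least $\sum_j\|\CK^{(0)}_j\vX\|^2-\eta_j\|\CK_j^{(0)}\vX\|\,\|\cdot\|$. The main subtlety is that the natural norms differ: $\CK^{(0)}_j$ is self-adjoint in the Hilbert--Schmidt inner product, whereas we work in the $\vrho$-weighted one. Changing norms requires sandwiching by $\vrho^{\pm1/4}$, which is precisely a complex-time evolution of the single-site Paulis by imaginary time at most $\beta$. The KMS H\"older inequality (\cref{lem:kms-holder}) together with $\mathsf{conv}_\beta$ then controls this change, costing a factor $(1+\mathsf{conv}_\beta)$ on each side. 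I expect this to produce the $2(1+\mathsf{conv}_\beta)^2\eta_j$ term.

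For the off-diagonal part, $\ell\ne j$, I would move $\CK_\ell^{(0)}$ across using $\CK_\ell^{(0)}\CK_j^{(\beta)}=\CK_j^{(\beta)}\CK_\ell^{(0)}+[\CK_\ell^{(0)},\CK_j^{(\beta)}]$. The commuting part is a nonnegative cross term, since it is a product of commuting negative-semidefinite pieces up to the temperature error already accounted for. The commutator part is bounded via \eqref{eq:overview-space} by $\kappa_{j\ell}\|\mathcal{P}_\ell\vX\|\,\|\mathcal{P}_j\vX\|$. Applying Schur's test (AM--GM with row sums) to the matrix $(\kappa_{j\ell})$ gives $\max_j\sum_\ell\kappa_{j\ell}\sum_j\|\mathcal{P}_j\vX\|^2$, provided $\kappa$ is roughly symmetric; otherwise I would need to use both row and column maxima. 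Combining the two parts yields
\[
-\langle\vX,\CK\vX\rangle_\vrho\ge(\lambda-\gamma)\sum_j\|\mathcal{P}_j\vX\|^2_\vrho .
\]
The final step is to use $\sum_j\|\mathcal{P}_j\vX\|^2\ge\|\vX-\mathcal{E}\vX\|^2$ (a tensorization of variance) and to control the mismatch between the $\ker\CK$ orthogonality and the identity projection, again via $\mathsf{conv}_\beta$.

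I expect the main obstacle to be the bookkeeping between the Hilbert--Schmidt geometry, in which $\CK^{(0)}$ is nice, and the $\vrho$-geometry, in which $\CK$ is self-adjoint. The key question is ensuring that each norm conversion costs only a $(1+\mathsf{conv}_\beta)$ factor and that the positive cross terms genuinely have the right sign. If the sign of the cross term fails, I would fall back to proving that $\|\CK\vX\|_\vrho\ge(\lambda-\gamma)\|\vX\|_\vrho$ on $(\ker\CK)^\perp$ directly from the triangle inequality and the commutator bounds.
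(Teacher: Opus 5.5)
\textbf{There is a genuine gap.} Your route is a static $L^2$ comparison of $\CK$ with $\CK^{(0)}$ through $\langle\CK^{(0)}\vX,\CK\vX\rangle_{\vrho}$. It breaks at the off-diagonal step, and the same obstruction returns in your closing step.

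With $\CK^{(0)}_\ell=-\lambda\delta_\ell$, the cross term for $\ell\neq j$ is $-\lambda\langle\delta_\ell\vX,\CK_j\vX\rangle_{\vrho}$. Even if $[\delta_\ell,\CK_j]=0$ exactly, this term has no sign. Besides the nonnegative piece $-\lambda\langle\delta_\ell\vX,\CK_j\delta_\ell\vX\rangle_{\vrho}$, it contains $-\lambda\langle\delta_\ell\vX,(1-\delta_\ell)\CK_j\vX\rangle_{\vrho}$. That piece vanishes only if $\delta_\ell$ is an orthogonal projection for the KMS inner product. It is one for Hilbert--Schmidt, but not for $\langle\cdot,\cdot\rangle_{\vrho}$ at $\beta>0$, since $\vrho$ is not a product state.

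None of the hypotheses (\eqref{eq:overview-temp}, \eqref{eq:overview-space}, $\mathsf{conv}_\beta$) controls this non-orthogonality with any summability in $\ell$, and there are $\sim n^2$ such pairs. It is also not ``already accounted for'' by the temperature error. Trading $\CK_j$ for $\CK_j^{(0)}$ in all off-diagonal terms costs $\sum_{\ell\neq j}\eta_j\|\delta_\ell\vX\|_{\vrho}\|\delta_j\vX\|_{\vrho}$, which can be of order $n\max_j\eta_j\sum_j\|\delta_j\vX\|_{\vrho}^2$.

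Your final step $\sum_j\|\delta_j\vX\|^2_{\vrho}\gtrsim\|\vX-\tr[\vrho\vX]\vI\|^2_{\vrho}$ has the same problem. It is Efron--Stein in the Hilbert--Schmidt geometry, but in the KMS geometry it is an approximate tensorization of variance for $\vrho$. That is essentially as strong as the lemma and is not implied by the assumptions.

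There are three further problems. You never convert a lower bound on $\langle\CK^{(0)}\vX,\CK\vX\rangle_{\vrho}$ into one on the Dirichlet form $-\langle\vX,\CK\vX\rangle_{\vrho}$. Schur's test needs column sums of $\kappa$, which the statement does not give. Your fallback $\|\CK\vX\|_{\vrho}\geq\|\CK^{(0)}\vX\|_{\vrho}-\sum_j\eta_j\|\delta_j\vX\|_{\vrho}$ loses a factor $\sqrt{n}$, because an $\ell^1$ sum of local oscillations is not controlled by $\|\sum_j\delta_j\vX\|_{\vrho}$ (take $\vX$ a sum of $n$ single-site Paulis).

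\textbf{The missing idea} is to avoid orthogonality between different sites altogether by working in $\ell^1$ and dynamically, as in \cref{lemma:d-conditions}.
\begin{enumerate}
\item The paper tracks the KMS oscillator norm $\triple{\vX}_{\vrho}=\sum_i\|\delta_i\vX\|_{\vrho}$ along $\vX_t=e^{t\CK}[\vX]$.
\item Differentiating $e^{\lambda z}e^{(t-z)(\CK-\CK_i)}\delta_i e^{z\CK}[\vX]$ gives a Duhamel formula for $\delta_i\vX_t$. Its sources are $\delta_i(\CK_i+\lambda\delta_i)\vX_z$ and $[\delta_i,\CK_j]\vX_z$ for $j\neq i$.
\item The only inputs are the $\vrho$-contractivity of $e^{t\CK}$ and $e^{t(\CK-\CK_i)}$ (each $\CK_j$ is $\vrho$-self-adjoint and negative semidefinite), and the kernel identities $\CK_i=\CK_i\delta_i$ and $[\delta_i,\CK_j]=[\delta_i,\CK_j]\delta_j$.
\item These bound the sources by $\|\delta_i\|_{\vrho}\eta_i\|\delta_i\vX_z\|_{\vrho}$ and $\kappa_{ji}\|\delta_j\vX_z\|_{\vrho}$.
\item Summing over $i$ gives a Gronwall inequality involving only row sums of $\kappa$, hence $\triple{\vX_t}_{\vrho}\leq e^{-(\lambda-\gamma)t}\triple{\vX}_{\vrho}$.
\end{enumerate}
The factor $2(1+\mathsf{conv}_\beta)^2$ is the bound on $\|\delta_i\|_{\vrho}$. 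It is needed precisely because $\delta_i$ is not a $\vrho$-contraction, which is the same non-self-adjointness that breaks your cross terms.

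No variance tensorization is then needed. Take an eigenvector $\CK[\vB]=-\mu\vB$ with $\vB$ not a multiple of $\vI$. Then $e^{-\mu t}\triple{\vB}_{\vrho}=\triple{e^{t\CK}[\vB]}_{\vrho}\leq e^{-(\lambda-\gamma)t}\triple{\vB}_{\vrho}$ with $\triple{\vB}_{\vrho}\neq0$, so $\mu\geq\lambda-\gamma$ (\cref{lemma:osc-to-gap}).
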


The proof of \cref{lemma:intro-dob-cond} is largely based on revisiting the argument in \cite{MajewskiOlkiewiczZegarlinski1998}, however, with two central modifications. 
\begin{enumerate}
    \item the kernel condition \eqref{eq:kernel-K} enables us to state this quantum Dobrushin condition as a pairwise influence criteria, as opposed to a tri-partite criteria as in \cite{rouze2024optimal, bakshi2025dobrushinconditionquantummarkov}.
    \item Since $\CK$ does not generate a CP channel, $e^{t\CK}$ is not necessarily contractive in the induced $\|\cdot \|_{\infty\rightarrow \infty}$ norm. Accordingly, we re-route the argument under a modification of the oscillator norm we coin the \textit{KMS oscillator norm}; which could be of independent interest. 
\end{enumerate}

\noindent \textbf{Satisfying the conditions via operator growth bounds.} It only remains to argue why these influence assumptions \eqref{eq:overview-temp},\eqref{eq:overview-space} hold for $\CK$ at high-temperatures. The idea is to use the explicit structure of $\CK$ in \eqref{eq:CK} to attempt to reduce these criteria to the previously computed operator growth bounds. The following statement quantifies this reduction in the pairwise influence case:

\begin{lemma}
    [Pairwise influences, informal]
    For any pair of sites $\ell\neq j\in [n]$, the generator $\CK_j$ at inverse-temperature $\beta$ satisfies the approximation
    \begin{equation}
          \|[\CK_\ell^{(0)}, \CK_j^{(\beta)}]]\|_{\vrho} \leq c_\sq\cdot (1+\mathsf{conv}_{\beta/4})^2\cdot \mathsf{corr}_{\beta/4} (j, \ell)
    \end{equation}
    with $c_\sq$ some explicit constant that depends on $\sq$. 
\end{lemma}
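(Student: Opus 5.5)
We plan to compute the commutator $[\CK_\ell^{(0)},\CK_j^{(\beta)}]$ explicitly from the form of $\CK_a$ in \eqref{eq:CK}, rewriting every term so that it contains a commutator between a (complex-time evolved) jump on $j$ and a jump on $\ell$. At infinite temperature, $\CK_\ell^{(0)}[\vX]=\sum_{b\in\CS_\ell^1}\big([\vB^b,\vX]\vB^{b\dagger}-\vB^{b\dagger}[\vB^b,\vX]\big)$, which up to a constant $c_\sq$ equals $2^{2\sq}(\mathrm{tr}_\ell\otimes \tfrac{\vI_\ell}{2^\sq}-\mathrm{id})$, i.e.\ a multiple of the local depolarizer on $\ell$. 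The jumps of $\CK_j^{(\beta)}$ are $\vA^a$, $\vA^{a\dagger}(i\beta/2)=\vrho^{1/2}\vA^{a\dagger}\vrho^{-1/2}$ and $\vA^{a\dagger}(-i\beta/2)$. The idea is that $\mathrm{tr}_\ell$ commutes exactly with left/right multiplication by operators supported off $\ell$. Consequently, the commutator of the depolarizer with $\CK_a$ is a sum of terms in which some factor of $\CK_a$ is replaced by its ``$\ell$-dependent part'', e.g.\ $\vA^{a\dagger}(i\beta/2)-\mathrm{tr}_\ell$-average, which is controlled by $\max_b\|[\vB^b,\vA^{a\dagger}(i\beta/2)]\|$ via the Pauli-twirl identity $\vY-\mathbb{E}_\ell\vY = 4^{-\sq}\sum_b \vB^{b\dagger}[\vB^b,\vY]$.

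The main difficulty is that the bound must be in the weighted norm $\|\cdot\|_{\vrho}$, and that the commutator appearing in $\mathsf{corr}_{\beta/4}$ is between jointly evolved operators at the same complex time, with $\vB^b$ evolved as well. To handle this, we would first conjugate. Writing $\|\CT\|_{\vrho}$ as the operator norm of $\Gamma^{1/2}\CT\Gamma^{-1/2}$ with $\Gamma[\vX]=\vrho^{1/4}\vX\vrho^{1/4}$, every jump $\vA$ becomes $\vA(\pm i\beta/4)$ on the symmetric side. Expanding $\Gamma^{1/2}[\CK_\ell^{(0)},\CK_j^{(\beta)}]\Gamma^{-1/2}$ then produces only products of operators $\vA^a(z),\vB^b(z')$ with $|z|,|z'|\le\beta/4$. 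After moving the non-$\ell$ factors through, each surviving term contains one commutator $[\vA^a(z)^\dagger,\vB^b(z)]$ or its analog, which is bounded by $\mathsf{corr}_{\beta/4}(j,\ell)$. The remaining factors are bounded using $\|\vA^a(z)\|\le 1+\mathsf{conv}_{\beta/4}$, which yields the prefactor $(1+\mathsf{conv}_{\beta/4})^2$. Where the two commutator entries sit at different imaginary times, we would use that conjugation by $\vrho^{\pm1/4}$ is a homomorphism, $[\vX(z),\vY(z)]=[\vX,\vY](z)$, to shift both to a common time before applying the definition. Alternatively, we would invoke KMS H\"older (\cref{lem:kms-holder}) to bound $\vrho$-weighted products of such terms.

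Concretely, the steps are as follows. First, write $\CK_\ell^{(0)}$ as a multiple of the local depolarizer and expand the commutator into a finite sum over $a\in\CS_j^1$ and $b\in\CS_\ell^1$, using at most $4^{\sq}\cdot 4^{\sq}$ terms, all absorbed into $c_\sq$. Second, conjugate by $\Gamma^{1/2}$, using detailed balance of both generators. Third, reorganize each term so that it is linear in a single commutator of evolved jumps, bounding every other factor in operator norm. We expect the third step to be the main obstacle: one must verify that every term genuinely has such a commutator, with all cancellations between the ``$[\vA,\vX]\vA^\dagger$'' and ``$\vA^\dagger[\vA,\vX]$'' pieces working out, and that the imaginary times line up at $\beta/4$ rather than at $\beta/2$.
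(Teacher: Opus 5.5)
Your plan is essentially the paper's proof: there one expands $\delta_\ell[\vX]=4^{-\sq}\sum_{b}(\vX-\vB^b\vX\vB^b)$, uses $[\vA^a,\vB^b]=0$ to get the exact identity $\vB\CK_a[\vX]\vB-\CK_a[\vB\vX\vB]=\vB[\vA,\vX]\,[\vrho^{1/2}\vA^\dagger\vrho^{-1/2},\vB]-[\vB,\vrho^{-1/2}\vA^\dagger\vrho^{1/2}]\,[\vA,\vX]\vB$ (so each term carries exactly one commutator and needs no cancellation between the two pieces of $\CK_a$), and then applies KMS H\"older. That step is exactly your conjugation by $\vX\mapsto\vrho^{1/4}\vX\vrho^{1/4}$ (this map is $\Gamma^{1/2}$ for $\Gamma[\vX]=\vrho^{1/2}\vX\vrho^{1/2}$, not for the $\Gamma$ you wrote), and it produces the prefactor $(1+\mathsf{conv}_{\beta/4})^2$. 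Drop the ``shift to a common time'' step: after conjugation the commutators are $[\vA^\dagger(\pm i\beta/4),\vB(\mp i\beta/4)]$, which already have the form $[\vA(z)^\dagger,\vB(z)]$ with $z=\mp i\beta/4$ because $\vA(z)^\dagger=\vA^\dagger(\bar z)$; the homomorphism trick moves both entries equally and cannot remove their offset, and at a genuinely common time the commutator would vanish, so $\mathsf{corr}_\beta$ is not a same-time commutator.
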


The case of the locality in temperature condition is analogous. By combining the cluster expansions, with the approximations for $\CK$, we conclude that at sufficiently high temperatures the conditions in \cref{lemma:intro-dob-cond} are satisfied and therefore $\CK$ is gapped (recall \cref{fig:argument-summary}). 

\subsubsection{Dirichlet form comparison arguments }
\label{section:overview-L-to-K}

Now equipped with the Poincar\'e inequality for $\CK$, it remains to argue why this implies a Poincar\'e inequality for $\CL$. In the recent work of \cite{BC26} for one-dimensional systems, such a comparison was done by relating their Dirichlet forms:
\begin{equation}
  \forall \vX:\quad   \langle \vX, -\CL^\dagger[\vX]\rangle_{\vrho} \geq \eta_\beta\cdot  \langle \vX, -\CK[\vX]\rangle_{\vrho}\label{eq:overview-D-form-comp}
\end{equation}
up to a temperature dependent constant $\eta_\beta$. Although this is broadly the strategy we employ in this paper as well, their argument made liberal use of the unconditional convergence of complex time-evolution and sharp bounds on the convergence of real-time evolution in 1D; which we currently do not know how to implement in all-to-all systems (\cref{question:ahat}). 

\begin{remark}
    In particular, their comparison argument relied on a shell decomposition for the operator Fourier transform $\hat{\vA}(\omega)$ which decays both in space and in $\omega$, see \cite[Lemma VI.2]{BC26}.
\end{remark}

Arguably the most technical part of our work is a proof of a Dirichlet form comparison akin to \eqref{eq:overview-D-form-comp} at sufficiently high temperatures, see \cref{section:d-form-comparison}. Very roughly speaking, the proof strategy is to understand $\CL$ as a Markov process with updates given by the operator Fourier transform $\hat{\vA}(\omega, t)$, and to sequentially compare Dirichlet forms with the analogous Markov processes with updates $\vA(t)$ and $\vA$ respectively (i.e. $\omega, t$ have been ``peeled off''). \\

\noindent \textit{The key step behind \eqref{eq:overview-D-form-comp}.} To give concrete intuition, we highlight a key step behind how we relate the Dirichlet forms of generators with jump operators $\vA$ (single-site operators) and $\vA(t)$ (time-evolved single-site operators). We begin with Duhamel's formula:
\begin{equation}
        \vA(t) - \vA = i\int_0^t e^{i\vH s} [\vH, \vA]e^{-i\vH s} \rd s.
    \end{equation}
By the reverse triangle inequality, for any operator $\vX$, we can then (very roughly speaking) understand the norms of commutators of $ \vA(t)$ with $\vX$, as a function of $\vA$ and $\vX$, up to an error term which depends on commutators of 2-site operators:
\begin{align}
    \underbrace{\|[\vA(t), \vX]\|_{\vrho}}_{\textsf{``Dirichlet form''  of }\CL_a} \geq \underbrace{\|[\vA, \vX]\|_{\vrho}}_{\textsf{``Dirichlet form''  of }\CK_a} - \int_0^t \underbrace{\bigg\| \big[ [\vH, \vA](s), \vX\big]\bigg\|_{\vrho}}_{\substack{\textsf{``Dirichlet form''  of }\CL\\ \textsf{under 2-site updates}}} \rd s
\end{align}
Indeed, $[\vH, \vA]$ is a linear combination of $2$-local Paulis, and thus the second term on the RHS above can be interpreted as the Dirichlet form of a generator with 2-site updates (as opposed to single-site). 

\begin{remark}
    We use the quotation marks as the Dirichlet form of such generators are technically integrals over the \textit{squared} commutator-norm; here we simply mean to convey the intuition. 
\end{remark}

The key idea, stemming from recent work \cite{chen2025GibbsMarkov, chen2025learning, BC26, BCV25}, is to perform a \textit{path comparison argument} akin to those in the classical literature \cite{Diaconis1993COMPARISONTF}: any Markov chain with pairwise updates should be simulable by a Markov chain with single-site updates (up to a $\beta$-dependent constant loss). Here we implement an analogous strategy by relying on the convergence of complex-time evolution (\cref{def:cte-intro}) at high-temperatures, to relate their Dirichlet forms. 

To fully prove \eqref{eq:overview-D-form-comp} we need to similarly argue how to ``peel off'' $\omega$ in order to relate $\hat{\vA}(\omega, t)$ and $\vA(t)$. This step is technical but ultimately based on the same intuition above. We refer the reader to \cref{section:d-form-comparison} for further discussion.

\subsection*{Acknowledgements}


I thank Jo\~ao Basso, Chi-Fang Chen, Sergio Escobar, Lin Lin, Michael Ragone, Kevin Stubbs for collaborations on related work, Daniel Stilck Fran\c{c}a, Jake Hofgard, Nikhil Srivastava, Ewin Tang, Alexander Zlokapa for discussions on all-to-all systems, and Cambyse Rouz\'e and Ryan Mann for discussion on prior work. 

 After writing a preliminary version of the manuscript, ChatGPT 5.5 (Extended Pro) informed me my initial cluster expansion calculations had already been performed and generalized by Neto\v{c}n\'y and Redig \cite{Neto2004}. I further thank the model for simplifying the combinatorial proofs in \cref{section:cte-conv}, on sufficient conditions for the convergence of said cluster expansion.

\section{Preliminaries}
\label{section:prelim}

\noindent \textbf{Notation.} We consider quantum spin systems of finite size $n$, each of constant local dimension $2^\sq$. Each such spin $i\in [n]:=\{1, 2, \cdots, n\}$ will be referred to as a \textit{site}. The associated Hilbert space is denoted as $\calH = \otimes_{i=1}^{n} (\BC^{2^\sq})$. The space of linear operators on $\calH$ is referred to as $\CB(\calH)$; we use boldface letters to refer to operators. A superoperator $\CN:\CB(\calH)\rightarrow \CB(\calH)$ is a map between linear operators; we use in curly font~$\CL$ with matrix arguments in square brackets to refer to $\CL[\vrho]$. We refer to $[\cdot, \cdot]$ and $\{\cdot, \cdot\}$ as the commutator and anti-commutator respectively.

We denote the set of single-site $\sq$-weight Pauli strings, supported on the $i$th $2^\sq$ dimensional qudit as:
\begin{align}
    \CS_i &:= ( \{\vI_2,\vX,\vY,\vZ\}^{\otimes \sq}\} )\otimes \vI_{2^\sq}^{n-1},
\end{align}
and we will never care how the Pauli strings are actually embedded within a site. 

Sans serif is used to refer to subsets of sites $\sA\subseteq [n]$. For any subset $\sA\subseteq [n]$, we define the set of single-site Pauli strings, and multi-site Pauli strings:
\begin{align}
    \CS^1_\sA &:= \cup_{i\in \sA} \CS_i,\qquad \CS_\sA := ( \{\vI_2,\vX,\vY,\vZ\}^{\otimes \sq\labs{\sA}})\otimes \vI_{2^\sq}^{n-\labs{\sA}}.
\end{align}

We use $\CO(\cdot),\Omega (\cdot)$ to denote asymptotic upper and lower bounds; and in an abuse of notation $ a \lesssim b$ to mean $a \le c b$ for a universal constant $c$.

\subsection{The KMS inner product and detailed-balance}

We dedicate this section to the relevant non-commutative weighted inner product. We assume $\vrho$ is full rank throughout this section.

\begin{definition}
    [KMS inner product]\label{defn:s_inner} Given a full-rank state $\vrho$, we define the KMS or $\vrho$-inner product of two operators $\vX, \vY$ as:
    \begin{equation}
        \langle \vX,\vY\rangle_{\vrho}:=\tr[\vX^\dagger \vrho^{\frac{1}{2}}\vY\vrho^{\frac{1}{2}}]\,.
    \end{equation}
    We denote by $\|\vX\|_{\vrho} = \sqrt{\langle \vX,\vX\rangle_{\vrho}}$ the $\vrho$-weighted $2$-norm.
\end{definition}

Associated to a weighted inner product, we can define a notion of detailed-balance:
\begin{definition}
    [Detailed-balance]\label{eq:detailed-balance} A superoperator $\CN$ is said to be $\vrho\mathsf{-detail-balanced}$ if it is self-adjoint w.r.t. the $\vrho$ inner product:
    \begin{equation}
        \forall \vX, \vY:\quad \langle \vX, \CN[\vY]\rangle_{\vrho}= \langle \CN[\vX], \vY\rangle_{\vrho}
    \end{equation}
\end{definition}

We make extensive use of the following version of H\"older's inequality for the $\vrho$ norm.

\begin{lemma}[H\"older in KMS Norm, e.g., {~\cite[Lemma IX.4]{chen2025GibbsMarkov}}]\label{lem:kms-holder}
For any pair of operators $\vX, \vY$, and full rank state $\vrho,$
\begin{align}
\|\vX\vY\|_{\vrho} &\le \|\vrho^{1/4}\vX\vrho^{-1/4}\| \cdot \|\vY\|_{\vrho}
\end{align}     
\end{lemma}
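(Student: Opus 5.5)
The plan is to write the KMS norm as a Hilbert–Schmidt norm and absorb the Gibbs weights into the operators. By definition,
\[
\|\vX\vY\|_{\vrho}^2=\tr[\vY^\dagger\vX^\dagger\vrho^{1/2}\vX\vY\vrho^{1/2}]=\|\vrho^{1/4}\vX\vY\vrho^{1/4}\|_2^2,
\]
where $\|\cdot\|_2$ is the Schatten $2$-norm. The identity follows from cyclicity of the trace, since $\vrho^{1/4}$ is Hermitian and positive. So the quantity to control is $\|\vrho^{1/4}\vX\vY\vrho^{1/4}\|_2$.

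Next I would insert the identity $\vrho^{-1/4}\vrho^{1/4}$ between $\vX$ and $\vY$, which is legitimate because $\vrho$ is full rank:
\[
\vrho^{1/4}\vX\vY\vrho^{1/4}=\big(\vrho^{1/4}\vX\vrho^{-1/4}\big)\big(\vrho^{1/4}\vY\vrho^{1/4}\big).
\]
The first factor is exactly the conjugated operator appearing in the statement. The second factor has Hilbert–Schmidt norm $\|\vY\|_{\vrho}$, by the same identity as in the first paragraph applied to $\vY$ alone.

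It then remains to use the standard operator-ideal inequality $\|\vM\vN\|_2\le\|\vM\|\,\|\vN\|_2$. This can be checked by writing $\|\vM\vN\|_2^2=\tr[\vN^\dagger\vM^\dagger\vM\vN]$ and using $\vM^\dagger\vM\le\|\vM\|^2\vI$. Taking $\vM=\vrho^{1/4}\vX\vrho^{-1/4}$ and $\vN=\vrho^{1/4}\vY\vrho^{1/4}$ gives the claim.

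There is no real obstacle here. The only care needed is the bookkeeping of where the powers $\pm1/4$ go, so that the symmetric splitting $\vrho^{1/2}=\vrho^{1/4}\vrho^{1/4}$ makes the Hilbert–Schmidt identity hold on both sides.
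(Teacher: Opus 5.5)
Your proof is correct: rewriting $\|\vX\vY\|_{\vrho}$ as the Hilbert--Schmidt norm $\|\vrho^{1/4}\vX\vY\vrho^{1/4}\|_2$, inserting $\vrho^{-1/4}\vrho^{1/4}$ between $\vX$ and $\vY$, and using that the Hilbert--Schmidt norm of a product is at most the operator norm of the left factor times the Hilbert--Schmidt norm of the right factor is the standard argument. The paper does not prove this lemma itself; it cites it from \cite[Lemma IX.4]{chen2025GibbsMarkov}, and your argument is the usual proof of that inequality.
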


Of course, the efficacy of the above is contingent on the convergence of complex-time evolution of $\vX$. We also consider the following induced norm on superoperators:

\begin{definition}
    [KMS-Induced Superoperator Norm]\label{defn:induced_norm} For any superoperator $\CN:\CB(\calH)\rightarrow \CB(\calH)$ and full-rank state $\vrho$,
    \begin{align}
    \norm{\CN}_{\vrho}:= \sup_{\vO}\frac{\norm{\CN[\vO]}_{\vrho}}{\norm{\vO}_{\vrho}}.
\end{align}

\end{definition}

\subsection{The \texorpdfstring{\cite{chen2023efficient}}{[CKG23]} Lindbladian}

We dedicate this section to an introduction on the construction of \cite{chen2023efficient}. We begin with the truly invaluable operator Fourier transform:

\begin{definition}
    The operator FT of an operator $\vA$ at energy width $\sigma>0$, associated to the Hamiltonian $\vH$, is written as: 
    \begin{align}
\hat{\vA}_{\sigma}(\omega)=  \frac{1}{\sqrt{2\pi}}\int_{-\infty}^{\infty} e^{i \vH t} \vA e^{-i \vH t} e^{-i \omega t} f_{\sigma}(t)\rd t.\label{eq:OFT}
    \end{align}
\end{definition}
\noindent In the above, the function $f_\sigma(t)$ above is a Gaussian filter:
    \begin{align}
       f_{\sigma}(t) = e^{-\sigma^2t^2}\sqrt{\sigma\sqrt{2/\pi}}\quad \text{and}\quad  \hat{f}_{\sigma}(\omega)=\frac{e^{- \omega^2/4\sigma^2}}{\sqrt{\sigma\sqrt{2\pi}}}  = \frac{1}{\sqrt{2\pi}}\int_{-\infty}^{\infty}e^{-i\omega t} f_\sigma(t)\mathrm{d}t.\label{eq:fwft}
    \end{align}

\noindent Whenever implicit, we omit the subscripts ${\hat{\vA}_{\sigma}}(\omega)\equiv {\hat{\vA}}(\omega)$, $f(t)=f_\sigma(t)$. We are now in a position to define the family of Lindbladians of \cite{chen2023efficient}.

\begin{definition}
    [The \cite{chen2023efficient} Lindbladian]\label{def:ckg-lind} Fixed $\vH$, $\beta>0$, $0<\sigma\leq \beta^{-1}$, and a single self-adjoint jump $\vA^a = \vA^{a\dagger}$. The Lindbladian~\cite{chen2023efficient} is written as
\begin{align}
		\CL_a[\cdot] = \underset{\text{``coherent''}}{\underbrace{ -i [\vC^a, \cdot]}} + 
		\int_{-\infty}^{\infty} \gamma(\omega) \bigg(\underset{\text{``transition''}}{\underbrace{\hat{\vA}_\sigma^a(\omega)(\cdot)\hat{\vA}_\sigma^{a}(\omega)^\dagger}} - \underset{\text{``decay''}}{\underbrace{\frac{1}{2}\{\hat{\vA}_\sigma^{a}(\omega)^\dagger\hat{\vA}_\sigma^a(\omega),\cdot\}}}\bigg)\rd\omega,\label{eq:exact_DB_L}
\end{align}
\noindent the ``coherent part'' $\vC^{a}$ is a Hermitian operator:
\begin{align}
    &\vC^a = \iint_{-\infty}^{\infty} \gamma(\omega) \cdot c(t) \cdot \hat{\vA}_\sigma^a(\omega,t)^{\dagger}\hat{\vA}_\sigma^a(\omega,t)  \rd t\rd \omega, \quad  \\\text{with}\quad  &c(t) := \frac{1}{\beta\sinh(2\pi t/\beta)} \quad \text{and}\quad  \hat{\vA}_\sigma^a(\omega,t):=e^{i\vH t}\hat{\vA}_\sigma^a(\omega)e^{-i\vH t}.
\end{align}
\end{definition}
\noindent The choice of $\gamma$ ensures detailed-balance; here we  use the shifted-Metropolis weight function:
\begin{align}
    \gamma_\mathsf{M}(\omega) := \exp\bigg(-\beta\max\left(\omega +\frac{\beta \sigma^2}{2},0\right)\bigg).\label{eq:Metropolis}
\end{align} 

\begin{lemma}
    [{\cite[Corollary A.1]{chen2025GibbsMarkov}}]
    When $\norm{\vA^a}=1$, we have system-size independent norm bounds $\norm{\CL_a}_{1-1}\le O(1+\beta)$ under $\gamma_\mathsf{M}$.
\end{lemma}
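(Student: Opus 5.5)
The plan is to split $\CL_a$ into its three pieces (transition, decay, coherent) and bound the $1\!\to\!1$ norm of each. The transition and decay parts follow from positivity and a Plancherel identity. The coherent part $\vC^a$ requires a Fourier-multiplier argument, and that is where the $\beta$-dependence comes from. Throughout I use $\vA^a=\vA^{a\dagger}$ and $\|\vA^a\|=1$.

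\emph{Step 1: Plancherel for the operator Fourier transform.} First show that
\begin{equation*}
\int_{-\infty}^{\infty}\hat{\vA}^a_\sigma(\omega)^\dagger\hat{\vA}^a_\sigma(\omega)\,\rd\omega=\vA^{a\dagger}\vA^a .
\end{equation*}
Expand $\vA^a=\sum_\nu \vA^a_\nu$ in Bohr frequencies of $\vH$, so that $\hat{\vA}^a(\omega)=\sum_\nu \hat f_\sigma(\omega-\nu)\vA^a_\nu$. The cross terms integrate to $\int\hat f_\sigma(\omega-\nu)\hat f_\sigma(\omega-\nu')\,\rd\omega$. Equivalently, integrate in the time domain, where the $\omega$-integral produces $\delta(t-t')$ and the normalization $\int|f_\sigma|^2=1$ gives the identity.

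\emph{Step 2: transition and decay parts.} Since $0\le\gamma_\mathsf{M}\le 1$, the operator $\vB:=\int\gamma_\mathsf{M}(\omega)\hat{\vA}^\dagger\hat{\vA}\,\rd\omega$ satisfies $0\le\vB\le\vA^{a\dagger}\vA^a\le\vI$, so $\|\vB\|\le 1$.
\begin{itemize}
\item The transition map $\Phi[\vX]=\int\gamma_\mathsf{M}\hat{\vA}\vX\hat{\vA}^\dagger\,\rd\omega$ is completely positive. Hence $\|\Phi\|_{1-1}=\|\Phi^\dagger[\vI]\|=\|\vB\|\le1$.
\item The decay term has $\|\tfrac12\{\vB,\cdot\}\|_{1-1}\le\|\vB\|\le1$.
\end{itemize}

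\emph{Step 3: the coherent part (main obstacle).} Since $\|{-i}[\vC^a,\cdot]\|_{1-1}\le2\|\vC^a\|$, it remains to show $\|\vC^a\|=O(1+\beta)$. Here $c(t)$ has a non-integrable $1/t$ singularity, so a naive triangle inequality fails; $\vC^a$ is understood as a principal value, and the cancellation has to be used.

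I would pass to the Bohr-frequency representation $\vC^a=\sum_{\nu_1,\nu_2}g(\nu_1,\nu_2)\,\vA^{a\dagger}_{\nu_1}\vA^a_{\nu_2}$. Here $g$ collects three things: the $\omega$-integral of $\gamma_\mathsf{M}(\omega)\hat f_\sigma(\omega-\nu_1)\hat f_\sigma(\omega-\nu_2)$; the factor $e^{i(\nu_2-\nu_1)t}$ from $\hat{\vA}(\omega,t)$; and the principal-value transform of $c(t)$, which yields a bounded multiplier of the form $\tfrac{i}{2}\tanh(\beta(\nu_1-\nu_2)/4)$.

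The key step is to write $g$ as a two-dimensional Fourier transform of some $\check g\in L^1(\BR^2)$. Then
\begin{equation*}
\vC^a=\iint\check g(t_1,t_2)\,\vA^a(t_1)^\dagger\vA^a(t_2)\,\rd t_1\rd t_2,
\qquad\text{so}\qquad \|\vC^a\|\le\|\check g\|_{L^1}.
\end{equation*}
The $\tanh$ alone is not the Fourier transform of an $L^1$ function. However, the Gaussian factors inherited from $\hat f_\sigma$ smooth it and make $g$ decay in $\nu_1+\nu_2$ and $\nu_1-\nu_2$. The Metropolis choice $\gamma_\mathsf{M}$ is piecewise exponential, so the $\omega$-integral is explicit in terms of Gaussians and error functions.

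I would rescale frequencies by $\beta$ and use $\sigma\le\beta^{-1}$ (taking $\sigma\sim1/\beta$) to make $g$ a fixed smooth profile up to an $O(1)$ piece. Then I would bound $\|\check g\|_{L^1}$ by a Sobolev-type estimate: control $\|g\|_{L^2}$ together with second derivatives. This is where the $O(1+\beta)$ should emerge, and it is the step I expect to require the most careful calculus. Combining the three steps gives $\|\CL_a\|_{1-1}\le 2+2\|\vC^a\|=O(1+\beta)$.
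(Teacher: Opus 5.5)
\textbf{Minor issue in Step 1.} The identity is misstated, although Step 2 survives. Integrating out $\omega$ gives $\int \hat{\vA}(\omega)^\dagger\hat{\vA}(\omega)\,\rd\omega=\int f_\sigma(t)^2\, e^{i\vH t}\vA^\dagger\vA\, e^{-i\vH t}\,\rd t$, not $\vA^\dagger\vA$. In Bohr frequencies the cross terms carry $e^{-(\nu-\nu')^2/8\sigma^2}$, not $\delta_{\nu\nu'}$. This time average still has norm at most $\|\vA\|^2$, so $0\le\vB\le\vI$ holds and your bound of $2$ on transition plus decay stands.

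\textbf{The genuine gap is the key step of Step 3.} Write $\Delta=\nu_1-\nu_2$ and $\bar\nu=(\nu_1+\nu_2)/2$. Your symbol is
\[
g(\nu_1,\nu_2)=\tfrac{i}{2}\tanh(\beta\Delta/4)\,e^{-\Delta^2/8\sigma^2}\,\tilde\gamma(\bar\nu),\qquad \tilde\gamma=\gamma_{\mathsf M}*\mathcal{N}(0,\sigma^2).
\]
The Gaussian factor gives decay only in $\Delta$. In $\bar\nu$ the profile is a smoothed Metropolis function: it tends to $1$ as $\bar\nu\to-\infty$ and to $0$ as $\bar\nu\to+\infty$. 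So $g$ does not vanish at infinity, and by Riemann--Lebesgue it is not the Fourier transform of any $\check g\in L^1(\BR^2)$. For the same reason $g\notin L^2$, so no Sobolev-type estimate applies.

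Allowing a $\delta$-mass to absorb the constant does not rescue this. Since $\int\tilde\gamma'=-1\neq0$, the time kernel conjugate to $\bar\nu$ has a principal-value $1/t$ singularity at $t=0$, which is not a finite measure. Your rescaling remark actually exposes the problem: after $\nu\mapsto\beta\nu$ with $\beta\sigma$ fixed, the profile is $\beta$-free. A multiplier bound that uses no information about $\vH$ therefore can never produce the $\beta$-dependence you expect. Cutting $g$ off at the spectral range instead gives $\log(\beta\|\vH\|)$, which grows with $n$.

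\textbf{The missing ingredient is locality of $\vH$.} Bound the regular part of the kernel as you propose. Then isolate the singular piece
\[
\mathrm{p.v.}\int \frac{k(t)}{t}\,\vO(t)\,\rd t,\qquad \vO(t)=\vA^a(t/2)^\dagger\vA^a(-t/2),
\]
averaged over conjugations $e^{i\vH\tau}(\cdot)e^{-i\vH\tau}$ against the $L^1$ kernel of the $\Delta$-factor. Here $k$ is smooth with Gaussian decay on the time scale $1/\sigma=\beta$.
\begin{itemize}
\item For $|t|\ge\beta$ the integrand is absolutely integrable and contributes $O(1)$.
\item For $|t|<\beta$, symmetrize to $\int_0^\beta \frac{k(t)\vO(t)-k(-t)\vO(-t)}{t}\,\rd t$ and use $\|\vO(t)-\vO(-t)\|\le 2|t|\,\|[\vH,\vA^a]\|$. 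This contributes $O(1)+O(\beta\|[\vH,\vA^a]\|)$.
\end{itemize}
Together these give $\|\vC^a\|\lesssim 1+\beta\|[\vH,\vA^a]\|$. The commutator satisfies $\|[\vH,\vA^a]\|\le 2\sum_{e\ni i}\|\vh_e\|$, which is a local, $n$-independent quantity (at most $2\sd\sJ$ here). This is consistent with the system-size independent $O(1+\beta)$ in the statement.
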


\subsection{The spectral gap}

Our goal in this work is to prove a spectral gap for the \cite{chen2023efficient} Lindbladian family, which quantifies the rate of decay of the variance of observables:

\begin{definition}[Spectral gap]
\label{def:spectral_gap}
The spectral gap of a primitive KMS-$\vrho$ detailed-balanced Lindbladian
$\CL$ is
\begin{align}
    \lambda_{\mathrm{gap}}(\CL)
    : =
    \inf_{\vX}
    \frac{\langle\vX,-\CL^\dagger[\vX]\rangle_{\vrho}}
         {\norm{\vX-\vI\tr[\vrho\vX]}_{\vrho}^2}.
    \label{eq:gap_def}
\end{align}
\end{definition}
The numerator above is known as the Dirichlet form; see \cref{section:d-form-comparison} for a discussion. The denominator is simply the $\vrho$ weighted variance of $\vX$. We note that the spectral gap already provides a means to control the rate of convergence to equilibrium, albeit at the cost of a dimension-dependent factor:

\begin{lemma}[Mixing from the spectral gap {\cite{Kastoryano2012QuantumLS}}]
\label{lem:mixing_from_gap}
Let $\CL$ be primitive and KMS-$\vrho$ detailed-balanced. Then the mixing time of $\CL$ satisfies
\begin{align}
    t_{\mathrm{mix}}(\CL)
   \leq
    \frac{\log\!\left(2\|\vrho^{-1/2}\|\right)}
         {\lambda_{\mathrm{gap}}(\CL)}.
    \label{eq:mixing_time_def}
\end{align}

\end{lemma}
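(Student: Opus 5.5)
The plan is the standard $L^2$ argument of \cite{Kastoryano2012QuantumLS}: work in the KMS inner product, where $\CL$ (equivalently its adjoint $\CL^\dagger$) is self-adjoint, and convert the variance decay given by the gap into trace-distance decay. First I reduce to the Heisenberg picture. Write the evolved state as $\vsigma_t = e^{t\CL}[\vsigma]$ and define the relative density $\vX_t := \vrho^{-1/2}\vsigma_t\vrho^{-1/2}$. KMS detailed balance of $\CL$ gives $\vrho^{-1/2}\CL[\vrho^{1/2}\vX\vrho^{1/2}]\vrho^{-1/2} = \CL^\dagger[\vX]$. Hence $\vX_t = e^{t\CL^\dagger}[\vX_0]$, and $\tr[\vrho\vX_t]=\tr[\vsigma_t]=1$ is conserved.

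Next comes variance decay. Set $\vY_t := \vX_t - \vI$. Since $\CL^\dagger[\vI]=0$, the operator $\vY_t$ also evolves under $e^{t\CL^\dagger}$ and satisfies $\tr[\vrho\vY_t]=0$. Differentiating gives
\begin{equation}
\frac{\rd}{\rd t}\|\vY_t\|_{\vrho}^2 = 2\langle \vY_t, \CL^\dagger[\vY_t]\rangle_{\vrho} \le -2\lambda_{\mathrm{gap}}\|\vY_t\|_{\vrho}^2,
\end{equation}
where the inequality is exactly \cref{def:spectral_gap}, because the centering term vanishes. Grönwall then yields $\|\vY_t\|_{\vrho}\le e^{-\lambda_{\mathrm{gap}} t}\|\vY_0\|_{\vrho}$.

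Then I convert to trace distance. We have $\vsigma_t-\vrho = \vrho^{1/2}\vY_t\vrho^{1/2}$. By Cauchy–Schwarz (Hölder with $\vrho^{1/4}$ factors),
\begin{equation}
\|\vrho^{1/2}\vY\vrho^{1/2}\|_1 \le \|\vrho^{1/4}\|_4^2\,\|\vrho^{1/4}\vY\vrho^{1/4}\|_2 = \|\vY\|_{\vrho}.
\end{equation}
The initial value is bounded using $\|\vY_0\|_{\vrho}^2 = \tr[\vsigma\vrho^{-1/2}\vsigma\vrho^{-1/2}]-1 \le \|\vrho^{-1/2}\|^2\tr[\vsigma^2]\le \|\vrho^{-1/2}\|^2$. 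Therefore
\begin{equation}
\tfrac12\|\vsigma_t-\vrho\|_1\le \tfrac12 e^{-\lambda_{\mathrm{gap}} t}\|\vrho^{-1/2}\|.
\end{equation}

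Finally, with the convention that the mixing time is the time to reach trace distance $\le 1/2$ (or a fixed constant), this bound falls below the threshold once $t\ge \lambda_{\mathrm{gap}}^{-1}\log(2\|\vrho^{-1/2}\|)$, which is \eqref{eq:mixing_time_def}.

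The main obstacle is bookkeeping rather than any hard estimate. One must correctly pass between $\CL$ and $\CL^\dagger$ via KMS detailed balance. One must also pin down the normalisation convention for $t_{\mathrm{mix}}$ so that the constant inside the log comes out exactly as stated. If $\epsilon$-mixing is needed instead, the same bound gives $\lambda_{\mathrm{gap}}^{-1}\log(\|\vrho^{-1/2}\|/\epsilon)$.
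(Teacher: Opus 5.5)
Your argument is correct and is the standard $\chi^2$/Poincar\'e argument behind the cited Kastoryano--Temme result; the paper gives no proof of its own, only the citation. The steps are: exponential decay of $\|\vY_t\|_{\vrho}$ from the gap, the H\"older bound $\|\vrho^{1/2}\vY\vrho^{1/2}\|_1\le\|\vY\|_{\vrho}$, and the initial bound $\|\vY_0\|_{\vrho}\le\|\vrho^{-1/2}\|$. Together these give $\|\vsigma_t-\vrho\|_1\le e^{-\lambda_{\mathrm{gap}}t}\|\vrho^{-1/2}\|$, which drops to the threshold $\|\vsigma_t-\vrho\|_1\le 1/2$ exactly at $t=\log(2\|\vrho^{-1/2}\|)/\lambda_{\mathrm{gap}}$.

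One small correction: KMS-self-adjointness of $\CL$ and of $\CL^\dagger$ are not equivalent in general, so your parenthetical ``equivalently its adjoint'' is wrong as stated. It does no harm, because the identity you actually use, $\vrho^{1/2}\CL^\dagger[\vX]\vrho^{1/2}=\CL[\vrho^{1/2}\vX\vrho^{1/2}]$, is precisely the intended hypothesis (self-adjointness of $\CL^\dagger$, matching the paper's gap definition).
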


\section{Quasi-locality of complex-time evolution}
\label{section:quasi-locality}

In this section we collect several locality properties, broadly regarding the convergence of complex time evolution, for a generic dense $\sk$-local Hamiltonian (\cref{def:all-to-all}):
\begin{equation}
    \vH=\sum_{e\in \Gamma} \vh_e, \qquad \mathsf{supp}(\vh_e)\subseteq e
\end{equation}
The set $\Gamma\subseteq {[n] \choose \sk}$ is the set of (hyper-)edges. We organize this section as follows. 

\begin{enumerate}
    \item In \cref{section:cluster-expansion}, we give a simple cluster expansion (\cref{lemma:cluster}) for the complex-time evolution $\vA(z):=e^{iz\vH}\vA e^{-iz\vH}$ of local Pauli operators; based on \cite{Neto2004} for the partition function.
    \item In \cref{section:cte-conv}, we discuss conditions for the convergence of said cluster expansion and discuss corollaries. 
\end{enumerate}

\subsection{Cluster expansions for the complex-time evolution}
\label{section:cluster-expansion}

To state the cluster expansion, we require some basic graph-theoretic definitions. First, the local Hamiltonian $\vH$ naturally induces a graph on the Hamiltonian terms. 

\begin{definition}
    [Interaction Graph]
    We refer the \textit{interaction graph} $G_\sI$ of $\vH$ the graph whose vertex-set is $\Gamma$, and two terms $e_1, e_2\in \Gamma$ are connected in $G_\sI$ if $\mathsf{supp}(e_1)\cap \mathsf{supp}(e_2)\neq \emptyset$.
\end{definition}

The cluster expansion will group terms by connectivity. We refer to a connected cluster in the interaction graph $G_\sI$ as a set of edges / Hamiltonian terms (without multiplicities) which is strongly connected in  $G_\sI$ (meaning, no isolated connected components).

\begin{definition}
    [Connected Clusters] A subset of the edges $\sF\subseteq \Gamma$ is said to be a \textit{connected cluster} $\sF\in \mathsf{CC}$ if $\sF$ is strongly-connected in $G_\sI$. 
\end{definition}

\begin{remark}
    In an abuse of notation, we say a site $i\in [n]$ is in $\sF\subseteq \Gamma$ if $\exists e\in \sF$ s.t. $i\in \mathsf{supp}(\vh_e)$.
\end{remark}

We are now in a position to state the cluster expansion for the complex-time evolution. The below formulation is largely inspired by shell or annulus decompositions from the literature of Lieb-Robinson bounds in lattice Hamiltonians.

\begin{lemma}
    [Cluster expansions for the complex-time evolution]\label{lemma:cluster}
    Fix $z\in \BC$, and let $\vA\in \CS_i$ be a single-site Pauli operator supported on site $i\in [n]$. The complex time evolution $\vA(z):=e^{iz\vH}\vA e^{-iz\vH}$ of $\vA$ admits the following cluster expansion:
    \begin{align}
        \vA(z) :=\vA+\sum_{\substack{\sF\in \mathsf{CC}\\ i\in \sF}}\vA_{\sF}(z), \quad \text{where}\quad \|\vA_{\sF}(z)\|\leq \prod_{e\in \sF} \big(e^{2|z|\|\vh_e\|}-1\big)
    \end{align}
    where $\mathsf{supp}(\vA_\sF) = \mathsf{supp}(\sF)\subseteq [n]$.
    
\end{lemma}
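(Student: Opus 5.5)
Our plan is to follow the Neto\v{c}n\'y--Redig approach of expanding each factor $e^{\pm iz\vh_e}$ around the identity. We would not expand $e^{iz\vH}$ as a whole. The difficulty is that the terms $\vh_e$ do not commute, so a direct factorization is unavailable. Instead, we define for each subset $\sF\subseteq\Gamma$ the \emph{restricted evolution}
\[
\Phi_\sF(\vA):=e^{iz\vH_\sF}\vA e^{-iz\vH_\sF},\qquad \vH_\sF:=\sum_{e\in\sF}\vh_e .
\]
We then perform M\"obius inversion on the lattice of subsets:
\[
\vA_\sF(z):=\sum_{\sF'\subseteq\sF}(-1)^{|\sF\setminus\sF'|}\,\Phi_{\sF'}(\vA),
\qquad
\Phi_\Gamma(\vA)=\sum_{\sF\subseteq\Gamma}\vA_\sF(z).
\]
The term $\sF=\emptyset$ gives exactly $\vA$, and $\mathsf{supp}(\vA_\sF)\subseteq \mathsf{supp}(\sF)\cup\{i\}$ is immediate. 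It remains to show that $\vA_\sF=0$ unless $\sF$ is a connected cluster containing $i$, and to prove the norm bound.

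\textbf{Vanishing of disconnected terms.} Suppose $\sF$ splits as $\sF=\sF_1\sqcup\sF_2$, where $\sF_2$ is nonempty and shares no site with $\sF_1\cup\{i\}$. For any $\sF'=\sF_1'\sqcup\sF_2'$ we have $[\vH_{\sF_1'},\vH_{\sF_2'}]=0$, and $\vH_{\sF_2'}$ commutes with $\Phi_{\sF_1'}(\vA)$ since their supports are disjoint. Hence $\Phi_{\sF'}(\vA)=\Phi_{\sF_1'}(\vA)$, independent of $\sF_2'$. The alternating sum over $\sF_2'\subseteq\sF_2$ then vanishes. Every $\sF$ that is not a connected cluster touching $i$ admits such a split, taking $\sF_2$ to be the components not containing $i$. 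For $\sF$ not containing $i$ at all, $\Phi_{\sF'}(\vA)=\vA$ for every $\sF'$, so the sum vanishes unless $\sF=\emptyset$.

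\textbf{Norm bound.} This is the main obstacle, because the naive alternating sum has $2^{|\sF|}$ terms of size up to $e^{2|z|\|\vH_\sF\|}$. Our approach is to write the restricted evolution as a product. Order the edges of $\sF$ and use a Trotter/Lie product limit,
\[
e^{iz\vH_{\sF'}}=\lim_{N\to\infty}\Big(\prod_{e\in\sF'}e^{iz\vh_e/N}\Big)^N .
\]
Write each superoperator $\mathrm{Ad}_{e}^{(N)}:=e^{iz\vh_e/N}(\cdot)e^{-iz\vh_e/N}=\mathrm{id}+\mathcal{D}_e^{(N)}$. Removing $e$ from $\sF'$ amounts to replacing every occurrence of $\mathrm{Ad}_e^{(N)}$ by $\mathrm{id}$. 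Therefore the M\"obius sum selects exactly those terms in the fully expanded product in which every $e\in\sF$ contributes at least one factor $\mathcal{D}_e^{(N)}$. Using $\|\mathrm{Ad}_e^{(N)}\|_{\infty\to\infty}\le e^{2|z|\|\vh_e\|/N}$ and $\|\mathcal{D}_e^{(N)}\|\le e^{2|z|\|\vh_e\|/N}-1$, the triangle inequality bounds this sum, edge by edge, by
\[
\prod_{e\in\sF}\Big[\big(1+\|\mathcal{D}_e^{(N)}\|\big)^N-1\Big]\le\prod_{e\in\sF}\big(e^{2|z|\|\vh_e\|}-1\big),
\]
where we use $\|\vA\|=1$. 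Taking $N\to\infty$ then yields the claimed bound.

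The delicate point is justifying this factorization. In the expanded word, a term is indexed by a choice of ``identity'' or ``$\mathcal{D}$'' at each of the $N|\sF|$ slots. The norm of a term is bounded by the product of the slot norms. Summing over choices in which each edge has at least one $\mathcal{D}$ slot factorizes exactly per edge, which gives the factor $(1+x)^N-1$ with $x=e^{2|z|\|\vh_e\|/N}-1$, and this equals $e^{2|z|\|\vh_e\|}-1$. We would check the inclusion--exclusion identity carefully at finite $N$ and only then pass to the limit.
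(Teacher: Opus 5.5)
Your proof is correct, and it produces the same operators as the paper. Grouping the Taylor series of $\Phi_{\sF'}(\vA)=e^{iz\,\mathsf{adj}_{\vH_{\sF'}}}[\vA]$ by the set of distinct edges used gives $\Phi_{\sF'}(\vA)=\sum_{\mathsf{G}\subseteq\sF'}\vA_{\mathsf{G}}(z)$ with the paper's $\vA_{\mathsf{G}}$, so your M\"obius inversion recovers them exactly.

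The route is genuinely different, however. The paper works directly with the series $\sum_m\frac{(iz)^m}{m!}\sum_{e_1,\dots,e_m}\prod_k\mathsf{adj}_{e_k}[\vA]$.
\begin{itemize}
\item Its vanishing argument is termwise: a nested commutator is zero unless each new edge meets the support already generated.
\item Its norm bound uses $\|\mathsf{adj}_e\|\le 2\|\vh_e\|$ and the multinomial count of sequences with prescribed multiplicities $\{m_e\}$. That count cancels the $1/m!$ and factorizes the sum into $\prod_{e\in\sF}\sum_{m_e\ge1}(2|z|\|\vh_e\|)^{m_e}/m_e!$.
\end{itemize}

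Your approach differs in three places.
\begin{itemize}
\item You define $\vA_{\sF}$ intrinsically, as an alternating sum of restricted evolutions.
\item You get the vanishing from a structural factorization: the alternating sum over $\sF_2'$ collapses.
\item You get the bound from a Trotterized word in which each edge owns $N$ fixed slots. The ``at least one $\mathcal{D}$ per edge'' sum then factorizes with no counting at all. With $x_e=e^{2|z|\|\vh_e\|/N}-1$, the identity $(1+x_e)^N-1=e^{2|z|\|\vh_e\|}-1$ makes the bound hold exactly at every finite $N$. Passing to the limit in the finite alternating sum is therefore harmless.
\end{itemize}

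As for what each buys: the paper's argument is shorter and gives an explicit series for $\vA_{\sF}(z)$. Yours replaces the combinatorial bookkeeping with the Lie--Trotter formula, and makes the inclusion--exclusion (cumulant) structure behind connectivity transparent.

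One clause you should add: your split $\sF=\sF_1\sqcup\sF_2$ works because all edges containing $i$ pairwise intersect. Hence at most one connected component of $\sF$ touches $i$.
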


Namely, $\vA(z)$ is decomposed into a linear combination of operators supported only on the connected clusters; in such a manner that the strength / norm of the contribution of each cluster $\sF$ roughly decays with the product of interaction strengths within $\sF$. 

\subsubsection{Convergence of complex-time evolution and correlations}

Before we present the proof of \cref{lemma:cluster}, we present two simple corollaries. These corollaries express bounds on the convergence of complex-time evolution, as a function of properties of the underlying induced graph $G_\sI$ and its weights. We later leverage these definitions directly in our proof of the Dobrushin condition. These definitions are restatements of those presented in our technical overview in \cref{sec:overview}.

\begin{definition}\label{def:cte}
    We refer to the \textit{complex-time evolution constant} $\mathsf{conv}_\beta$ as the maximum deviation of any single-site Pauli operator from itself  under complex-time evolution:
    \begin{equation}
        \mathsf{conv}_\beta:=\max_{a\in \CS_{[n]}^1}  \sup_{|z|\leq \beta}\|\vA^a(z)-\vA\| 
    \end{equation}
\end{definition}

A bound on this constant is immediately offered by applying the triangle inequality to \cref{lemma:cluster}:

\begin{corollary}
    [Convergence of complex-time evolution]\label{cor:cte-conv} In the context of \cref{lemma:cluster}:
    \begin{align}
        \mathsf{conv}_\beta \leq \max_{i\in [n]} \sum_{\substack{\sF\in\mathsf{CC}\\ i\in\sF}}\prod_{e\in \sF} \big(e^{2|z|\|\vh_e\|}-1\big).
    \end{align}
\end{corollary}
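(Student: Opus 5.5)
The plan is to obtain the bound directly from \cref{lemma:cluster} by the triangle inequality, as the sentence before the statement suggests. Fix a single-site Pauli $\vA^a\in \CS_i$ for some site $i\in[n]$, and fix $z\in\BC$ with $|z|\leq\beta$. By \cref{lemma:cluster},
\begin{equation*}
\vA^a(z)-\vA^a=\sum_{\substack{\sF\in\mathsf{CC}\\ i\in\sF}}\vA^a_{\sF}(z).
\end{equation*}
Taking operator norms and applying the triangle inequality then gives
\begin{equation*}
\|\vA^a(z)-\vA^a\|\leq \sum_{\substack{\sF\in\mathsf{CC}\\ i\in\sF}}\|\vA^a_{\sF}(z)\|\leq \sum_{\substack{\sF\in\mathsf{CC}\\ i\in\sF}}\prod_{e\in\sF}\big(e^{2|z|\|\vh_e\|}-1\big).
\end{equation*}
The sum is finite because $\Gamma$ is finite, so no convergence issue arises at this stage.

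The second step handles the supremum over $z$. Each factor $e^{2|z|\|\vh_e\|}-1$ is nonnegative and nondecreasing in $|z|$. Hence every summand is nondecreasing in $|z|$, and the right-hand side is maximized at $|z|=\beta$. The right-hand side depends on $a$ only through the site $i$. Taking the supremum over $|z|\leq\beta$ and then the maximum over $a\in\CS^1_{[n]}$, equivalently over $i\in[n]$, yields the stated bound.

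I read the $z$ appearing on the right-hand side of the statement as $|z|=\beta$, which is the worst case. With that reading the statement is a restatement of the displayed chain of inequalities. There is no genuine obstacle here: all the content sits in \cref{lemma:cluster}, and deciding when the resulting series is small is deferred to \cref{section:cte-conv}.
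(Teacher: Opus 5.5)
Your proposal is correct and is essentially the paper's own argument: the triangle inequality applied to the cluster expansion of \cref{lemma:cluster}, combined with its bound on each $\|\vA_{\sF}(z)\|$. Your added step of taking the supremum over $|z|\leq\beta$ by monotonicity of $e^{2|z|\|\vh_e\|}-1$, which amounts to reading $|z|$ on the right-hand side as $\beta$, is the convention the paper uses implicitly.
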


We will shortly discuss conditions under which this sum converges. Next, we introduce a measure of pairwise correlations under complex-time evolution.

\begin{definition}\label{def:pairwise-corr}
    For any pair of sites $i\neq  j\in [n]$, we refer to the \textit{pairwise correlations constant} $\mathsf{corr}_\beta(i, j)$ as the maximum commutator-norm of any pair of single-site Pauli operators under complex-time evolution:
    \begin{equation}
        \mathsf{corr}_\beta(i, j):=\max_{\substack{a\in \CS_{i} \\ b\in \CS_j}}  \sup_{|z|\leq \beta}\|[\vA^a(z)^\dagger, \vB^b(z)]\|, \quad \text{and}\quad \mathsf{corr}_\beta:=\max_{i\in [n]} \sum_{j\neq i} \mathsf{corr}_\beta(i, j)
    \end{equation}
    Observe the above is $0$ when $z=0$ and depends only on $\mathsf{Im}(z)$.
\end{definition}

\begin{corollary}
    [Convergence of total pairwise correlations]\label{cor:total-corr} In the context of \cref{lemma:cluster}, 
    \begin{align}
      \sum_{j\neq i}  \mathsf{corr}_\beta(i, j) \leq 2 \sum_{\substack{\sF\in\mathsf{CC}\\ i\in\sF}} |\mathsf{supp}(\sF)|\cdot \prod_{e\in \sF} \big(e^{4|\beta|\|\vh_e\|}-1\big)
    \end{align}
    with $|\mathsf{supp}(\sF)|$ the number of distinct sites incident on $\sF$.
\end{corollary}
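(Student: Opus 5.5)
We bound the commutator using the cluster expansion of \cref{lemma:cluster}. Fix $i\neq j$, Paulis $\vA\in\CS_i$, $\vB\in\CS_j$, and $|z|\le\beta$. Since $\vA^\dagger=\vA$, we have $\vA(z)^\dagger=e^{i\bar z\vH}\vA e^{-i\bar z\vH}=\vA(\bar z)$, and $|\bar z|=|z|$. Expand both operators:
\begin{equation*}
\vA(\bar z)=\vA+\sum_{\sF\ni i}\vA_\sF(\bar z),\qquad \vB(z)=\vB+\sum_{\sF'\ni j}\vB_{\sF'}(z).
\end{equation*}
The commutator $[\vA,\vB]$ vanishes because the supports are disjoint. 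A cross term $[\vA_\sF,\vB_{\sF'}]$, $[\vA_\sF,\vB]$ or $[\vA,\vB_{\sF'}]$ vanishes unless the two supports overlap. Using $\|[\vX,\vY]\|\le 2\|\vX\|\|\vY\|$, $\|\vA\|=\|\vB\|=1$, and the norm bounds of \cref{lemma:cluster}, we obtain
\begin{equation*}
\|[\vA(z)^\dagger,\vB(z)]\|\le 2\sum_{\substack{\sF\ni i,\ \sF'\ni j\\ \sF\cup\{i\},\ \sF'\cup\{j\} \text{ overlap}}} w(\sF)\,w(\sF'),
\end{equation*}
where $w(\emptyset)=1$ and $w(\sF)=\prod_{e\in\sF}(e^{2|z|\|\vh_e\|}-1)$. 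The pair $(\emptyset,\emptyset)$ contributes nothing.

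Next we reorganize this double sum into a single sum over clusters containing $i$. The overlap condition forces $\sF\cup\sF'$ to be a connected set of edges joining $i$ to $j$, so it is a connected cluster $\sG\in\mathsf{CC}$ with $i\in\sG$ and $j\in\mathsf{supp}(\sG)$. Edges lying in both $\sF$ and $\sF'$ would appear squared, so we use
\begin{equation*}
(e^{2x}-1)^2\le e^{4x}-1,\qquad (e^{2x}-1)\le e^{4x}-1,
\end{equation*}
the first because $e^{2x}-1\le e^{2x}+1$. With these, each pair $(\sF,\sF')$ giving a fixed union $\sG$ contributes at most $\prod_{e\in\sG}(e^{4|z|\|\vh_e\|}-1)$.

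The main obstacle is the multiplicity: the number of pairs $(\sF,\sF')$ with a given union $\sG$ can be as large as $3^{|\sG|}$. My plan is to absorb this by a per-edge reweighting rather than a global count. For each edge $e\in\sG$ there are three options: $e$ lies in $\sF$ only, in $\sF'$ only, or in both. Summing over these options gives the per-edge weight
\begin{equation*}
2(e^{2x}-1)+(e^{2x}-1)^2=e^{4x}-1 \quad\text{exactly.}
\end{equation*}
So if we drop the connectivity constraints on $\sF$ and $\sF'$ individually and sum over all covers $\sF\cup\sF'=\sG$, the total is exactly $\prod_{e\in\sG}(e^{4x_e}-1)$. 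Hence the multiplicity is absorbed.

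It remains to sum over $j$. Each admissible $\sG$ contains $j$ in its support, and a fixed $\sG$ is counted once for each $j\in\mathsf{supp}(\sG)\setminus\{i\}$. Swapping the order of summation produces the factor $|\mathsf{supp}(\sG)|$. The constant $2$ comes from the commutator bound. This gives
\begin{equation*}
\sum_{j\neq i}\mathsf{corr}_\beta(i,j)\le 2\sum_{\substack{\sG\in\mathsf{CC}\\ i\in\sG}}|\mathsf{supp}(\sG)|\prod_{e\in\sG}\big(e^{4|\beta|\|\vh_e\|}-1\big),
\end{equation*}
which is the claim. The bound holds uniformly in $|z|\le\beta$ and in the choice of Paulis $\vA$, $\vB$, so it passes to the supremum and maximum in \cref{def:pairwise-corr}.
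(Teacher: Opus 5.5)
Your proof is correct and follows essentially the same route as the paper: you expand both evolved Paulis via \cref{lemma:cluster}, group overlapping pairs of clusters by their union, absorb the multiplicity through the exact per-edge identity $2w_e+w_e^2=e^{4|z|\|\vh_e\|}-1$, and then swap the sum over $j$ to produce the factor $|\mathsf{supp}(\sF)|$. Your bookkeeping with $w(\emptyset)=1$ is slightly more careful than the paper's first display, because it explicitly accounts for the zeroth-order cross terms $[\vA,\vB_{\sF'}]$ and $[\vA_{\sF},\vB]$; the paper's display omits these, though its final bound is unaffected because the per-edge sum over covers already includes $(\sF,\emptyset)$ and $(\emptyset,\sF)$.
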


\begin{proof}
    We restrict our attention to $z = \mathsf{Im}(z)$ and a pair of Pauli operators $\vA, \vB$ on sites $i,j$ respectively. Let us shorthand $w_e:=e^{2|z|\|\vh_e\|}-1$ for simplicity. The cluster expansion in \cref{lemma:cluster} then guarantees:
    \begin{align}
        \|[\vA(z)^\dagger, \vB(z)]\| &\leq 2\sum_{\substack{\sT\in \mathsf{CC}\\ j\in \sT}}\sum_{\substack{\sF\in \mathsf{CC}\\ i\in \sF}} \|\vA_{\sF}(z)^\dagger\| \cdot \|\vB_{\sT}(z)\| \cdot \mathbb{I}[\sF\cap\sT\neq \emptyset]\\
        &\leq 2\sum_{\substack{\sF\in \mathsf{CC}\\ i, j\in \sF}} \prod_{e\in \sF} (2w_e+w_e^2) = 2\sum_{\substack{\sF\in \mathsf{CC}\\ i, j\in \sF}} \prod_{e\in \sF} \big(e^{4|z|\|\vh_e\|}-1\big)
    \end{align}
    where we note that the clusters $\sF, \sT$ from $\sA, \sB$ must be connected to define a non-trivial contribution to the commutator. We then group the sum by their union $\sF\cup \sT$ and observe that each edge appears in either $\sF, \sT$ or in both. Finally, we sum over all $j\in [n]\setminus \{i\}:$
    \begin{align}
        \sum_{j\neq i}  \mathsf{corr}_\beta(i, j) &\leq 2\sum_j \sum_{\substack{\sF\in \mathsf{CC}\\ i, j\in \sF}} \prod_{e\in \sF} \big(e^{4|\beta|\|\vh_e\|}-1\big)\\
        &\leq 2\sum_{\substack{\sF\in \mathsf{CC}\\ i\in \sF}} |\mathsf{supp}(\sF)|\cdot \prod_{e\in \sF} \big(e^{4|\beta|\|\vh_e\|}-1\big)
    \end{align}
    where in the last line we observe that each $j\in |\mathsf{supp}(\sF)|$ contributes the same factor to the sum.
\end{proof}

\subsubsection{Proof of \cref{lemma:cluster}}

To conclude this subsection, we present a proof of the cluster expansion. The proof is based on the combinatorial counting arguments in \cite{Neto2004} and simplified treatment from \cite[Lemma 2]{Mann_2021}. We also note a passing resemblance to the presentation in \cite{chen2021operator} for real-time evolution based on the  Schwinger-Karplus identity which inspired this inquiry.

\begin{proof}

    [Proof of \cref{lemma:cluster}]
    We denote $\mathsf{adj}_\vH[\vX] = [\vH, \vX]$ and $\mathsf{adj}_e[\vX]:=[\vh_e, \vX]$. The complex-time evolution admits the series expansion:
    \begin{align}
        \vA(z) =  \sum_{m}\frac{(iz)^m}{m!}\cdot  \mathsf{adj}^m_\vH[\vA]  = \sum_{m}\frac{(iz)^m}{m!}\sum_{e_1, \cdots, e_m} \bigg(\prod_i^m \mathsf{adj}_{e_i}\bigg)[\vA],
    \end{align}
    in terms of ordered multi-sets of edges $\mathbf{e}:=(e_1, \cdots, e_m)$, where each $e_i\in \Gamma$. 
    We group these sequences by the subset of unique edges they touch $\sU(\mathbf{e}) := \{e_1, \cdots, e_m\}\subseteq \Gamma$. 
    
    Fixed such a subset $\sF$:
    \begin{align}
        \vA_\sF(z) := \sum_m \frac{(iz)^m}{m!}\sum_{\sU(\mathbf{e})=\sF} \prod_i^m \mathsf{adj}_{e_i}[\vA]
    \end{align}
    we note that if $\sF\notin \mathsf{CC}$ is not connected, then $\vA_\sF(z)=0$.  Otherwise, following \cite[Lemma 6.10, Lemma 6.21]{Neto2004}, we can control the norm of $ \vA_\sF(z)$ by counting the number of such ordered sequences $\mathbf{e}$ satisfying $\sU(\mathbf{e})=\sF$, by enumerating over the multiplicity $m_e\geq 1$ of each edge $e\in \sF$:
    \begin{align}
        \|\vA_\sF(z)\|&\leq \sum_{\sU(\mathbf{e})=\sF} \frac{|2z|^{|\mathbf{e}|}}{|\mathbf{e}|!} \prod_{i}^{|\mathbf{e}|} \|\vh_{e_i}\| \\
        &\leq \sum_{\{m_e\geq 1: e\in \sF\}} \frac{|2z|^{\sum_{e\in \sF} m_e}}{(\sum_{e\in \sF} m_e)!} \bigg(\prod_{e\in \sF} \|\vh_e\|^{m_e}\bigg)\cdot \bigg(\begin{aligned} & \quad \text{\# sequences of length $\sum_em_e$} \\ &\quad\text{with multiplicities $\{m_e\}_{e\in \sF}$}\end{aligned}\quad \bigg)\\
        &\leq \sum_{\{m_e\geq 1: e\in \sF\}} \frac{|2z|^{\sum_{e\in \sF} m_e}}{(\sum_{e\in \sF} m_e)!} \bigg(\prod_{e\in \sF} \|\vh_e\|^{m_e}\bigg)\cdot \frac{(\sum_{e\in \sF} m_e)!}{\prod_{e\in \sF}m_e!} \\
        &\leq \prod_{e\in \sF} \sum_{m_e\geq 1} \frac{|2z\|\vh_e\||^{m_e}}{m_e!} \\
        &\leq \prod_{e\in \sF} \big(e^{2|z|\|\vh_e\|}-1\big).
    \end{align}
    In the first line, we simply applied the norm bound $\|\mathsf{adj}_e\|_{\infty\rightarrow \infty}\leq 2\|\vh_e\|$. In the second, we used \cref{fact:counting-ordered} to count the number of ordered sequences of total length $\sum_{e\in \sF} m_e$ with multiplicities $\{m_e\}_{e\in\sF}$. At this point the summation factorizes over the various edges.
    
\end{proof}

We make use of the following trivial combinatorial counting fact. 

\begin{fact}[Multinomial coefficients]\label{fact:counting-ordered}
    The number of ways to partition a set of $n$ distinct elements into $k$ subsets where the number of elements per subset is $n_1, n_2\cdots n_k$ is
    \begin{equation}
        \frac{n!}{\prod_i^k n_i!}.
    \end{equation}
\end{fact}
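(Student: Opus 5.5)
We prove \cref{fact:counting-ordered} by a direct bijective count, first for arrangements and then reading off the count of partitions. There are $n!$ orderings of the $n$ distinct elements. We map each ordering to a partition into labelled subsets of sizes $n_1,\dots,n_k$ by taking the first $n_1$ elements as the first subset, the next $n_2$ as the second, and so on. This map is surjective.

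Next we compute the size of each fiber. Two orderings give the same partition exactly when they differ by a permutation inside each of the $k$ consecutive blocks. So every fiber has size $\prod_i n_i!$, and dividing gives $n!/\prod_i n_i!$.

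Equivalently, one can choose the subsets one after another. This gives $\binom{n}{n_1}\binom{n-n_1}{n_2}\cdots\binom{n_k}{n_k}$, and the product telescopes to the same expression.

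For the use in the proof of \cref{lemma:cluster}, we need the dual reading. We are counting ordered sequences of length $N=\sum_{e\in\sF} m_e$ in which each edge $e$ appears exactly $m_e$ times. Such a sequence is determined by the partition of the position set $\{1,\dots,N\}$ into the subsets of positions labelled by each $e$, with sizes $m_e$. The fact therefore applies verbatim, with $n=N$ and $n_e=m_e$.

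No step is a genuine obstacle. The only point to check is that the subsets are labelled, which they are because each corresponds to a specific edge, so there is no further division by symmetries among equal-sized subsets.
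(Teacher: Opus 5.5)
Your proof is correct. The fiber count of the $n!$ orderings over labelled block-partitions, or equivalently the telescoping product of binomials, gives $n!/\prod_i n_i!$; you use the implicit hypotheses $\sum_i n_i = n$ and labelled subsets correctly. The paper states this as a standard fact without proof, and your dual reading is exactly how the paper invokes it in the proof of \cref{lemma:cluster}: an ordered sequence of length $\sum_{e\in\sF} m_e$ with multiplicities $\{m_e\}$ is the same as a labelled partition of the positions.
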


\subsection{Sufficient conditions for the convergence of the cluster expansion}
\label{section:cte-conv}

In this section, we consider the special case of local Hamiltonians as defined in \cref{def:all-to-all}. We remind the reader:
\begin{enumerate}
    \item The \textit{locality} $\sk$ of $\vH$ is the maximum support size $|\mathsf{supp}(\vh_e)|\leq \mathsf{k}.$ 
    \item The \textit{degree} $\sd$ is the number of qudits any other qudit interacts with 
        \begin{equation}
        u\in [n]:\quad     \big|\{v:\exists e\in \Gamma \text{ s.t. } \{u, v\}\subseteq e\}\big|\leq \sd
        \end{equation}
        \item  The \textit{interaction strength} $\sJ$ is the sum over terms coupling a pair of qudits:
        \begin{equation}
            u\neq v: \sum_{e\supseteq\{u,v\}} \|\vh_e\|\leq \sJ
        \end{equation}
\end{enumerate}

  \begin{lemma}\label{lemma:conv-cluster}
      In the context of \cref{lemma:cluster}, let $\vH$ be any $\sk$-local Hamiltonian satisfying \cref{def:all-to-all}. Then, there exists a universal constant $\mathsf{c}$ s.t. $\forall |z|<z_\mathsf{c}:=\mathsf{c}\cdot [k\sd\sJ]^{-1}$,
      \begin{equation}
          \sum_{\substack{\sF\in\mathsf{CC}\\ i\in\sF}} |\sF|\cdot \prod_{e\in \sF} \big(e^{4|z|\|\vh_e\|}-1\big) \leq 100\cdot \sk\cdot \frac{|z|}{z_\mathsf{c}}
      \end{equation}
  \end{lemma}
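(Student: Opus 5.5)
We bound the weighted cluster sum with a Kotecký–Preiss style counting argument over connected clusters in the interaction graph $G_\sI$. Shorthand $w_e := e^{4|z|\|\vh_e\|}-1$. For $4|z|\|\vh_e\|\le 1$ we have $w_e\le 8|z|\|\vh_e\|$; this holds for every $e$ once $|z|\le z_\mathsf{c}$ with $\mathsf{c}$ small, since $\|\vh_e\|\le \sJ$ for any pair $u\ne v$ in $e$.

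The first step is a local activity bound for the graph $G_\sI$. Fix a term $e_0$. Every neighbour $e$ of $e_0$ in $G_\sI$ shares a site $u\in e_0$ with it, and, since $\sk\ge 2$, also contains a site $v\ne u$ with $v$ among the at most $\sd$ partners of $u$. Hence
\begin{equation}
\sum_{e\sim e_0} w_e \;\le\; 8|z|\sum_{u\in e_0}\sum_{v:\,v\text{ partner of }u}\;\sum_{e\supseteq\{u,v\}}\|\vh_e\| \;\le\; 8|z|\,\sk\,\sd\,\sJ \;=:\;\delta .
\end{equation}
The same argument gives $\sum_{e\ni i} w_e\le 8|z|\sd\sJ\le \delta$ for any site $i$. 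So $\delta = O(|z|/z_\mathsf{c})$ with constants absorbed into $\mathsf{c}$.

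The second step is the standard tree/walk counting for connected sets. Every connected cluster $\sF$ with $|\sF|=m$ containing $i$ has a spanning tree in $G_\sI$, and a depth-first traversal of that tree gives a closed walk of length $2(m-1)$. The walk starts at an edge $e_1\ni i$ and visits every element of $\sF$. We charge $\sF$ to the ordered sequence of distinct edges $e_1,\dots,e_m$ in discovery order, in which each new edge is adjacent in $G_\sI$ to a previously discovered one. Summing over such sequences, with the weight of each new edge summed over the neighbourhood of some earlier edge, gives
\begin{equation}
\sum_{\substack{\sF\in\mathsf{CC},\ i\in\sF\\ |\sF|=m}}\prod_{e\in\sF}w_e \;\le\; 4^{m}\,\delta^{m}.
\end{equation}
The factor $4^m$ pays for the Catalan-many tree shapes, i.e.\ for the choice of which earlier edge the new one attaches to. We avoid an $m!$ overcount by using the DFS encoding: a tree of size $m$ corresponds to a Dyck word of length $2(m-1)$, and for each "down" step we sum the new edge's weight over the neighbours of the current edge, bounded by $\delta$. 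This is the step I expect to be the main obstacle. One has to make sure that summing $\prod w_e$ over trees rather than sets only overcounts, which is fine since every set has at least one spanning tree and all weights are nonnegative. One also has to avoid a spurious $\sk$-dependence per step beyond the $\sk$ already inside $\delta$.

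The final step sums the geometric-type series. We get
\begin{equation}
\sum_{\sF\ni i}|\sF|\prod_{e\in\sF} w_e \;\le\; \sum_{m\ge1} m\,(4\delta)^m \;\le\; \frac{4\delta}{(1-4\delta)^2} \;\le\; 8\delta
\end{equation}
whenever $4\delta\le 1/4$. Choosing $\mathsf{c}$ so that $z_\mathsf{c} = \mathsf{c}/(\sk\sd\sJ)$ with, say, $\mathsf{c}=1/128$ gives $4\delta\le |z|/(4z_\mathsf{c})\le 1/4$ for $|z|<z_\mathsf{c}$. Then $8\delta \le |z|/z_\mathsf{c}$, which is comfortably below $100\,\sk\,|z|/z_\mathsf{c}$.

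The stated factor $\sk$ in the bound gives slack. It covers the looser variant in which the first edge is bounded by summing over $e\ni i$ with $i$'s $\le\sd$ partners, and it also covers $|\mathsf{supp}(\sF)|\le \sk|\sF|$ when the lemma is fed into \cref{cor:total-corr}.
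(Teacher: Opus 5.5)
Your proposal is correct and takes essentially the same route as the paper. Both overcount connected clusters by spanning trees with at most $4^m$ shapes, bound each new vertex by a local activity of order $\sk\sd\sJ|z|$, and sum $\sum_m m(4\delta)^m$. The differences are cosmetic: you root at an edge containing $i$ via a DFS/Dyck encoding and linearize $w_e\le 8|z|\|\vh_e\|$ up front, whereas the paper roots at the site $i$, peels leaves with $\Delta\le\sk\sd(e^{4|z|\sJ}-1)$, and spends the factor $\sk$ on $|\mathsf{supp}(\sF)|\le\sk|\sF|$ (which, as you note, is slack in the statement as written).
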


  \begin{remark}
      We note that this entails bounds on $\mathsf{conv}_\beta$ via \cref{cor:cte-conv} and $\mathsf{corr}_\beta$ via \cref{cor:total-corr}.
  \end{remark}

  We note there is a long history of sufficient conditions for convergence of cluster expansions. Here we use an basic tree counting argument \cite{Simon1993LatticeGases, Seiler1982GaugeTheories}; more sophisticated recursive arguments are possible \`a la Kotecky-Preiss \cite{KoteckyPreiss1986,MIRACLESOLE2000244}.  
  
\begin{proof}
    Observe $|\mathsf{supp}(\sF)|\leq \sk\cdot |\sF|$, and thus it suffices to evaluate the sum conditioned on $|\sF|=m:$
    \begin{align}
        \sum_{\substack{\sF\in\mathsf{CC}\\ i\in\sF, |\sF|=m}} \prod_{e\in \sF} \big(e^{4|z|\|\vh_e\|}-1\big), \quad \text{and let}\quad w_e=e^{4|z|\|\vh_e\|}-1.
    \end{align}
    For this purpose, we overcount the sets $\sF\in\mathsf{CC}$ such that $i\in\sF$ and $|\sF| = m$ using a tree counting argument.

    Let $\mathbb{T}$ be the set of unlabeled trees on $m+1$ vertices. For each $T\in \mathbb{T}$, we label each vertex $v\in [m]$ by a Hamiltonian term $\sigma_T:[m]\rightarrow \Gamma$ and (in abuse of notation) label the root by the site $i\in [n]$. We refer to $\sigma_T$ as \textsf{valid} if on each edge $(u, v)\in T$, $\mathsf{supp}(\sigma_T(u))\cap \mathsf{supp}(\sigma_T(v))\neq \emptyset.$ Then:
\begin{align}
     \sum_{\substack{\sF\in\mathsf{CC}\\ i\in\sF, |\sF|=m}} \prod_{e\in \sF} w_e\leq \sum_{T\in \mathbb{T}} \sum_{\textsf{valid }\sigma_T} \prod_{i}^m w_{\sigma_T(i)}
\end{align}
    Let us now focus attention to a single $T\in \mathbb{T}$. Suppose WLOG the $m$th vertex is a leaf. Suppose we condition on a partial assignment $\sigma_{T\setminus m}$, and let us denote as $e_p\in \Gamma$ the assignment to the parent of this leaf. Then,
\begin{align}
    \sum_{\textsf{valid }\sigma_T} \prod_{i}^m w_{\sigma_T(i)} &=  \sum_{\textsf{valid }\sigma_{T\setminus m}} \prod_{i}^{m-1} w_{\sigma_T(i)} \cdot \sum_{\substack{\sigma_m \in \Gamma \\ \text{s.t. } \sigma_T \textsf{ valid}}} w_{\sigma_m} \\
    &\leq\sum_{\textsf{valid }\sigma_{T\setminus m}} \prod_{i}^{m-1}w_{\sigma_T(i)}\cdot \sum_{e:e\cap e_p\neq \emptyset}w_e \leq \Delta\cdot \sum_{\textsf{valid }\sigma_{T\setminus m}} \prod_{i}^{m-1}w_{\sigma_T(i)} \leq \Delta^{m} 
\end{align}   
where we overcount the valid assignments to the leaf by simply summing over all edges adjacent to $e_p$. In the above, we defined the parameter $\Delta\leq \sk\sd \big(e^{4|z|\sJ}-1\big)$ (\cref{lemma:istrength-param}). Using \cref{fact:tree-count} we have $|\mathbb{T}|\leq 4^{m+1}$, which enables us to conclude
\begin{align}
     \sum_{\substack{\sF\in\mathsf{CC}\\ i\in\sF}} |\sF|\cdot \prod_{e\in \sF} \big(e^{4|z|\|\vh_e\|}-1\big) \leq 4\sk \cdot \sum_m m\cdot (4\Delta)^m \leq \frac{16\sk \Delta}{(1-4\Delta)^2}.
\end{align}
To adequate to the lemma statement, we simply pick a suitable constant $\mathsf{c}$ such that $\sk\sd(\exp[4|z|\sJ]-1)\leq 8\sk\sd|z|\sJ = 8|z|/(\mathsf{c} z_{\mathsf{c}})$.

\end{proof}

\subsubsection{Deferred proofs}

The following interaction strength calculation is akin to \cite[Eq. 6.33]{Neto2004}.

\begin{lemma}\label{lemma:istrength-param}
    Consider the following interaction strength parameter:
    \begin{equation}
        \max_{e\in \Gamma}\sum_{e'\cap e\neq \emptyset} e^{4|z|\|\vh_{e'}\|}-1 \leq \sk\sd \big(e^{4|z|\sJ}-1\big).
    \end{equation}
\end{lemma}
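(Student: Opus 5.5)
The plan is to bound the sum over $e'$ meeting a fixed $e$ by grouping the $e'$ according to a site they share with $e$. Then convexity of $x\mapsto e^{4|z|x}-1$ turns the per-term weights into the pairwise strength $\sJ$.

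Fix $e\in\Gamma$. Every $e'$ with $e'\cap e\neq\emptyset$ contains some site $u\in e$, so by the union bound
\begin{equation}
\sum_{e'\cap e\neq\emptyset}\big(e^{4|z|\|\vh_{e'}\|}-1\big)\leq \sum_{u\in e}\sum_{e'\ni u}\big(e^{4|z|\|\vh_{e'}\|}-1\big).
\end{equation}
The outer sum has at most $\sk$ terms. It therefore suffices to show that for each site $u$,
\begin{equation}
\sum_{e'\ni u}\big(e^{4|z|\|\vh_{e'}\|}-1\big)\leq \sd\big(e^{4|z|\sJ}-1\big).
\end{equation}

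For this per-site bound, each $e'\ni u$ has $\sk\geq 2$, so it contains at least one other site $v\neq u$, and $v$ lies among the at most $\sd$ neighbours of $u$. Assign each such $e'$ to one chosen neighbour $v\in e'\setminus\{u\}$, or more crudely sum over all $v$, which only overcounts. This gives
\begin{equation}
\sum_{e'\ni u}\big(e^{4|z|\|\vh_{e'}\|}-1\big)\leq \sum_{v:\,v\sim u}\ \sum_{e'\supseteq\{u,v\}}\big(e^{4|z|\|\vh_{e'}\|}-1\big).
\end{equation}
For fixed $v$, apply the superadditivity of $g(x)=e^{cx}-1$ on $x\geq 0$. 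Since $g$ is convex with $g(0)=0$, we have $g(a)+g(b)\leq g(a+b)$. Together with monotonicity of $g$ and the definition of $\sJ$, this yields
\begin{equation}
\sum_{e'\supseteq\{u,v\}}g\big(\|\vh_{e'}\|\big)\leq g\Big(\sum_{e'\supseteq\{u,v\}}\|\vh_{e'}\|\Big)\leq g(\sJ).
\end{equation}
Summing over the at most $\sd$ neighbours $v$ gives the per-site bound, and summing over the $\sk$ sites of $e$ gives the claim. Taking the maximum over $e$ finishes the argument.

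The only delicate point is the overcounting. An $e'$ may be counted once for each $u\in e\cap e'$ and once for each $v\in e'\setminus\{u\}$, but all terms are nonnegative, so the overcount only weakens the bound in the permitted direction. One should also check that the $e'$ with $\sk=1$ do not arise. This is guaranteed because \cref{def:all-to-all} fixes a uniform support size $\sk$, and the interesting case is $\sk\geq 2$. If single-site terms were allowed they would need a separate treatment, but they do not appear under the stated definition.
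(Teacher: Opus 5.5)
Your proposal is correct and follows the paper's proof essentially step for step. Both arguments union-bound over the at most $\sk$ sites of $e$, then over the at most $\sd$ neighbours $v$ of each site, and finish with superadditivity $\sum_i (e^{x_i}-1)\le e^{\sum_i x_i}-1$ together with the definition of $\sJ$. Your caveat that the passage through neighbours needs $\sk\geq 2$ (so that every term containing $u$ contains some $v\neq u$) is a valid point that the paper leaves implicit.
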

\begin{proof}
    With $w_e=e^{4|z|\|\vh_e\|}-1$, we have:
    \begin{align}
        \max_{e\in \Gamma}\sum_{e'\cap e\neq \emptyset} w_{e'} \leq \sk\cdot \max_{u\in [n]}\sum_{e\ni u} w_{e}  &\leq \sk\sd \max_{u\neq v} \sum_{e\ni u, v}w_e 
        \\&\leq \sk\sd \max_{u\neq v} \sum_{e\ni u, v}e^{4|z|\|\vh_e\|}-1 \\&\leq \sk\sd\max_{u\neq v} \bigg(\exp\big[4|z|\sum_{e\ni u, v}\|\vh_e\|\big] -1\bigg) \\
        &\leq \sk\sd \big(e^{4|z|\sJ}-1\big),
    \end{align}
    where we used $x_1, \cdots x_k\geq 0\Rightarrow \sum_i(e^{x_i}-1)\leq e^{\sum x_i}-1$.
\end{proof}

\begin{fact}[{\cite[Theorem 1.5.1]{stanley2015}}]\label{fact:tree-count}
    The number of unlabeled trees $\mathbb{T}_m$ on $m$ vertices is $|\mathbb{T}_m|\leq 4^m$.
\end{fact}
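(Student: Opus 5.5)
The plan is to bound the number of unlabeled trees by the number of \emph{rooted plane} (ordered) trees. Every unlabeled tree admits at least one such structure: choose any root and any left-to-right ordering of the children at each vertex. This gives a surjection from rooted plane trees on $m$ vertices onto $\mathbb{T}_m$, so it suffices to show that there are at most $4^m$ rooted plane trees on $m$ vertices. This also covers the way the fact is used in the proof of \cref{lemma:conv-cluster}, where the trees carry a distinguished root labeled by the site $i$. There the count of rooted trees is what is really needed, and it is bounded by the same quantity.

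For the count, I would encode a rooted plane tree by its depth-first (contour) walk. Start at the root and traverse the tree, visiting the children of each vertex in their given order. Record a symbol $\mathtt{U}$ each time an edge is traversed downward and a symbol $\mathtt{D}$ each time it is traversed upward. Each of the $m-1$ edges is crossed exactly twice, so this produces a word of length $2(m-1)$ over a two-letter alphabet. The tree can be reconstructed from the word by replaying the walk: $\mathtt{U}$ creates a new rightmost child of the current vertex and moves to it, and $\mathtt{D}$ returns to the parent. Hence the encoding is injective.

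Consequently the number of rooted plane trees is at most $2^{2(m-1)} = 4^{m-1} \le 4^m$. In fact the image consists exactly of the Dyck words, so the count equals the Catalan number $C_{m-1} = \frac{1}{m}\binom{2m-2}{m-1}$, but the crude bound by the number of all binary words already suffices.

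There is no real obstacle here. The only point requiring care is injectivity of the encoding, which is the replay argument above. The rest is the observation that forgetting the root and the planar ordering can only decrease the count.
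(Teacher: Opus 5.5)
Your proof is correct. The paper gives no argument beyond a citation to Stanley, and your route is the standard Catalan-number argument behind that reference: a surjection from rooted plane trees obtained by forgetting the root and the planar order, followed by the contour-walk injection into words of length $2(m-1)$. It even gives the sharper bound $4^{m-1}$. Your remark that the application in \cref{lemma:conv-cluster} really needs \emph{rooted} trees is a genuine point the paper's statement glosses over, since the root there is pinned to the site $i$; your count via rooted plane trees covers that case too.
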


\section{Dobrushin conditions for the pseudo-Lindbladian}
\label{section:dobrushin}

The ``Dobrushin condition'' and related ``Dobrushin-Shlosman conditions'' were developed as criteria on classical Gibbs measures to ensure a property known as \textit{uniqueness} of the measure in the thermodynamic limit \cite{dobrushin1985completely, dobrushin1985constructive, dobrushin1987completely}. A prolific subsequent line of work studied the implications of these conditions to rapid-mixing of spin systems \cite{HolleyStroock1987,HolleyStroock1989UniformL2,Aizenman1987,stroock1992equivalence,stroock1992logarithmic}.  

Recently, \cite{rouze2024optimal, bakshi2025dobrushinconditionquantummarkov} studied extensions of these conditions to non-commuting quantum spin systems, in order to prove rapid-mixing of such systems at high temperatures. Following their techniques, here we similarly develop sufficient conditions for the ``pseudo-Lindbladian'' $\CK$ of \eqref{eq:CK} to admit a system-size independent spectral gap. We shall see that the structure of $\CK$ 1) significantly simplifies the presentation, due to its explicit locality; however, 2) at the cost of only obtaining fast-mixing (not to be confused with rapid-mixing), since $\CK$ does not generate a CP map. 

We organize this section as follows. 

\begin{enumerate}
    \item We begin in \cref{section:osc} by introducing a modest variant of the oscillator norm, which we coin the KMS oscillator norm, and prove that its contraction entails a Poincar\'e inequality.
    \item In \cref{section:dob-conditions}, we establish a Dobrushin condition for the generator $\CK$ \cite{BC26} which implies contraction of the KMS-oscillator-norm, and in turn, a spectral gap for $\CK$.

    \item In \cref{section:dob-from-LR} we prove that said Dobrushin conditions reduce cleanly to properties of the complex-time evolution of local Pauli operators, previously derived in \cref{section:quasi-locality}.

\end{enumerate}

We then conclude this section with \cref{section:K-is-gapped}, where we put all these ingredients together to prove the main result of this section: 

\begin{theorem}[$\CK$ is gapped]\label{theorem:K-is-gapped}
    Let $\vH$ be a $\sk$-local Hamiltonian of local dimension $2^\sq$, degree $\sd$, pairwise strength $\sJ$, as in \cref{def:all-to-all}. Then, there exist  universal constants $\mathsf{c, \gamma}$ such that for any $\beta< [\mathsf{c}\sk^2\sJ\sd]^{-1}$, the generator $\CK$ \eqref{eq:CK} defined by all single-site Pauli jumps satisfies $\mathsf{gap}(\CK)\geq \gamma$.
\end{theorem}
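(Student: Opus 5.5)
The plan is to assemble \cref{theorem:K-is-gapped} from three ingredients. The first is the Dobrushin-type criterion of \cref{lemma:intro-dob-cond}, to be proved in \cref{section:dob-conditions}. The second is the reduction of its hypotheses \eqref{eq:overview-temp}, \eqref{eq:overview-space} to the complex-time quantities $\mathsf{conv}$ and $\mathsf{corr}$, to be done in \cref{section:dob-from-LR}. The third is the cluster-expansion bounds of \cref{cor:cte-conv}, \cref{cor:total-corr} and \cref{lemma:conv-cluster}.

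Concretely, I first invoke \cref{lemma:conv-cluster} at $|z|\le\beta/4$ (or $\beta$ where needed). This bounds both $\mathsf{conv}_{\beta/4}$ (via \cref{cor:cte-conv}, since $|\sF|\ge1$ and $e^{2x}-1\le e^{4x}-1$) and $\mathsf{corr}_{\beta/4}$ (via \cref{cor:total-corr}, using $|\mathsf{supp}(\sF)|\le\sk|\sF|$). The result is
\begin{equation}
\mathsf{conv}_{\beta/4}\le 100\sk\,\tfrac{\beta}{4z_\mathsf{c}},\qquad \mathsf{corr}_{\beta/4}\le 200\sk^2\,\tfrac{\beta}{4z_\mathsf{c}},\qquad z_\mathsf{c}=\mathsf{c}[\sk\sd\sJ]^{-1}.
\end{equation}
Both are thus $O(\sk^2\sd\sJ\beta)$. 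In particular, for $\beta<[\mathsf{c}'\sk^2\sd\sJ]^{-1}$ we have $\mathsf{conv}_{\beta/4}\le 1$, so the factors $(1+\mathsf{conv})^2\le 4$ are harmless.

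Next I bound the Dobrushin constants. For pairwise influence, the informal pairwise-influence lemma gives $\kappa_{j\ell}\le c_\sq(1+\mathsf{conv}_{\beta/4})^2\mathsf{corr}_{\beta/4}(j,\ell)$. Summing over $\ell\neq j$ then yields $\sum_\ell\kappa_{j\ell}\le 4c_\sq\,\mathsf{corr}_{\beta/4}=O(c_\sq\sk^2\sd\sJ\beta)$.

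For locality in temperature, I write $\vrho^{\pm1/2}\vA^{a\dagger}\vrho^{\mp1/2}=\vA^{a\dagger}(\pm i\beta/2)$. In $\CK_a^{(\beta)}-\CK_a^{(0)}$, each term then has a factor $\vA(z)-\vA$ sitting next to a commutator $[\vA^a,\vX]$. Applying KMS H\"older (\cref{lem:kms-holder}), and using that conjugation by $\vrho^{1/4}$ is again complex-time evolution, I obtain
\begin{equation}
\eta_j\lesssim 4^{\sq}(1+\mathsf{conv}_{\beta})\,\mathsf{conv}_{\beta}\max_a\|[\vA^a,\cdot]\|_{\vrho}.
\end{equation}
The commutator superoperator itself has KMS norm at most $2(1+\mathsf{conv})^2$ by the same inequality. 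So $\eta_j=O(c_\sq\sk\sd\sJ\beta)$.

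Finally, I need the depolarizing value $\lambda$. The infinite-temperature generator $\CK_j^{(0)}[\vX]=\sum_{a}[\vA^a,\vX]\vA^{a}-\vA^a[\vA^a,\vX]$ is, up to normalization, $-2\cdot4^{\sq}(\vX-\tr_j\text{-average})$, so it has local gap a constant $\lambda$. Then $\gamma=\max_j[2(1+\mathsf{conv})^2\eta_j+\sum_\ell\kappa_{j\ell}]\le C_\sq\sk^2\sd\sJ\beta$. Choosing $\mathsf{c}$ so that this is at most $\lambda/2$ gives $\mathsf{gap}(\CK)\ge\lambda/2$.

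The main obstacle is the $\sq$-dependence. The constant $c_\sq$ and the normalization of $\CK_j$ (a sum over $4^\sq$ Paulis) both scale exponentially in $\sq$, and one must check that they cancel against the local gap $\lambda\propto4^\sq$. If they do not cancel, I would absorb the loss into the universal constant $\mathsf{c}$ through the convention that the threshold may depend on $\mathsf{c}^{\sq}$ (as in \cref{theorem:all-to-all-mixing}). I would also rescale $\CK_j$ by $4^{-\sq}$, so that $\lambda$ is an absolute constant and $\gamma$ remains a universal gap.
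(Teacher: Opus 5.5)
Your plan follows the paper's proof. \cref{lemma:conv-cluster}, via \cref{cor:cte-conv} and \cref{cor:total-corr}, makes $\mathsf{conv}_{\beta/4}$ and $\mathsf{corr}_{\beta/4}$ small. \cref{lemma:tr-induced-norm}, \cref{lemma:Kt-approx} and \cref{lemma:Kmf-approx} turn these into the constants of \cref{assumption:dobrushin}. Finally \cref{lemma:d-conditions} and \cref{lemma:osc-to-gap} give the gap.

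Two bookkeeping corrections are needed.

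First, the $\sk$-power. Your displayed bound $\mathsf{corr}_{\beta/4}\le 200\sk^2\beta/(4z_\mathsf{c})$ is $O(\sk^3\sd\sJ\beta)$ once you substitute $z_\mathsf{c}=\mathsf{c}[\sk\sd\sJ]^{-1}$, not $O(\sk^2\sd\sJ\beta)$ as you state. As written, your argument therefore only reaches a $\sk^3$ threshold. To get the stated $\sk^2$, do not apply $|\mathsf{supp}(\sF)|\le\sk|\sF|$ a second time. The tree-counting proof of \cref{lemma:conv-cluster} already bounds the support-weighted sum $\sum_{\sF\ni i}|\mathsf{supp}(\sF)|\prod_{e\in\sF}w_e$, with $w_e=e^{4|z|\|\vh_e\|}-1$. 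The explicit $\sk$ in its right-hand side is exactly that support factor.

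Second, the $\sq$-dependence, which you flagged as the main obstacle, cancels exactly. Use the formal conditions with $[\delta_j,\CK_i]$, not the overview's $[\CK^{(0)}_\ell,\CK^{(\beta)}_j]$, which differs by a factor of $\lambda$. The prefactors $2^{2\sq+3}$ in \cref{lemma:Kt-approx} and $2^{2\sq+2}$ in \cref{lemma:Kmf-approx} then match $\lambda=2\cdot 4^{\sq}$. This yields the $\sq$-free condition of \cref{corr:cte-dobrushin}, so no $\mathsf{c}^{\sq}$ fallback or rescaling of $\CK$ is needed.
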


\subsection{The KMS oscillator norm}
\label{section:osc}


We dedicate this subsection to a discussion on a modification to the oscillator norm, which we coin the KMS oscillator norm.  

\begin{definition}
    [KMS oscillator norm] For any operator $\vX$ and full rank density matrix $\vrho$, we refer to $\triple{\vX}_{\vrho}$ as the KMS-oscillator-norm of $\vX$, defined by:
    \begin{equation}
        \triple{\vX}_{\vrho}:= \sum_{i\in [n]} \|\delta_i[\vX]\|_{\vrho}, \quad \text{with}\quad \delta_i[\vX]:= \vX-\frac{1}{2^\sq}\tr_i[\vX].
    \end{equation}
\end{definition}

 We note that $-\delta_i\propto \CK_i^{0}$ is proportional to the generator of the depolarizing channel. The following lemma proves that if any (pseudo) Lindbladian $\CF$ generates an exponential decay in KMS oscillator norm, then it must be gapped. 

 \begin{lemma}[KMS-oscillator-norm decay implies a Poincar\'e inequality]\label{lemma:osc-to-gap}
     Let $\CF:\CB(\calH)\rightarrow \CB(\calH)$ be $\vrho$-detailed-balanced for a full rank state $\vrho$. Then, if there exists $\alpha>0$ such that
     \begin{equation}
       \forall\vX:\quad   \triple{e^{t\CF^\dagger}[\vX]}_{\vrho} \leq \triple{\vX}_{\vrho}\cdot e^{-\alpha\cdot t} \quad \Rightarrow \quad   \forall \vX:\quad \frac{\langle \vX, -\CF^\dagger[\vX]\rangle_{\vrho}}{\|\vX - \tr[\vrho \vX]\|_{\vrho}^2} \geq \alpha.
     \end{equation}
 \end{lemma}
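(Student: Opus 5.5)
The plan is a spectral argument. It uses that $\CF^\dagger$ is self-adjoint with respect to $\langle\cdot,\cdot\rangle_{\vrho}$, which is what $\vrho$-detailed balance gives us (read for the generator acting on observables). Self-adjointness on the finite-dimensional Hilbert space $(\CB(\calH),\langle\cdot,\cdot\rangle_{\vrho})$ yields a $\vrho$-orthonormal eigenbasis $\{\vX_k\}$ of $\CF^\dagger$ with real eigenvalues $-\lambda_k$. Once every eigenvector other than the identity is shown to have $\lambda_k\geq\alpha$, the Poincar\'e inequality is just the Rayleigh-quotient bound on the $\vrho$-orthogonal complement of $\vI$.

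\textbf{Step 1: the null space of the KMS oscillator norm.} We have $\triple{\vX}_{\vrho}=0$ iff $\delta_i[\vX]=0$ for every $i$, because $\|\cdot\|_{\vrho}$ is a genuine norm when $\vrho$ is full rank. In turn, $\delta_i[\vX]=0$ says $\vX$ acts trivially on site $i$. Intersecting these conditions over all $i\in[n]$ shows $\triple{\vX}_{\vrho}=0 \iff \vX\in\mathrm{span}\{\vI\}$.

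\textbf{Step 2: eigenvectors decay at rate at least $\alpha$.} Take an eigenvector $\vX_k\notin\mathrm{span}\{\vI\}$. Then $e^{t\CF^\dagger}[\vX_k]=e^{-\lambda_k t}\vX_k$, so the hypothesis gives
\[
e^{-\lambda_k t}\triple{\vX_k}_{\vrho}\leq e^{-\alpha t}\triple{\vX_k}_{\vrho}\quad\text{for all } t\geq 0.
\]
Since $\triple{\vX_k}_{\vrho}>0$ by Step 1, letting $t\to\infty$ forces $\lambda_k\geq\alpha$.

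Consequently no eigenvector with eigenvalue $-\lambda_k<\alpha$ can lie outside $\mathrm{span}\{\vI\}$. Because the eigenbasis is orthonormal, the eigenspaces with $\lambda_k\geq\alpha$ therefore span a space of codimension at most one. The remaining direction must be $\vI$ itself, and it is an eigenvector, say $\CF^\dagger[\vI]=-c\,\vI$.

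\textbf{Step 3: decomposition and conclusion.} Note that $\langle \vI,\vX\rangle_{\vrho}=\tr[\vrho\vX]$ and $\|\vI\|_{\vrho}=1$. Hence the $\vrho$-orthogonal projection of $\vX$ onto $\vI^\perp$ is $\vX-\tr[\vrho\vX]\vI$, and $\vI^\perp$ is spanned by eigenvectors with $\lambda_k\geq\alpha$. Expanding $\vX$ in the eigenbasis,
\[
\langle\vX,-\CF^\dagger[\vX]\rangle_{\vrho}=c\,|\tr[\vrho\vX]|^2+\sum_{k:\vX_k\perp\vI}\lambda_k|\langle\vX_k,\vX\rangle_{\vrho}|^2\geq c\,|\tr[\vrho\vX]|^2+\alpha\,\|\vX-\tr[\vrho\vX]\vI\|_{\vrho}^2 .
\]

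\textbf{Main obstacle: the sign of $c$.} The delicate point is the constant $c$, since the oscillator-norm hypothesis is blind to the $\vI$ direction. For the generators of interest it vanishes. For $\CK$, $\vI$ lies in the kernel by \cref{lemma:K-properties}. For a Lindbladian, unitality of the Heisenberg-picture generator gives the same. I would state this (i.e.\ $\CF^\dagger[\vI]=0$, or more generally $c\geq 0$) as the implicit standing assumption, under which the display above yields exactly the claimed Poincar\'e inequality with constant $\alpha$. Everything else is routine finite-dimensional linear algebra.
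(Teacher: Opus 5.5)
Your proposal is correct and is essentially the paper's proof: expand in a $\vrho$-orthonormal eigenbasis of $\CF^\dagger$, use that $\triple{\cdot}_{\vrho}$ is nonzero off $\mathrm{span}\{\vI\}$ to get $\lambda_k\geq\alpha$ for every eigenoperator orthogonal to $\vI$, and read off the Poincar\'e inequality. The caveat you raise is genuine, and it is exactly what the paper absorbs into its ``without loss of generality $\vB_0=\vI$, $\lambda_0=0$''. That is, the paper implicitly assumes $\CF^\dagger[\vI]=0$, which holds for $\CK$; without it the statement fails, e.g.\ for $\CF^\dagger[\vX]=-\alpha(\vX-\tr[\vrho\vX]\vI)+\tr[\vrho\vX]\vI$.
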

 
 \begin{proof}
    Since $\vrho$ is full rank and $\CF$ is detailed-balanced, $\CF^\dagger$ admits a KMS orthonormal set of eigenoperators $\{\vB_i\}_i$:
    \begin{equation}
        \CF^\dagger[\vB_i]=-\lambda_i \vB_i, \quad i\neq j:\quad \langle \vB_i, \vB_j\rangle_{\vrho} = 0
    \end{equation}
    where $\lambda_i\in \BR$ and without loss of generality, $\vB_0=\vI$, $\lambda_0=0$. Then, for each $i> 0:$ 
    \begin{align}
       e^{-\lambda_i t}\cdot  \triple{\vB_i}_{\vrho} = \triple{e^{t\CF}[\vB_i]}_{\vrho} \leq \triple{\vB_i}_{\vrho}\cdot e^{-\alpha\cdot t} \Rightarrow \lambda_i \geq \alpha.
    \end{align}
    where the first inequality is the assumption of KMS oscillator norm decay, and the implication $(\Rightarrow)$ follows since $\triple{\vB_i}_{\vrho}$ is non-zero if $\vB_i\neq \vI$.
 \end{proof}

\subsection{Dobrushin conditions for \texorpdfstring{$\CK$}{K}}
\label{section:dob-conditions}

We proceed by providing a set of sufficient conditions on the generator $\CK$, to ensure the decay in the KMS oscillator norm. Our approach is based on \cite{rouze2024optimal, MajewskiOlkiewiczZegarlinski1998}, with minor  modifications to address the fact that the generator $\CK$ is not completely positive. We remind the reader:

\begin{observation}
    For the purposes of showing the decay in the oscillator norm, what makes $\CK$ special is the following identity regarding the kernel of a single local term $\CK_i$:
\begin{equation}\label{eq:dis-spread}
    \CK_i[\vX] = \CK_i\circ \delta_i[\vX], \quad \delta_i[\vX]:= \vX-\frac{1}{2^\sq}\tr_i[\vX]
\end{equation}
\end{observation}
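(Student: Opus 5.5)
The plan is to use linearity of $\CK_i$ together with the fact that the operator removed by $\delta_i$ is trivial on site $i$. I read $\tr_i[\vX]$ as the partial trace over site $i$, re-embedded as $\tr_i[\vX]\otimes \vI_i$, which is the convention that makes $\delta_i[\vX]$ an operator on $\calH$. Then
\begin{equation*}
\vX = \delta_i[\vX] + \frac{1}{2^\sq}\tr_i[\vX]\otimes \vI_i,
\end{equation*}
and by linearity
\begin{equation*}
\CK_i[\vX] = \CK_i\circ\delta_i[\vX] + \frac{1}{2^\sq}\CK_i\big[\tr_i[\vX]\otimes\vI_i\big].
\end{equation*}
So it suffices to show that $\CK_i$ annihilates the second operator.

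There are two ways to get this. The first is to invoke the kernel characterization \eqref{eq:kernel-K} of \cref{lemma:K-properties}: its ``$\Leftarrow$'' direction says directly that every operator of the form $\vX_{[n]\setminus i}\otimes\vI_i$ lies in $\ker\CK_i$.

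To keep the argument self-contained, I would also verify this direction from the definition \eqref{eq:CK}. For each $a\in\CS_i^1$, the Pauli $\vA^a$ is supported on site $i$, and $\tr_i[\vX]\otimes\vI_i$ acts as the identity on site $i$. Hence
\begin{equation*}
\big[\vA^a,\ \tr_i[\vX]\otimes\vI_i\big]=0 .
\end{equation*}
Both terms of $\CK_a$ in \eqref{eq:CK} contain this commutator as a factor, so each $\CK_a$ vanishes on this operator. Summing over $a\in\CS_i^1$ gives $\CK_i\big[\tr_i[\vX]\otimes\vI_i\big]=0$, and the identity \eqref{eq:dis-spread} follows.

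There is no real obstacle here. The only point needing care is the embedding convention for $\tr_i$: the normalization $2^{-\sq}$ is what makes $\delta_i$ a projection, but the identity itself holds for any scalar in front of the partial trace.
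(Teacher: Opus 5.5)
Your proof is correct and follows the paper's reasoning: the paper justifies the identity by noting that $\delta_i$ only removes the component $2^{-\sq}\tr_i[\vX]\otimes\vI_i$, which is trivial on site $i$ and therefore annihilated by $\CK_i$ (the easy direction of the kernel characterization in \cref{lemma:K-properties}). Your direct check, that every term of $\CK_a$ carries the factor $[\vA^a,\cdot\,]$ and that this factor vanishes on operators acting trivially on site $i$, is exactly why that direction holds.
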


Indeed, $\delta_i$ has the effect of removing the identity term on qubit $i$, which is stationary under $\CK_i$. This suggests that if $\vX$ is already ``in thermal equilibrium'' on $i$, then updates on $i$ cannot further ``spread disagreements'' outside $i$. We emphasize this is categorically not true for the \cite{chen2023efficient} Lindbladian.

\begin{assumption}\label{assumption:dobrushin}
    We introduce the following assumptions on the locality of complex-time evolution and the pseudo-Lindbladian generator $\CK=\sum_i\CK_i$.
    \begin{enumerate}
        \item \textbf{Locality in temperature.} For each $i\in [n]$, there exist constants $\lambda, \eta_i\geq 0$ s.t.:
    \begin{equation}
        \|\CK_i+\lambda\cdot \delta_i\|_{\vrho} \leq \eta_i. \label{eq:t-approx} 
    \end{equation}
    \item \textbf{Pairwise influence.} For each $i\neq j\in [n]$, there exists a constant $\kappa_{ij}\geq 0$ such that:
    \begin{equation}
        \|[\delta_j, \CK_i]\|_{\vrho} \leq \kappa_{ij}.\label{eq:mf-approx}
    \end{equation}
    \end{enumerate}
\end{assumption}

We will prove shortly that these constants $\eta, \kappa$ exist, and scale linearly in $\beta$ at sufficiently small $\beta$, for our systems of interest. A trivial calculation below shows $\lambda=2\cdot 4^\sq.$

\begin{remark}
    \eqref{eq:t-approx} simply imposes that $\CK_i$, as a function of $\beta$, is a perturbation of the $\beta=0$ case. In turn, \eqref{eq:mf-approx} resembles a mean-field approximation, in that ``on average'' updates on $i\in [n]$ shouldn't affect all the other sites $j\in [n].$
\end{remark}

We are now in a position to state the main result of this section. 

\begin{lemma}
    [A Dobrushin condition for $\CK$]\label{lemma:d-conditions}
    Assume that the generator $\CK$ \eqref{eq:CK} satisfies \cref{assumption:dobrushin}, and furthermore:
    \begin{equation}
        \gamma:=  \max_i\bigg[\|\delta_i\|_{\vrho}\cdot \eta_i + \sum_{j\neq i}  \kappa_{ij}\bigg] < \lambda
    \end{equation}
    Then for any $\vX$, the KMS oscillator norm of $\vX_t:=e^{t\CK}[\vX]$ decays as $ \triple{\vX_t}_{\vrho} \leq e^{-t(\lambda-\gamma)}\cdot \triple{\vX}_{\vrho}.$
\end{lemma}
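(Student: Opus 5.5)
The plan is a Grönwall-type argument for the vector of local oscillations $u_i(t):=\|\delta_i[\vX_t]\|_{\vrho}$, in the spirit of \cite{MajewskiOlkiewiczZegarlinski1998, rouze2024optimal}. Fix $i$ and differentiate:
\begin{equation}
    \frac{\rd}{\rd t}\delta_i[\vX_t] = \delta_i\circ\CK[\vX_t] = \delta_i\circ\CK_i[\vX_t] + \sum_{j\neq i}\delta_i\circ\CK_j[\vX_t].
\end{equation}

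For the diagonal term, I use the kernel identity \eqref{eq:dis-spread}, $\CK_i=\CK_i\circ\delta_i$, and write $\CK_i = -\lambda\delta_i + (\CK_i+\lambda\delta_i)$. Since $\delta_i$ is a projection, this gives
\begin{equation}
    \delta_i\CK_i[\vX_t] = -\lambda\,\delta_i[\vX_t] + \delta_i(\CK_i+\lambda\delta_i)\delta_i[\vX_t].
\end{equation}
By \eqref{eq:t-approx}, the second piece has KMS norm at most $\|\delta_i\|_{\vrho}\eta_i\, u_i(t)$.

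For the off-diagonal terms $j\neq i$, I commute $\delta_i$ past $\CK_j$ and then use $\CK_j=\CK_j\delta_j$:
\begin{equation}
    \delta_i\CK_j[\vX_t] = \CK_j\delta_i[\vX_t] + [\delta_i,\CK_j]\,\delta_j[\vX_t].
\end{equation}
By \eqref{eq:mf-approx}, the commutator piece is bounded by $\kappa_{ji}\,u_j(t)$.

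The remaining piece $\CK_j\delta_i[\vX_t]$ is not small, so I do not estimate it directly. Instead I treat it as part of the generator acting on $\vY_i:=\delta_i[\vX_t]$. This gives
\begin{equation}
    \frac{\rd}{\rd t}\vY_i = \Big(-\lambda + \sum_{j\neq i}\CK_j\Big)\vY_i + \vR_i(t),
    \qquad
    \|\vR_i(t)\|_{\vrho}\le \|\delta_i\|_{\vrho}\eta_i u_i + \sum_{j\neq i}\kappa_{ji}u_j .
\end{equation}
The operator $\sum_{j\neq i}\CK_j$ is $\vrho$-detailed-balanced and negative semidefinite by \cref{lemma:K-properties}. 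Hence $e^{s\sum_{j\neq i}\CK_j}$ is a KMS-norm contraction. This is exactly why I work in the KMS norm rather than the operator norm: $\CK$ is not CP, so $\infty\to\infty$ contractivity is unavailable.

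With that contraction, Duhamel's formula (or directly $\frac{\rd}{\rd t}\|\vY_i\|_{\vrho}^2 = 2\mathrm{Re}\langle \vY_i,\dot\vY_i\rangle_{\vrho}$ together with negativity of $\langle \vY,\sum_{j\neq i}\CK_j\vY\rangle_{\vrho}$) yields the differential inequality
\begin{equation}
    \dot u_i \le -\lambda u_i + \|\delta_i\|_{\vrho}\eta_i u_i + \sum_{j\neq i}\kappa_{ji}u_j .
\end{equation}
If $u_i=0$, I pass to the upper Dini derivative or work with $\sqrt{u_i^2+\epsilon}$.

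Summing over $i$ and reindexing the double sum gives
\begin{equation}
    \frac{\rd}{\rd t}\triple{\vX_t}_{\vrho} \le \sum_j\Big(-\lambda+\|\delta_j\|_{\vrho}\eta_j+\sum_{i\neq j}\kappa_{ji}\Big)u_j \le -(\lambda-\gamma)\triple{\vX_t}_{\vrho}.
\end{equation}
Here the column sum $\sum_{i\neq j}\kappa_{ji}$ is what appears in the definition of $\gamma$, consistent with the influence of $j$'s generator on other sites. Grönwall's inequality then concludes the proof.

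The main obstacle I expect is the step that absorbs $\CK_j\delta_i$ into the generator. One has to check that $\sum_{j\neq i}\CK_j$ is genuinely KMS-dissipative, i.e. negative semidefinite in $\langle\cdot,\cdot\rangle_{\vrho}$, which \cref{lemma:K-properties} supplies. A second point is the index bookkeeping in $\kappa$: the differential inequality naturally produces $\kappa_{ji}$, so summing must be done in the order that matches the max in $\gamma$. If it does not match, I would use the transposed bound.
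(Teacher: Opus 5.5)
Your argument is correct and is essentially the paper's proof. The paper uses the same decomposition in integrated (Duhamel) form: it differentiates $e^{\lambda z}\calP^{(i)}_{t-z}\circ\delta_i\circ\calP_z[\vX]$, with $\calP^{(i)}_t=e^{t(\CK-\CK_i)}$, and then uses KMS-contractivity of $\calP^{(i)}$. That is exactly the integrated version of your energy estimate on $\|\delta_i[\vX_t]\|_{\vrho}$.

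Your index bookkeeping is also right as written. The coefficient $\|\delta_j\|_{\vrho}\eta_j+\sum_{i\neq j}\kappa_{ji}$ is precisely the quantity maximized in $\gamma$, so no transposed bound is needed.
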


A consequence of the discussion on the KMS oscillator norm (\cref{lemma:osc-to-gap}) from the previous section, is then that the Dobrushin condition above implies a spectral gap for $\CK$.

\begin{remark}
    In the above, $\|\delta_i\|_{\vrho}$ is the induced superoperator $\vrho$ norm of $\delta_i$. As we discuss shortly, the convergence of complex-time evolution will imply it is bounded by a constant. 
\end{remark}

\begin{proof}[Proof of \cref{lemma:d-conditions}]  For any fixed $i\in [n]$, the time-derivative of $\vX_t$ satisfies the expansion:
\begin{align}
    \frac{\rd}{\rd t}\vX_t &= \sum_j\CK_j[\vX_t]= -\lambda\delta_i[\vX_t] + (\CK_i+\lambda\delta_i)[\vX_t] + \sum_{j\neq i} \CK_j[\vX_t]
\end{align}
where we recall that $-\delta_i$ is the generator for the depolarizing channel. 
Following \cite{rouze2024optimal}, we claim (and justify shortly) that $\delta_i[\vX_t]$ satisfies the bound:
\begin{align}
    \|\delta_i[\vX_t]\|_{\vrho} \leq e^{-\lambda t}\cdot \|\delta_i[\vX]\|_{\vrho} + \int_0^t e^{\lambda(z-t)}\cdot \bigg(\|\delta_i\circ (\CK_i+\lambda\delta_i)[\vX_z]\|_{\vrho}  + \sum_{j\neq i}\|[\delta_i, \CK_j][\vX_z]\|_{\vrho} \bigg)\rd z.\label{eq:rfa-eq}
\end{align}
We can now apply the influence assumptions in the lemma statement. Assumption \eqref{eq:t-approx} and the kernel condition \eqref{eq:dis-spread}, applied to $\vY=\calP_z[\vX]$, implies a bound on the first term in the RHS above. 
\begin{align}
    \|\delta_i\circ (\CK_i+\lambda\delta_i)[\vY]\|_{\vrho}& \leq  \|\delta_i\|_{\vrho}\cdot \| (\CK_i+\lambda\delta_i)[\vY]\|_{\vrho} \leq \eta_i \cdot  \|\delta_i\|_{\vrho}\cdot \|\delta_i[\vY]\|_{\vrho}
\end{align}
where we used that $\delta_i^2=\delta_i$, and $\|\delta_i\|_{\vrho}$ is the induced superoperator $\vrho$ norm. In turn, for the second term we apply assumption \eqref{eq:mf-approx} and the kernel condition \eqref{eq:dis-spread}:
\begin{align}
    \|[\delta_i, \CK_j] [\vY]\|_{\vrho} &\leq \kappa_{ji}\cdot \|\delta_j [\vY]\|_{\vrho}
\end{align}
  Summing \eqref{eq:rfa-eq} over $i\in [n]$ then exposes the desired KMS-oscillator norm:
\begin{align}
    \triple{\vX_t}&\leq e^{-\lambda t}\triple{\vX} + \max_i\bigg(\eta_i\|\delta_i\|_{\vrho}+\sum_{j\neq i} \kappa_{ij}\bigg) \int^t_0 e^{\lambda(z-t)} \cdot \triple{\vX_z}\rd z 
\end{align}
which results in the advertised bound. 

It remains to then justify \eqref{eq:rfa-eq}. Let us denote $\calP_t = e^{t\CK}$, and $\calP_t^{(i)} = e^{t(\CK-\CK_i)}$ to be the generator with the updates on $i$ removed. The following calculation, \cite[Equation (A5)]{rouze2024optimal} controls the deviation between the two time-evolutions via:
\begin{align}\label{eq:rfa-a5}
    \frac{\rd}{\rd z}\bigg(e^{\lambda z}\cdot  \calP^{(i)}_{t-z}\circ \delta_i\circ \calP_z[\vX]\bigg) = e^{\lambda z}\cdot \calP_{t-z}^{(i)}\circ \bigg(\sum_{j\neq i}[\delta_i, \CK_j]\circ \calP_z[\vX]+\delta_i\circ (\CK_i+
    \lambda\delta_i)\circ \calP_z[\vX]\bigg).
\end{align}
Integrating on both sides and using the contractivity of $\calP_t, \calP_t^{(i)}$ in the $\vrho$ norm then gives the desired statement. 
\end{proof}

\begin{remark}
    We remark the time-evolution $\calP_t$ in \eqref{eq:rfa-a5} need not be contractive in the $\infty$ norm, since $\CK$ is not CP. Nevertheless, the negative semi-definite spectra implies it is contractive in (KMS) $\vrho$ norm (\cref{lemma:K-properties}), which is why we required the modification to the oscillator norm. 
\end{remark}

\subsection{Dobrushin conditions from the quasi-locality of complex-time evolution}
\label{section:dob-from-LR}

We dedicate this subsection to simplifying the Dobrushin conditions above (the approximations in time and space for the generator $\CK$), to certain operator-growth bounds. As we shall see, each of the parameters in \cref{lemma:d-conditions} admits a clean reduction to some property of the complex-time evolution of local Pauli operators. 

We refer the reader back to \cref{def:cte} and \cref{def:pairwise-corr} for the definitions of the complex-time evolution and pairwise-correlations constants $ \mathsf{conv}_\beta, \mathsf{corr}_\beta (i, j)$. Equipped with these definitions, we can now simplify each of the parameters in the Dobrushin condition. The proofs are deferred to \cref{section:dob-def-proof} below. We begin with the simplest of the three: the superoperator norm of the partial trace. 

\begin{lemma}\label{lemma:tr-induced-norm}
    The induced superoperator $\vrho$ norm of $\delta_i$ is bounded by $\|\delta_i\|_{\vrho}\leq 2(1+\mathsf{conv}_{\beta/4})^2$. 
\end{lemma}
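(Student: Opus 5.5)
We write $\delta_i = \mathrm{id} - \mathcal{D}_i$, where $\mathcal{D}_i[\vX] := 2^{-\sq}\tr_i[\vX]\otimes \vI_i$ is the normalized partial trace. By the triangle inequality, $\|\delta_i\|_{\vrho}\le 1+\|\mathcal{D}_i\|_{\vrho}$, so it suffices to show $\|\mathcal{D}_i\|_{\vrho}\le (1+\mathsf{conv}_{\beta/4})^2$. Since $1\le (1+\mathsf{conv}_{\beta/4})^2$, the sum is then at most $2(1+\mathsf{conv}_{\beta/4})^2$, as claimed. To bound $\mathcal{D}_i$, we use the Pauli twirl on site $i$:
\begin{equation}
\mathcal{D}_i[\vX] = \frac{1}{4^{\sq}}\sum_{a\in \CS_i} \vA^a \vX \vA^{a},
\end{equation}
where the Paulis are Hermitian and unitary. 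This reduces the problem to bounding each conjugation $\vX\mapsto \vA^a\vX\vA^a$ uniformly in $a$, and then averaging by convexity.

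\textbf{Conjugation bound.} We bound $\|\vA\vX\vA\|_{\vrho}$ via KMS H\"older (\cref{lem:kms-holder}) on the left. Since
\begin{equation}
\|\vY\|_{\vrho}^2=\|\vrho^{1/4}\vY\vrho^{1/4}\|_2^2 \quad\text{and}\quad \|\vY\|_{\vrho}=\|\vY^\dagger\|_{\vrho},
\end{equation}
the symmetric right-multiplication version also holds:
\begin{equation}
\|\vY\vA\|_{\vrho}\le \|\vrho^{-1/4}\vA\vrho^{1/4}\|\,\|\vY\|_{\vrho}.
\end{equation}
For the right-hand bound, write $\vrho^{1/4}\vY\vA\vrho^{1/4} = (\vrho^{1/4}\vY\vrho^{1/4})(\vrho^{-1/4}\vA\vrho^{1/4})$ and apply $\|MN\|_2\le \|M\|_2\|N\|$. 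Combining the two sides gives
\begin{equation}
\|\vA\vX\vA\|_{\vrho}\le \|\vrho^{1/4}\vA\vrho^{-1/4}\|\cdot\|\vrho^{-1/4}\vA\vrho^{1/4}\|\cdot\|\vX\|_{\vrho}.
\end{equation}

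\textbf{Complex-time identification.} Next we identify the two weights as complex-time evolutions. Since $\vrho\propto e^{-\beta\vH}$, we have $\vrho^{1/4}\vA\vrho^{-1/4}=e^{-\beta\vH/4}\vA e^{\beta\vH/4}=\vA(z)$ with $z=i\beta/4$, so that $|z|=\beta/4$. The other weight is $\vA(-z)$. \cref{def:cte} then gives $\|\vA(\pm z)\|\le \|\vA\|+\mathsf{conv}_{\beta/4}=1+\mathsf{conv}_{\beta/4}$. Averaging over $a\in\CS_i$ yields $\|\mathcal{D}_i\|_{\vrho}\le(1+\mathsf{conv}_{\beta/4})^2$, which concludes the argument.

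\textbf{Main checks.} The main obstacle is careful bookkeeping rather than any substantive difficulty. Three points need checking. First, the left/right H\"older variants carry exactly quarter powers, matching the symmetric $\vrho^{1/2}\cdot\vrho^{1/2}$ split of the KMS inner product. Second, the $z$ appearing in the weights has modulus $\beta/4$, consistent with the supremum over $|z|\le\beta/4$ in $\mathsf{conv}_{\beta/4}$. Third, the Pauli twirl identity is valid for local dimension $2^\sq$ with normalization $4^{-\sq}$.

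If one prefers to avoid the twirl, an alternative route is to note that $\mathcal{D}_i$ is a unital channel. However, $\vrho$ does not factor across site $i$, so the twirl-plus-H\"older approach above is the one I would commit to.
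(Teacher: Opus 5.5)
Your proof is correct and is essentially the paper's argument. The paper writes $\delta_i=\frac{1}{2\cdot 4^{\sq}}\sum_{a\in\CS_i}[\vA^a,[\vA^a,\cdot\,]]$, which is the same Pauli-twirl decomposition as your $\mathrm{id}-\frac{1}{4^{\sq}}\sum_a \vA^a(\cdot)\vA^a$, and it bounds each conjugation by KMS H\"older with the weights $e^{\pm\beta\vH/4}\vA^a e^{\mp\beta\vH/4}$ identified as complex-time evolutions at $|z|=\beta/4$. The only difference is bookkeeping: you split off the identity and use the triangle inequality, getting the slightly sharper $1+(1+\mathsf{conv}_{\beta/4})^2$, whereas the paper bounds all four terms of each double commutator uniformly and gets $2(1+\mathsf{conv}_{\beta/4})^2$.
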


Next, we turn to the locality in temperature condition.

\begin{lemma}
    [Locality in temperature]
    \label{lemma:Kt-approx}
    For each site $i\in [n]$, the generator $\CK_i$ at inverse-temperature $\beta$ satisfies the approximation:
    \begin{equation}
         \|\CK_i^{(\beta)} -\CK_i^{(0)}\|_{\vrho} \leq 2^{2\sq+3}\cdot \mathsf{conv}_{\beta/4}\cdot (1+\mathsf{conv}_{\beta/4})
    \end{equation}
\end{lemma}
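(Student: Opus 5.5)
We compare $\CK_a^{(\beta)}$ and $\CK_a^{(0)}$ one Pauli $a\in\CS_i$ at a time and then sum over the $4^\sq$ Paulis on site $i$. At $\beta=0$ we have $\vrho\propto\vI$, so
\begin{equation}
\CK_a^{(0)}[\vX]=[\vA^a,\vX]\vA^{a\dagger}-\vA^{a\dagger}[\vA^a,\vX].
\end{equation}
Write $\vA_+:=\vrho^{1/2}\vA^{a\dagger}\vrho^{-1/2}$ and $\vA_-:=\vrho^{-1/2}\vA^{a\dagger}\vrho^{1/2}$. Both are complex-time evolutions of $\vA^{a\dagger}$ at imaginary times $\mp i\beta/2$, since $\vrho^{s}\propto e^{-s\beta\vH}$. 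The difference then splits into two terms,
\begin{equation}
(\CK_a^{(\beta)}-\CK_a^{(0)})[\vX]=[\vA^a,\vX](\vA_+-\vA^{a\dagger})-(\vA_--\vA^{a\dagger})[\vA^a,\vX],
\end{equation}
and we bound the KMS norm of each.

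\textbf{Key step.} We bound $\|\vY\vB\|_{\vrho}$ and $\|\vB\vY\|_{\vrho}$ with $\vY=[\vA^a,\vX]$ and $\vB=\vA_\pm-\vA^{a\dagger}$, using \cref{lem:kms-holder}. For right multiplication, use the symmetric form of KMS H\"older, obtained from $\|\vZ\|_{\vrho}=\|\vZ^\dagger\|_{\vrho}$:
\begin{equation}
\|\vY\vB\|_{\vrho}\le\|\vrho^{-1/4}\vB\vrho^{1/4}\|\,\|\vY\|_{\vrho}.
\end{equation}
The prefactors are then the norms of $\vrho^{\mp1/4}(\vA_\pm-\vA^{a\dagger})\vrho^{\pm1/4}$. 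Conjugating by $\vrho^{\pm1/4}$ shifts the imaginary time by $\beta/4$, so these equal $\|\vA^{a\dagger}(z)-\vA^{a\dagger}(z')\|$ with $|z|=\beta/4$ and $|z'|=\beta/4$ (the net shift is $\beta/2-\beta/4$ against $\beta/4$). By the triangle inequality through $\vA^{a\dagger}$ this is at most $2\,\mathsf{conv}_{\beta/4}$; the Pauli can be taken Hermitian, so \cref{def:cte} applies. This keeps every complex time within radius $\beta/4$, which is why only $\mathsf{conv}_{\beta/4}$ appears.

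\textbf{Commutator factor.} It remains to bound $\|[\vA^a,\vX]\|_{\vrho}\le\|\vA^a\vX\|_{\vrho}+\|\vX\vA^a\|_{\vrho}$. By KMS H\"older each term is at most $\|\vrho^{\pm1/4}\vA^a\vrho^{\mp1/4}\|\,\|\vX\|_{\vrho}\le(1+\mathsf{conv}_{\beta/4})\|\vX\|_{\vrho}$. This gives $\|\CK_a^{(\beta)}-\CK_a^{(0)}\|_{\vrho}\le 2\cdot 2\mathsf{conv}_{\beta/4}\cdot 2(1+\mathsf{conv}_{\beta/4})$. Summing over the $4^\sq=2^{2\sq}$ Paulis in $\CS_i$ yields $2^{2\sq+3}\mathsf{conv}_{\beta/4}(1+\mathsf{conv}_{\beta/4})$.

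\textbf{Main obstacle.} The delicate point is the bookkeeping of the $\vrho^{\pm1/4}$ conjugations in the Key step. One must choose the left or right form of KMS H\"older for each of the two terms so that the effective imaginary times stay within $\beta/4$, rather than $\beta/2$ or $3\beta/4$. If a shift of $\beta/2$ survives, I would rewrite $\vA_\pm-\vA^{a\dagger}$ symmetrically as $\vrho^{1/4}(\vA'-\vA'')\vrho^{-1/4}$ with $\vA',\vA''$ both quarter-evolved, and absorb the outer $\vrho^{\pm1/4}$ into the KMS norm of $\vY$. This uses the identity $\|\vrho^{-1/4}\vZ\vrho^{1/4}\|$-type manipulations inside $\tr[\vZ^\dagger\vrho^{1/2}\vZ\vrho^{1/2}]$.
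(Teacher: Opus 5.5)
Your proposal is correct and follows essentially the same route as the paper: the same two-term decomposition of $\CK_a^{(\beta)}-\CK_a^{(0)}$, and KMS H\"older with the $\vrho^{\pm 1/4}$ conjugations placing both difference factors at imaginary times of modulus $\beta/4$, which gives $2\,\mathsf{conv}_{\beta/4}$ by the triangle inequality through $\vA^a$. It then uses a factor $(1+\mathsf{conv}_{\beta/4})$ from the commutator and a sum over the $4^\sq$ Paulis on site $i$, reaching the same constant $2^{2\sq+3}$. Your explicit right-multiplication form of KMS H\"older (via $\|\vZ\|_{\vrho}=\|\vZ^\dagger\|_{\vrho}$) is what the paper uses implicitly, and the fallback in your ``Main obstacle'' paragraph is unnecessary because your Key step already keeps every shift within $\beta/4$.
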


Finally, we turn to the mean-field approximation.

\begin{lemma}
    [Mean-field approximation for $\CK$]
    \label{lemma:Kmf-approx}
    For any pair of sites $i, j\in [n]$, the generator $\CK_i$ at inverse-temperature $\beta$ satisfies the approximation
    \begin{equation}
          \|[\delta_j, \CK_i]\|_{\vrho} \leq  2^{2\sq+2}\cdot (1+\mathsf{conv}_{\beta/4})^2\cdot \mathsf{corr}_{\beta/4} (i, j)
    \end{equation}
    
\end{lemma}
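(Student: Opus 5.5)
We need to bound $\|[\delta_j,\CK_i]\|_{\vrho}$. The plan is to rewrite $\CK_a$ in terms of imaginary-time evolved jumps and then use the fact that $\delta_j$ commutes with left and right multiplication by operators not supported on $j$. Write $\vA' := \vrho^{1/2}\vA^{a\dagger}\vrho^{-1/2} = \vA^{a\dagger}(i\beta/2)$ (up to sign conventions for $z$) and $\vA'' := \vrho^{-1/2}\vA^{a\dagger}\vrho^{1/2} = \vA^{a\dagger}(-i\beta/2)$. Then
\[
\CK_a[\vX] = \vA^a\vX\vA' - \vX\vA^a\vA' - \vA''\vA^a\vX + \vA''\vX\vA^a .
\]
Each term has the form $\vX\mapsto \vL\vX\vR$ with $\vL,\vR\in\{\vI,\vA^a,\vA',\vA'',\ldots\}$. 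Since $\vA^a$ is supported on $i\neq j$, left and right multiplication by $\vA^a$ commute with $\delta_j = \mathrm{id}-2^{-\sq}\tr_j\otimes\vI_j$. Hence $[\delta_j,\CK_a]$ only picks up contributions from the non-local factors $\vA',\vA''$, via $[\delta_j,\mathcal{M}_\vR]$ where $\mathcal{M}_\vR[\vX]=\vX\vR$.

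The key identity to use is the twirl representation $\tr_j[\vY]\otimes\vI_j/2^\sq = 4^{-\sq}\sum_{b\in\CS_j}\vB^b\vY\vB^b$. With it,
\[
[\delta_j,\mathcal{M}_\vR][\vX] = -4^{-\sq}\sum_{b}\vB^b\vX[\vB^b,\vR],
\]
and similarly for left multiplications. So $[\delta_j,\CK_a][\vX]$ becomes a sum over $b\in\CS_j$ of at most four terms, each containing a commutator $[\vB^b,\vA^{a\dagger}(\pm i\beta/2)]$ multiplied by single-site Paulis and one further factor ($\vA^a$ or $\vA''$).

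The main obstacle is bounding these terms in the $\vrho$-norm while keeping only $\beta/4$-evolutions. I would conjugate by $\vrho^{\pm1/4}$: write everything as $\vrho^{1/4}(\cdot)\vrho^{-1/4}$ and apply \cref{lem:kms-holder} (together with its right-multiplication analogue obtained by taking adjoints) to strip off the outer factors. Each stripped factor costs $\|\vrho^{1/4}\vY\vrho^{-1/4}\|$, where $\vY$ is either a single-site Pauli or $\vA''$. This yields terms like $\|\vB^b(\pm i\beta/4)\|\le 1+\mathsf{conv}_{\beta/4}$. The commutator factor, after conjugation, has the form $\vrho^{1/4}[\vB^b,\vA^{a\dagger}(i\beta/2)]\vrho^{-1/4} = [\vB^b(z),\vA^{a\dagger}(z')]$ with $|z|,|z'|\le\beta/4$ arranged symmetrically. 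Concretely, I would split $\vrho^{1/2}=\vrho^{1/4}\vrho^{1/4}$ so that the commutator becomes $[\vA^a(z)^\dagger,\vB^b(z)]$ at a common $z$ of modulus $\beta/4$, sandwiched between bounded conjugations. It is bounded by $\mathsf{corr}_{\beta/4}(i,j)$. Since $\|\delta_j\|$ is not needed, no further loss arises.

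Finally, I would count constants. There are $4^\sq$ choices of $a\in\CS_i$ and a $4^{-\sq}$ prefactor from the twirl with $4^\sq$ terms in $b$, which cancel. There are at most four terms from the commutator structure, and the $(1+\mathsf{conv}_{\beta/4})^2$ factors come from the two stripped conjugations. Summing over $a$ gives the advertised $2^{2\sq+2}(1+\mathsf{conv}_{\beta/4})^2\mathsf{corr}_{\beta/4}(i,j)$. If the symmetric splitting fails for some term, the fallback is to use $\sup_{|z|\le\beta/4}$ in the definition of $\mathsf{corr}$, which allows unequal but bounded complex times.
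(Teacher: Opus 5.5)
Your proposal is correct and follows essentially the paper's route: you expand $\delta_j$ via the Pauli twirl on site $j$, use $[\vA^a,\vB^b]=0$ so that only commutators $[\vB^b,\vrho^{\pm 1/2}\vA^{a\dagger}\vrho^{\mp 1/2}]$ survive, and apply KMS H\"older on each side to reach $[\vA^a(z)^\dagger,\vB^b(z)]$ with $|z|=\beta/4$. Your four ungrouped terms each cost $(1+\mathsf{conv}_{\beta/4})^2\,\mathsf{corr}_{\beta/4}(i,j)$, which gives the same constant as the paper's two grouped terms $\vB^b[\vA^a,\vX][\vrho^{1/2}\vA^{a\dagger}\vrho^{-1/2},\vB^b]$ and $[\vB^b,\vrho^{-1/2}\vA^{a\dagger}\vrho^{1/2}][\vA^a,\vX]\vB^b$ together with $\|[\vA^a,\vX]\|_{\vrho}\le 2(1+\mathsf{conv}_{\beta/4})\|\vX\|_{\vrho}$.

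One slip to fix: a right-multiplied commutator must be conjugated as $\vrho^{-1/4}(\cdot)\vrho^{1/4}$, which is what your adjoint form of H\"older gives. Your displayed $\vrho^{1/4}[\vB^b,\vrho^{1/2}\vA^{a\dagger}\vrho^{-1/2}]\vrho^{-1/4}$ would instead produce a $3\beta/4$ evolution. Your fallback is also not available, since $\mathsf{corr}$ fixes a single $z$ with $\vA^a(z)^\dagger=e^{i\bar z\vH}\vA^{a\dagger}e^{-i\bar z\vH}$; with the correct conjugations, however, it is never needed.
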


Put together, we can now completely recast the Dobrushin condition of \cref{lemma:d-conditions} in terms of operator-growth conditions. 

\begin{corollary}\label{corr:cte-dobrushin}
    Assume that the parameters $\mathsf{conv}_{\beta}, \mathsf{corr}_{\beta}$ from \cref{def:cte,def:pairwise-corr} satisfy the inequality
    \begin{equation}
        2^4\cdot \mathsf{conv}_{\beta/4}+2^2\cdot \mathsf{corr}_{\beta/4} <1.
    \end{equation}
    Then, $\CK$ is gapped. 
\end{corollary}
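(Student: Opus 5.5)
We will instantiate \cref{lemma:d-conditions} with the parameters supplied by \cref{lemma:tr-induced-norm,lemma:Kt-approx,lemma:Kmf-approx}, and then conclude with \cref{lemma:osc-to-gap}. The first step is to fix the constant $\lambda$ in \cref{assumption:dobrushin}. At $\beta=0$ we have $\vrho\propto\vI$, so $\CK_a^{(0)}[\vX]=[\vA^a,\vX]\vA^a-\vA^a[\vA^a,\vX]=-[\vA^a,[\vA^a,\vX]]$. Summing over the $4^\sq$ single-site Paulis on site $i$ (identity included, which contributes zero) and using the Pauli twirl identity $\sum_a \vA^a\vX\vA^a=2^\sq\tr_i[\vX]$ gives
\begin{equation}
\CK_i^{(0)}[\vX]=-2\sum_{a}\big(\vX-\vA^a\vX\vA^a\big)=-2\cdot 4^\sq\,\delta_i[\vX],
\end{equation}
so $\lambda=2\cdot 4^\sq$. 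With this choice, $\|\CK_i+\lambda\delta_i\|_{\vrho}=\|\CK_i^{(\beta)}-\CK_i^{(0)}\|_{\vrho}$. Therefore \cref{lemma:Kt-approx} supplies $\eta_i=2^{2\sq+3}\mathsf{conv}_{\beta/4}(1+\mathsf{conv}_{\beta/4})$, and \cref{lemma:Kmf-approx} supplies $\kappa_{ij}=2^{2\sq+2}(1+\mathsf{conv}_{\beta/4})^2\mathsf{corr}_{\beta/4}(i,j)$.

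Next we substitute these into $\gamma$. Using \cref{lemma:tr-induced-norm} for $\|\delta_i\|_{\vrho}$ and writing $c:=\mathsf{conv}_{\beta/4}$,
\begin{equation}
\gamma\le 2(1+c)^2\cdot 2^{2\sq+3}c(1+c)+2^{2\sq+2}(1+c)^2\,\mathsf{corr}_{\beta/4}
=\lambda\,(1+c)^2\big(2^3 c(1+c)+2\,\mathsf{corr}_{\beta/4}\big),
\end{equation}
since $\max_i\sum_{j\neq i}\mathsf{corr}_{\beta/4}(i,j)=\mathsf{corr}_{\beta/4}$. The condition $\gamma<\lambda$ becomes $(1+c)^2\big(8c(1+c)+2\,\mathsf{corr}_{\beta/4}\big)<1$.

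The main obstacle is exactly this step: the hypothesis $2^4c+2^2\mathsf{corr}_{\beta/4}<1$ has to absorb the $(1+c)^k$ factors. The hypothesis forces $c<1/16$, so each factor $(1+c)$ is a constant close to $1$. First, $8c(1+c)\le 8c\cdot\tfrac{17}{16}$. Second, $(1+c)^2\le(17/16)^2$. Third, we rewrite $8c+2\,\mathsf{corr}=\tfrac12\big(16c+4\,\mathsf{corr}\big)<\tfrac12$. Together these give $(1+c)^2\big(8c(1+c)+2\,\mathsf{corr}\big)\le(17/16)^3\cdot\tfrac12<1$. So the stated constants $2^4$ and $2^2$ leave exactly the slack of a factor of two that is needed.

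With $\gamma<\lambda$ established, \cref{lemma:d-conditions} gives decay of the KMS oscillator norm at rate $\lambda-\gamma>0$. Since $\CK$ is $\vrho$-detailed balanced by \cref{lemma:K-properties}, \cref{lemma:osc-to-gap} (applied with $\CF=\CK$) then yields $\mathsf{gap}(\CK)\ge\lambda-\gamma>0$, independent of $n$.
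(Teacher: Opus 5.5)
Your proposal is correct and follows the paper's proof. You take $\lambda=2\cdot 4^\sq$ and substitute the bounds of \cref{lemma:tr-induced-norm,lemma:Kt-approx,lemma:Kmf-approx} into \cref{lemma:d-conditions}, which reduces the condition to $8c(1+c)^3+2(1+c)^2\,\mathsf{corr}_{\beta/4}<1$; you then absorb the $(1+c)$ factors using the smallness of $c$ forced by the hypothesis (you use $c<1/16$, the paper uses $c\le 1/8$), with your explicit estimate $(17/16)^3\cdot\tfrac12<1$ spelling out the paper's ``arithmetic simplifications,'' and you conclude via \cref{lemma:osc-to-gap} exactly as the paper does.
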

\begin{proof}
    Substitution of the bounds above in \cref{lemma:d-conditions} gives:
    \begin{equation}
        2^3\cdot \mathsf{conv}_{\beta/4}\cdot (1+\mathsf{conv}_{\beta/4})^3+2\cdot (1+\mathsf{conv}_{\beta/4})^2\cdot \mathsf{corr}_{\beta/4} <1.
    \end{equation}
    The observation that this bound is trivial unless $ \mathsf{conv}_{\beta/4}\leq 8^{-1}$ and arithmetic simplifications gives the desired result. 
\end{proof}

\subsubsection{Deferred proofs}
\label{section:dob-def-proof}

\begin{proof}

    [Proof of \cref{lemma:tr-induced-norm}]
    We first note that $\delta_i$ is proportional to the generator of the depolarizing semigroup:
    \begin{align}
        \delta_i[\vX]=\vX-\frac{1}{2^\sq}\tr_i[\vX] = \frac{1}{2\cdot 4^\sq}\sum_{a\in \CS_i} [\vA^a, [\vA^a, \vX]]
    \end{align}
    Consequently, the induced superoperator $\vrho$ norm satisfies:
    \begin{equation}
        \|\delta_i\|_{\vrho} \leq \frac{1}{2\cdot 4^\sq}\cdot 4^\sq \cdot 4\cdot \max_{a \in \CS^1_{[n]}}\sup_{z\in [-\beta, \beta]}\|e^{z \vH/4}\vA^a e^{- z\vH/4}\|^2 \leq 2(1+\mathsf{conv}_{\beta/4})^2,
    \end{equation}
    by application of KMS H\"older's inequality \cref{lem:kms-holder}. 

\end{proof}

\begin{proof}

    [Proof of \cref{lemma:Kt-approx}]
    Let us restrict our attention to a single site jump operator $a\in \CS_i$.
    From the KMS-H\"older's inequality \cref{lem:kms-holder} we have, assuming $\|\vX\|_{\vrho}=1$ and using the explicit form of $\CK_a$ \eqref{eq:CK}:
    \begin{align}
        \|(\CK_a^{(\beta)} -\CK_a^{(0)})[\vX]\|_{\vrho} \leq 4\cdot \sup_{z\in [-\beta, \beta]} \| e^{z\vH/4}(\vA^a(iz/2)-\vA^a)e^{-z\vH/4}\|\cdot \sup_{z\in [-\beta, \beta]} \| \vA^a(iz/4)\|\label{eq:kt-approx-holder}
    \end{align}
    The second term on the RHS above is precisely the convergence of complex time evolution. 
    \begin{equation}
        \sup_{z\in [-\beta, \beta]} \| \vA^a(iz/4)\| \leq 1+\mathsf{conv}_{\beta/4}
    \end{equation}
    In turn, the first term in \eqref{eq:kt-approx-holder} follows by the triangle inequality, and again invoking convergence of complex-time evolution. 
    \begin{equation}
         \|e^{z\vH/4}(\vA^a(iz/2)-\vA^a)e^{-z\vH/4}\| \leq 2\cdot \sup_{z\in [-\beta, \beta]} \|\vA^a(iz/4)-\vA^a\| \leq 2\cdot \mathsf{conv}_{\beta/4}
    \end{equation}
    Incorporating the sum over the $4^\sq$ jump operators $a\in \CS_i$ on $i$ then gives the desired bound. 
\end{proof}

\begin{proof}

    [Proof of \cref{lemma:Kmf-approx}] We require an expansion of 
    \begin{equation}
         \delta_j[\vX]=\vX-\frac{1}{2^\sq}\tr_j[\vX] = \frac{1}{ 4^\sq}\sum_{b\in \CS_j} \vX-\vB^b\vX\vB^b.
    \end{equation}
    For simplicity, let us focus our attention on a single local jump operator in $\CK_i=\sum_{a\in \CS_i} \CK_a$. Under the expansion above, we can simplify the desired commutator norm as:
    \begin{align}
        \|[\delta_j, \CK_a][\vX]\|_{\vrho} \leq 4^{-\sq}\sum_{b\in \CS_j} \|\vB^b\CK_a[\vX ](\vB^b)^\dagger -\CK_a[\vB^b\vX (\vB^b)^\dagger]\|_{\vrho}
    \end{align}
    We now fix our attention on a single term $b\in \CS_j$ as well. By means of the explicit formula for $\CK_a$ \eqref{eq:CK}, and KMS H\"older's inequality:
    \begin{align}
        &\|\vB^b\CK_a[\vX ](\vB^b)^\dagger -\CK_a[\vB^b\vX (\vB^b)^\dagger]\|_{\vrho}\\
        = &\bigg\| \vB^b[\vA^a, \vX][\vA^a(iz/2)^{\dagger}, \vB^{b \dagger}] - [\vB^b, \vA^{a}(-iz/2)^{\dagger}][\vA^a, \vX]\vB^b\bigg\|_{\vrho}\\
        \leq  &4\cdot (1+\mathsf{conv}_{\beta/4})^2 \cdot \max_{b\in \CS_j} \sup_{z\in [-\beta, \beta]}   \|e^{z\vH/4}[\vA^{a\dagger}(iz/2), \vB^{b \dagger}]e^{-z\vH/4}\| \\
        = &4\cdot (1+\mathsf{conv}_{\beta/4})^2 \cdot\sup_{\substack{z\in [-\beta, \beta]\\ b\in \CS_j}} \|[\vA^{a\dagger}(iz/4), \vB^b(iz/4)]\| \\\leq & 4\cdot (1+\mathsf{conv}_{\beta/4})^2\cdot \mathsf{corr}_{\beta/4} (i, j)
    \end{align}
Incorporating the sums over $a\in \CS_i, b\in \CS_j$ then gives the claimed result.
    
\end{proof}

\subsection{\texorpdfstring{$\CK$}{K} is gapped at high temperatures}
\label{section:K-is-gapped}

We dedicate this subsection to collecting the ingredients from the previous subsections in order to prove \cref{theorem:K-is-gapped}.

\begin{proof}

    [Proof of \cref{theorem:K-is-gapped}] We collect the statements from the sections above in the following sequence:

    \begin{enumerate}
        \item From the convergence of the cluster expansion \cref{lemma:conv-cluster}, we have that so long as $\beta< \mathsf{c} [\sk^2\sJ\sd]^{-1}$ for a suitable universal constant $\mathsf{c}$, the convergence of complex-time evolution constants \cref{def:cte}, \cref{def:pairwise-corr} satisfy:
        \begin{equation}
            \mathsf{conv}_{\beta/4}, \mathsf{corr}_{\beta/4} \leq \frac{1}{100}.
        \end{equation}
        \item From the relationship between convergence of complex-time evolution and the quasi-locality of $\CK$ (\cref{lemma:Kt-approx}, \cref{lemma:Kmf-approx}), the local generators $\CK_i$ satisfy the guarantees of the Dobrushin condition for $\CK$ (\cref{assumption:dobrushin}). 
        
        \item In particular, the conditions of \cref{corr:cte-dobrushin}
        \begin{equation}
            2^4\cdot \mathsf{conv}_{\beta/4}+2^2\cdot \mathsf{corr}_{\beta/4} <1.
        \end{equation}
        are satisfied, and thus $\CK$ admits contraction in oscillator norm.  

        \item From \cref{lemma:osc-to-gap}, the decay in oscillator norm implies a spectral gap. 
    \end{enumerate}
\end{proof}

\section{Dirichlet form comparison arguments}
\label{section:d-form-comparison}

We dedicate this subsection to relating the spectral gaps of the  generators $\CL:=\CL_{\mathsf{CKG}}$ \cite{chen2023efficient} and $ \CK$, in the high-temperature regime. We refer the reader to \cref{section:prelim} for a description of $\CL$ and the operator Fourier transform. The main result of this subsection is the following theorem:

\begin{theorem}
    [Comparing $\CL_{\mathsf{CKG}}$ and $\CK$]
    \label{theorem:K-to-ckg} Let $\vH$ be an arbitrary $\sk$-local Hamiltonian of bounded degree $\sd$ and maximum pairwise interaction strength $\sJ$ (\cref{def:all-to-all}). Then, there exists a pair of universal constants $\mathsf{c}_1\geq 1, \mathsf{c}_2> 0$ such that for any inverse-temperature $\beta < [\sJ\sd \mathsf{c}_1^{\sq\sk}]^{-1}$:

    \begin{equation}
      \forall \vO:\quad   \langle \vO,-\CL^\dagger[\vO]\rangle_{\vrho} \geq \mathsf{c}_2^{\sq\sk}\cdot  \langle \vO,-\CK[\vO]\rangle_{\vrho}
    \end{equation}
    with $\CK, \CL$ the generators defined by single-site Pauli jumps and the energy filter $\sigma=\beta^{-1}$ in $\CL$.
\end{theorem}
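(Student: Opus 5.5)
We will compare the two Dirichlet forms term by term, one single-site jump $a\in\CS^1_{[n]}$ at a time, and then sum. Write the Dirichlet form of $\CK_a$ as a KMS-weighted squared commutator, $\langle \vO,-\CK_a[\vO]\rangle_{\vrho}=\|[\vA^a,\vO]\|^2_{\vrho}$ up to the $\vrho^{\pm1/4}$ conjugations implicit in \eqref{eq:CK}. For the \cite{chen2023efficient} Lindbladian with the Metropolis weight and $\sigma=\beta^{-1}$, we use the standard representation of the Dirichlet form as
\begin{equation*}
\int\!\!\int \gamma(\omega)\,g(\omega,t)\,\big\|[\hat{\vA}^a(\omega,t),\vO]\big\|^2_{\vrho}\,\rd\omega\,\rd t,
\end{equation*}
with a nonnegative kernel $g$, obtained from the KMS detailed-balance structure. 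Equivalently, it is an average of squared commutators with time-evolved jumps $\vA^a(t)$ against a positive, integrable kernel concentrated on $|t|\lesssim\beta$. The plan is to ``peel off'' $\omega$ first and then $t$, in two steps.

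First, we restrict $\omega$ to a window $|\omega|\lesssim\sigma=\beta^{-1}$, where $\gamma_{\mathsf M}(\omega)$ is bounded below by a constant. Using $\int|\hat f(\omega)|^2\rd\omega=1$ and Plancherel, we lower-bound the $\omega$-integrated form by a constant times $\int |f(t)|^2\,\|[\vA^a(t),\vO]\|_{\vrho}^2\,\rd t$, minus a tail. The tail is controlled by KMS Hölder (\cref{lem:kms-holder}) together with $\mathsf{conv}_{\beta}$. Because $\sigma=\beta^{-1}$, the filter $f$ is supported on times $|t|\lesssim\beta$, which lie inside the convergence radius of the cluster expansion (\cref{lemma:conv-cluster}).

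Second, we use Duhamel. Write $\vA(t)-\vA=i\int_0^t[\vH,\vA](s)\,\rd s$ and apply the reverse triangle inequality together with $(x-y)^2\ge x^2/2-y^2$. This gives
\begin{equation*}
\|[\vA(t),\vO]\|^2_{\vrho}\ \ge\ \tfrac12\|[\vA,\vO]\|^2_{\vrho}\ -\ t\int_0^t\big\|[\,[\vH,\vA](s),\vO]\big\|^2_{\vrho}\,\rd s .
\end{equation*}
We expand $[\vH,\vA]=\sum_{e\ni i}[\vh_e,\vA]$ into Pauli strings supported on the edges $e\ni i$.

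The error term is the main obstacle. It must be charged back to single-site Dirichlet forms through a path-comparison argument. For a Pauli string $\vP=\vP_1\cdots\vP_\sk$, we use the telescoping identity
\begin{equation*}
[\vP,\vO]=\sum_m \vP_1\cdots\vP_{m-1}[\vP_m,\vO]\vP_{m+1}\cdots\vP_\sk .
\end{equation*}
Conjugating by $e^{\pm s\vH}$ at complex times $|s|\le\beta$ and applying KMS Hölder bounds each prefactor by $(1+\mathsf{conv}_\beta)^{\sk}$. The factor $[\vP_m(s),\vO]$ is then a single-site commutator evolved to a small time. This costs a factor $4^{\sq\sk}$, coming from the Pauli expansion of $[\vh_e,\vA]$ in $\CS_e$, which is where the $\mathsf{c}_2^{\sq\sk}$ loss enters. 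It also costs a factor $t^2\sum_{e\ni i}\|\vh_e\|^2\lesssim\beta^2\sd\sJ\cdot\sJ$.

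Summing over $a$, each single-site term $\|[\vB^b,\vO]\|^2_{\vrho}$ on a site $j$ is charged by at most $\sd$ neighbouring sites $i$. The total error is therefore $\lesssim (\beta\sd\sJ)^2\,\mathsf{c}^{\sq\sk}\sum_b\|[\vB^b,\vO]\|^2_{\vrho}$. This is at most a quarter of the main term $\sum_a\|[\vA^a,\vO]\|^2$ once $\beta<[\sJ\sd\,\mathsf c_1^{\sq\sk}]^{-1}$.

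The delicate point is that the evolved $[\vP_m(s),\vO]$ is again time-evolved rather than bare. We handle this by replacing $\vP_m(s)$ with $\vP_m$, paying $\|\vP_m(s)-\vP_m\|\le\mathsf{conv}_\beta$, but that bound is against $\|\vO\|$ rather than a Dirichlet form. To avoid this, we first subtract from $\vO$ its partial trace on the relevant site, using $[\vP_m,\vO]=[\vP_m,\delta_j\vO]$ together with $\|\delta_j\vO\|^2_{\vrho}\lesssim\langle\vO,-\CK_j\vO\rangle_{\vrho}$. The latter is a local Poincaré inequality for $\CK_j$, which follows from \cref{lemma:Kt-approx} and \cref{lemma:tr-induced-norm} at small $\beta$. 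With this closure, every error is expressed through the single-site Dirichlet forms of $\CK$, and rearranging gives the claimed inequality.
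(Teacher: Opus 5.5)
Your architecture parallels the paper's: peel off $\omega$, then $t$, using Duhamel, a Pauli expansion of $[\vH,\vA]$, telescoping, and KMS H\"older. However, the closing step does not work.

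You charge every Duhamel error to the \emph{bare} commutators $\|[\vB,\vO]\|_{\vrho}$, i.e.\ the $\CK$ form. This forces you to bound the time-evolved $\|[\vP_m(s),\vO]\|_{\vrho}$ by bare single-site data. The identity $[\vP_m,\vO]=[\vP_m,\delta_j\vO]$ relies on $\vP_m$ being supported on site $j$. The correction $\vP_m(s)-\vP_m$ is instead supported on every cluster reachable from $j$. Hence $[\vP_m(s)-\vP_m,\vO]\neq[\vP_m(s)-\vP_m,\delta_j\vO]$, and no estimate of the form $\|[\vP_m(s),\vO]\|_{\vrho}\lesssim\|[\vP_m,\vO]\|_{\vrho}+\mathsf{conv}_\beta\|\delta_j\vO\|_{\vrho}$ can hold.

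For a concrete counterexample, take $\vH=\vX_j\vX_{j'}$, $\vP_m=\vZ_j$, $\vO=\vZ_{j'}$. Then $[\vP_m,\vO]=0$ and $\delta_j\vO=0$, so $\langle\vO,-\CK_j[\vO]\rangle_{\vrho}=0$. Yet $[\vZ_j(s),\vZ_{j'}]=-2i\sin(2s)\,\vY_j\vY_{j'}\neq0$.

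Your local Poincar\'e inequality for $\CK_j$ is correct; it follows from $\CK_j=\CK_j\circ\delta_j$, $\vrho$-self-adjointness and \cref{lemma:Kt-approx}. It simply controls the wrong object. Repairing your route would need a genuine multi-site localization:
\begin{itemize}
\item expand $\vP_m(s)-\vP_m$ via the cluster expansion;
\item remove the part of $\vO$ acting trivially on $\mathsf{supp}(\sF)$ by telescoping $\delta_{j'}$ over all $j'\in\mathsf{supp}(\sF)$, paying $\vrho$-norms of partial traces that are exponential in $|\mathsf{supp}(\sF)|$;
\item run a Schur-type summation over pairs $(j,j')$ weighted like $\mathsf{corr}_\beta(j,j')$, not just over the $\sd$ neighbours.
\end{itemize}
None of this is in the proposal.

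Your first step has the same defect. KMS H\"older with $\mathsf{conv}_\beta$ bounds the $\omega$-tail only against $\|\vO\|_{\vrho}$. The tail must instead be written via $\hat{\vA}(\omega)=\hat{f}_\sigma(\omega)\vA+(\hat{\vA}(\omega)-\hat{f}_\sigma(\omega)\vA)$, and it then becomes another Duhamel error that needs closing.

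The missing idea is the paper's self-bounding step, which never converts time-evolved commutators into bare ones. In \cref{gt-dirichlet}, the error $\|[\vB(s),\vO]\|_{\vrho}$ is left time-evolved. It is charged to the $g(t)$-Dirichlet form $\CC_j[\vO]$, the very quantity being lower-bounded, using $g(t)\le\beta\,g(x)g(t-x)$ and Cauchy--Schwarz. This gives $\CC\ge D-\epsilon\sqrt{D\,\CC}$ with $D=\sum_a\|[\vA^a,\vO]\|_{\vrho}^2$, and a case division yields $\CC\gtrsim D$.

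Likewise, in \cref{lemma:ckg-to-gt} the Duhamel error of the operator Fourier transform is bounded by $\CC$ itself, via $\int e^{-\sigma^2s^2}g(u-s)\,\rd s\lesssim g(u)/\sigma$. The comparison $\CE\ge(\tfrac12-\epsilon)\CC$ therefore closes without any spatial localization of $\vA(s)$.
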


As a corollary of \cref{theorem:K-is-gapped}, we have then that $\CL$ is gapped at sufficiently high temperatures in all-to-all systems, and thus conclude the proof of \cref{theorem:all-to-all-mixing}.

\begin{remark}
 The proof of \cref{theorem:K-to-ckg} is unfortunately not a direct reduction to the convergence of complex-time evolution. As such, we currently do not know how to establish fast-mixing directly from the convergence of the cluster expansion. Nevertheless, the comparison can be re-routed to hold at sufficiently high temperatures. 
\end{remark}

\subsection{The Dirichlet form of \texorpdfstring{\cite{chen2023efficient}}{[CKG23]}}

We require the following explicit formula for the Dirichlet form of $\CL$. 

\begin{lemma}[The Dirichlet Form of $\CL$ {~\cite[Lemma X.2]{chen2025GibbsMarkov}}]\label{lemma:Dirichlet}
    Fix a single Hermitian jump operator $\vA^a = \vA^{a, \dagger}$. The Dirichlet form for the local Lindbladian $\CL_a$ \cite{chen2023efficient} associated to the jump operator $\vA^a$ with frequency width $\sigma\in \mathbb{R}^+$, can be written as
\begin{align}
    \CE_{a}(\vO) &:= -\langle\vO,\CL_a^{\dagger}[\vO]\rangle_{\vrho}=  \iint_{-\infty}^{\infty} g(t)h(\omega) \cdot \norm{[\hat{\vA}^a(\omega,t),\vO]}^2_{\vrho} \cdot  \rd t \rd \omega.\label{eq:dirichlet_L}
\end{align}
where $\hat{\vA}^a(\omega,t):=e^{i\vH t}\hat{\vA}^a(\omega)e^{-i\vH t}$,  and $g(t) = \frac{1}{\beta\cosh(2\pi t/ \beta)}\ge 0.$ Further, the frequency filter $h(\omega)$ depends on the choice of transition rate $\gamma(\omega)$; here we focus on the Metropolis weight. 
\begin{align}
 h_\mathsf{M}(\omega) = e^{-\sigma^2\beta^2/8}e^{-|\omega|\beta/2}
\end{align}
\end{lemma}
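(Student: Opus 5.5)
The approach is to work in the Bohr-frequency (energy-eigenbasis) decomposition, reduce both sides of \eqref{eq:dirichlet_L} to bilinear forms in the same family of commutators, and check that their frequency kernels coincide. Write $\vH=\sum_E E\,\vP_E$ and set $\vA^a_\nu:=\sum_{E-E'=\nu}\vP_E\vA^a\vP_{E'}$. Then $\vA^a=\sum_\nu \vA^a_\nu$, and since $e^{i\vH t}\vA^a_\nu e^{-i\vH t}=e^{i\nu t}\vA^a_\nu$,
\begin{equation}
\hat{\vA}^a(\omega)=\sum_\nu \hat f_\sigma(\omega-\nu)\,\vA^a_\nu,\qquad \hat{\vA}^a(\omega,t)=\sum_\nu e^{i\nu t}\hat f_\sigma(\omega-\nu)\,\vA^a_\nu .
\end{equation}
Two structural facts will be used throughout:
\begin{itemize}
\item $\vrho^{s}\vA^a_\nu\vrho^{-s}=e^{-s\beta\nu}\vA^a_\nu$;
\item $(\vA^a_\nu)^\dagger=\vA^a_{-\nu}$, because $\vA^a$ is Hermitian.
\end{itemize}

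\textbf{Right-hand side.} Substituting the expansion into the right-hand side of \eqref{eq:dirichlet_L} gives
\begin{equation}
\sum_{\nu,\nu'} M(\nu,\nu')\,\big\langle [\vA^a_{\nu},\vO],[\vA^a_{\nu'},\vO]\big\rangle_{\vrho},
\qquad
M(\nu,\nu')=\hat g(\nu'-\nu)\int h(\omega)\,\hat f_\sigma(\omega-\nu)\hat f_\sigma(\omega-\nu')\,\rd\omega ,
\end{equation}
where $\hat g$ is the Fourier transform of $g$. Both factors have closed forms:
\begin{itemize}
\item $\hat g(x)\propto 1/\cosh(\beta x/4)$, since $\mathrm{sech}$ is self-dual under the Fourier transform up to rescaling;
\item the $\omega$-integral is a Gaussian of width $\sigma$ centered at $(\nu+\nu')/2$, integrated against $e^{-|\omega|\beta/2}$. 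It factors as $e^{-(\nu-\nu')^2/8\sigma^2}$ times an explicit function of $(\nu+\nu')/2$.
\end{itemize}

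\textbf{Left-hand side.} Next I would expand $-\langle \vO,\CL_a^\dagger[\vO]\rangle_{\vrho}$ in the same basis, using the definition of $\CL_a$ (transition, decay and coherent parts). The transition and decay parts combine in the standard way into
\begin{equation}
\sum_{\nu,\nu'}\gamma\text{-weighted}\ \big\langle [\vA^a_\nu,\vO],[\vA^a_{\nu'},\vO]\big\rangle_{\vrho}\ \text{-type terms},
\end{equation}
plus a remainder that is not of commutator form. I would then invoke KMS detailed balance of $\CL_a$, which is the defining property of the coherent term $\vC^a$ with kernel $c(t)$. Symmetrizing $\langle\vO,\CL^\dagger\vO\rangle_{\vrho}$ under this self-adjointness, the coherent part should exactly cancel the non-commutator remainder. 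In the energy basis this cancellation amounts to the identity $\hat c(x)\propto \tanh(\beta x/4)$. What survives is a single kernel $\tilde M(\nu,\nu')$ multiplying $\langle [\vA^a_\nu,\vO],[\vA^a_{\nu'},\vO]\rangle_{\vrho}$.

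\textbf{Matching the kernels.} The final step is to verify $\tilde M=M$ for the shifted Metropolis weight $\gamma_\mathsf{M}$. This is an explicit one-dimensional calculation: convolving $\gamma_\mathsf{M}$ with the Gaussian $\hat f_\sigma(\omega-\nu)\hat f_\sigma(\omega-\nu')$, and using the $+\beta\sigma^2/2$ shift in $\gamma_\mathsf{M}$, should complete a square. This produces $e^{-\sigma^2\beta^2/8}e^{-|\omega|\beta/2}$ after the change of variables $\omega\mapsto\omega-\beta\sigma^2/2$. The factor $\hat g$ should then arise from combining $\hat c$ with the Boltzmann factor $e^{-\beta(\nu-\nu')/4}$ produced by the $\vrho^{1/4}$-symmetrization.

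\textbf{Main obstacle.} I expect the hardest step to be the treatment of the coherent term: showing that $\vC^a$ cancels precisely the non-commutator remainder, and that the $\tanh$ and $\mathrm{sech}$ kernels recombine into the Fourier transform of $g$. I would handle it by checking the identity entrywise in $(\nu,\nu')$, reducing it to the scalar relation
\begin{equation}
\tfrac12\big(1+e^{-\beta x/2}\big)-\tanh(\beta x/4)\cdot(\cdots)\;\propto\; \mathrm{sech}(\beta x/4),\qquad x=\nu-\nu'.
\end{equation}
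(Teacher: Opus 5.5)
\textbf{The paper gives no proof of this lemma.} It is imported verbatim from \cite[Lemma X.2]{chen2025GibbsMarkov} and used as a black box in \cref{lemma:ckg-to-gt}. Your Bohr-frequency kernel matching is therefore a self-contained alternative, and it does close. However, you only sketch the left-hand side, so here is the bookkeeping that makes it work.

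Drop the label $a$ and write $T_{\nu_1\nu_2}$, $L_{\nu_1\nu_2}$, $R_{\nu_1\nu_2}$ for $\langle\vO,\vA_{\nu_1}^{\dagger}\vO\vA_{\nu_2}\rangle_{\vrho}$, $\langle\vO,\vA_{\nu_1}^{\dagger}\vA_{\nu_2}\vO\rangle_{\vrho}$ and $\langle\vO,\vO\vA_{\nu_1}^{\dagger}\vA_{\nu_2}\rangle_{\vrho}$. Moving $\vrho^{1/2}$ through the Bohr components gives
\[
\big\langle[\vA_{\nu_1},\vO],[\vA_{\nu_2},\vO]\big\rangle_{\vrho}=e^{-\beta\nu_1/2}\big(L_{\nu_1\nu_2}-T_{\nu_1\nu_2}\big)+e^{\beta\nu_1/2}\big(R_{-\nu_2,-\nu_1}-T_{-\nu_2,-\nu_1}\big).
\]
The re-indexing $(\nu_1,\nu_2)\mapsto(-\nu_2,-\nu_1)$ is exactly where $\vA_\nu^\dagger=\vA_{-\nu}$ enters, and evenness of $h_\mathsf{M}$ makes your $M$ invariant under it.

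On the other side,
\[
-\langle\vO,\CL_a^\dagger[\vO]\rangle_{\vrho}=\sum\alpha_{\nu_1\nu_2}\big(\tfrac12 L+\tfrac12 R-T\big)_{\nu_1\nu_2}-i\sum c_{\nu_1\nu_2}(L-R)_{\nu_1\nu_2},
\]
with $\alpha_{\nu_1\nu_2}=\int\gamma_\mathsf{M}(\omega)\hat f_\sigma(\omega-\nu_1)\hat f_\sigma(\omega-\nu_2)\rd\omega$. The principal-value transform of $c(t)$ gives $c_{\nu_1\nu_2}=\tfrac{i}{2}\tanh(\beta(\nu_2-\nu_1)/4)\,\alpha_{\nu_1\nu_2}$. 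Coefficient matching then needs only three facts.

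First, completing the square gives $\alpha_{\nu_1\nu_2}=e^{-\beta(\nu_1+\nu_2)/4}\int h_\mathsf{M}(\omega)\hat f_\sigma(\omega-\nu_1)\hat f_\sigma(\omega-\nu_2)\rd\omega$. The prefactor $e^{-\beta(\nu_1+\nu_2)/4}$ is essential. Your description omits it and reads as if the shift turned $\gamma_\mathsf{M}$ into $h_\mathsf{M}$ outright.

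Second, $\hat g(x)=\tfrac12\mathrm{sech}(\beta x/4)$, combined with $e^{-\beta\nu_1/2}+e^{-\beta\nu_2/2}=2e^{-\beta(\nu_1+\nu_2)/4}\cosh(\beta(\nu_1-\nu_2)/4)$, settles the $T$-coefficients. The coherent term plays no role here.

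Third, with $y=\beta(\nu_1-\nu_2)/4$, the $L$ and $R$ coefficients reduce to $1\mp\tanh y=e^{\mp y}\mathrm{sech}\,y$. This is where the coherent term acts. It is precisely your placeholder relation with $(\cdots)=\tfrac12(1-e^{-\beta x/2})$, i.e.\ $\cosh^2-\sinh^2=1$.

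This cancellation comes from the explicit $\tanh$ kernel of $\vC^a$, not from symmetrizing with detailed balance. Indeed, $i[\vC^a,\cdot]$ is not KMS-anti-self-adjoint, since $\vC^a$ has no zero-frequency component. So it genuinely contributes to the real Dirichlet form.

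Compared with the citation, your route costs a page of bookkeeping. In exchange it makes visible exactly what the formula uses: Hermiticity of the jump, evenness of $h_\mathsf{M}$, the $\beta\sigma^2/2$ Metropolis shift (which produces the $e^{-\sigma^2\beta^2/8}$ prefactor), and the $\mathrm{sech}/\tanh$ Fourier duals of $g$ and $c$.
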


Following the intuition in \cite{BC26}, the reason the Dirichlet form comparison in \cref{theorem:K-to-ckg} could even be possible is because the kernel of the local generator $\CK_a$ contains that of $\CL_a^\dagger$:
\begin{equation}
    \langle\vO,\CL_a^{\dagger}[\vO]\rangle_{\vrho}=0 \iff  \forall \omega, t:\quad [\hat{\vA}^a(\omega, t), \vO] = 0 \Rightarrow [\vA^a, \vO] = 0 \iff \langle \vO,\CK_a^{\dagger}[\vO]\rangle_{\vrho}=0
\end{equation}
while this suggests that the rate of change of observables under $\CL_a^\dagger$ should be ``faster'' than that under $\CK_a$, performing such a comparison is non-trivial: the challenge lies in controlling how elements outside of the kernel of $\CK_a$ may interfere.

\subsubsection{Proof of \cref{theorem:K-to-ckg}}

Although semantically quite similar to the analogous comparison in \cite{BC26}, their proof hinged on a shell decomposition for the OFT, which currently does not seem within reach for all-to-all quantum systems. Instead, here we employ a strategy where we sequentially  ``peel off''  the frequency and time-dependence in the integral expression in the Dirichlet form above (\cref{lemma:Dirichlet}), by relying on the convergence of complex-time evolution.

\begin{proof}

    [of \cref{theorem:K-to-ckg}] The proof follows by first relating the Dirichlet Form of $\CL$ to an auxiliary Dirichlet form $ \CC[\vO]$ where the frequency dependence is removed, which we refer to as the $g(t)-$Dirichlet form:
\begin{align}
        \langle \vO,-\CL^\dagger[\vO]\rangle_{\vrho} \geq c_1\cdot \CC[\vO], \quad
        \CC[\vO] =  \sum_{a\in \CS_{[n]}^1}  \int_{-\infty}^\infty  g(t)\cdot \big\|[\vA^a(t), \vO]\|_{\vrho}^2\rd t 
\end{align}
this is the content of \cref{lemma:ckg-to-gt} (below). Note that the only difference to \cref{lemma:Dirichlet} is that the integral over $\omega$ has disappeared. Subsequently, we relate the $g(t)-$Dirichlet form to that of $\CK$:
\begin{equation}
    \CC[\vO] \geq c_2\cdot \langle \vO, -\CK[\vO]\rangle_{\vrho}, \quad  \langle \vO, -\CK[\vO]\rangle_{\vrho}= \sum_{a\in \CS_{[n]}^1}  \|[\vA, \vO]\|_{\vrho}^2
\end{equation}
by leveraging \cref{gt-dirichlet} (also below). Put together then gives the desired theorem. 
\end{proof}

\subsection{Peeling off the time and frequency dependence}

It remains only to prove \cref{gt-dirichlet} and  \cref{lemma:ckg-to-gt}, presented in this order for simplicity:

\begin{lemma}[Peeling off the time dependence]\label{gt-dirichlet}

Let $\vH$ be a $\sk$-local Hamiltonian of bounded degree $\sd$ and maximum pairwise interaction strength $\sJ$ (\cref{def:all-to-all}). Assume the convergence of complex-time evolution of single-site Pauli operators with constant $\mathsf{conv}_\beta$ (\cref{def:cte}).  

Then the $g(t)$ Dirichlet form is bounded by that of $\CK$:
    \begin{equation}
      \sum_{a\in \CS_{[n]}^1}  \int_{-\infty}^\infty  g(t)\cdot \big\|[\vA^a(t), \vO]\|_{\vrho}^2\rd t  \geq \gamma\cdot  \sum_{a\in \CS_{[n]}^1}  \|[\vA, \vO]\|_{\vrho}^2
    \end{equation}
    where $\gamma>0$ is a constant dependent only on $\sq, \sk, \mathsf{conv}_\beta$ and $\beta \cdot \sJ\cdot \sd$.
\end{lemma}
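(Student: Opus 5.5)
Restrict the $t$-integral to a short window $|t|\le \tau$ with $\tau:=\theta\beta$, where $\theta$ is a small constant fixed at the end. Since $g(t)=\frac{1}{\beta\cosh(2\pi t/\beta)}\ge \frac{1}{\beta\cosh(2\pi\theta)}$ on this window,
\begin{equation}
\int g(t)\|[\vA^a(t),\vO]\|_{\vrho}^2\rd t \geq \frac{1}{\beta\cosh(2\pi\theta)}\int_{-\tau}^{\tau}\|[\vA^a(t),\vO]\|_{\vrho}^2\rd t .
\end{equation}
Next use Duhamel, $\vA^a(t)-\vA^a = i\int_0^t [\vH,\vA^a](s)\rd s$, and $(x-y)^2\ge \tfrac12 x^2-y^2$:
\begin{equation}
\|[\vA^a(t),\vO]\|_{\vrho}^2 \geq \tfrac12\|[\vA^a,\vO]\|_{\vrho}^2 - |t|\int_0^{t}\big\|\big[[\vH,\vA^a](s),\vO\big]\big\|_{\vrho}^2\rd s .
\end{equation}
Integrating over $|t|\le\tau$ gives a main term $\tau\|[\vA^a,\vO]\|_{\vrho}^2$. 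The error term is at most $\tau^3$ times $\sup_{|s|\le\tau}\|[[\vH,\vA^a](s),\vO]\|_{\vrho}^2$. So the problem reduces to bounding this two-site error by single-site Dirichlet forms, uniformly over $|s|\le \tau$.

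The key step is a path comparison. Write $[\vH,\vA^a]=\sum_{e\ni i}[\vh_e,\vA^a]$ and expand each $[\vh_e,\vA^a]$ in Pauli strings $\vP$ on $e$. This gives at most $\sd\cdot 4^{\sq\sk}$ strings with coefficient weight $\lesssim \sJ$ per neighbour pair. Each $\vP=\vA^{b_1}\cdots\vA^{b_\sk}$ is a product of single-site Paulis, and I use the Leibniz rule
\begin{equation}
[\vP(s),\vO]=\sum_{r}\vA^{b_1}(s)\cdots\vA^{b_{r-1}}(s)\,[\vA^{b_r}(s),\vO]\,\vA^{b_{r+1}}(s)\cdots\vA^{b_\sk}(s).
\end{equation}
By KMS H\"older (\cref{lem:kms-holder}), the outer factors $\vA^b(s)$ with $s$ real are unitary. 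The issue is that H\"older requires $\|\vrho^{1/4}\vA^b(s)\vrho^{-1/4}\|=\|\vA^b(s+i\beta/4)\|$, and this is at most $1+\mathsf{conv}_{\beta}$ only once $|s|\le\tau\lesssim\beta$ is absorbed; for the right-hand factors I would move $\vrho^{1/2}$ symmetrically, which again costs only $\mathsf{conv}$. This yields
\begin{equation}
\big\|\big[[\vH,\vA^a](s),\vO\big]\big\|_{\vrho}^2 \lesssim \big(c^{\sq\sk}\sd\sJ(1+\mathsf{conv}_\beta)^{2\sk}\big)^2 \cdot \sd\sk\!\!\sum_{b\in\CS^1_{N(i)}}\!\!\|[\vA^b(s),\vO]\|_{\vrho}^2 ,
\end{equation}
where Cauchy–Schwarz over the $\lesssim \sd\sk 4^{\sq\sk}$ terms produces the extra $\sd\sk$ factor. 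Each site $j$ is a neighbour of at most $\sd$ sites $i$, so summing over $a\in\CS^1_{[n]}$ bounds the total error by
\begin{equation}
C\,\tau^3(\sd\sJ)^2\sd\, c^{\sq\sk}\sup_{|s|\le\tau}\sum_b\|[\vA^b(s),\vO]\|_{\vrho}^2 .
\end{equation}
The factor $\sd^3\sJ^2$ is not immediately a function of $\beta\sJ\sd$ alone. I would try to avoid the extra $\sd$ by using $\sJ$ as a pairwise weight and applying weighted Cauchy–Schwarz, $(\sum_e w_e x_e)^2\le(\sum w_e)\sum w_e x_e^2$ with $\sum_e w_e\le \sd\sJ$. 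Summing over $a$ and using the degree bound then gives $C\tau^3(\sd\sJ)^2\sum_b\sup_s\|[\vA^b(s),\vO]\|^2$.

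The remaining step, which I expect to be the main obstacle, is closing the loop. The error involves time-evolved single-site commutators $\|[\vA^b(s),\vO]\|$ at $|s|\le\tau$, not $\|[\vA^b,\vO]\|$. To handle this I apply the same Duhamel bound in the upper direction, $\|[\vA^b(s),\vO]\|^2\le 2\|[\vA^b,\vO]\|^2+2\tau^2 C'(\sd\sJ)^2\sup_{s'}\sum\|[\vA^{b'}(s'),\vO]\|^2$. Setting $M:=\sup_{|s|\le\tau}\sum_b\|[\vA^b(s),\vO]\|^2$, this gives $M\le 2D+2C'(\tau\sd\sJ)^2M$ with $D:=\sum_b\|[\vA^b,\vO]\|_{\vrho}^2$. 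For $\tau\sd\sJ$ small this yields $M\le 4D$; the resulting inequality is finite-dimensional, so $M$ is finite and the step is rigorous. The total error is then at most $C''\tau^3(\sd\sJ)^2D$, against a main term $\tau D$. Choosing $\theta$ so that $C''(\theta\beta\sd\sJ)^2c^{\sq\sk}\le\tfrac14$ (possible since $\beta\sd\sJ\le c_1^{-\sq\sk}$, or simply taking $\theta=1$) gives the claim with $\gamma\gtrsim \theta/(\cosh(2\pi\theta)(1+\mathsf{conv}_\beta)^{O(\sk)})$. The delicate point is the KMS-H\"older bookkeeping, in particular making sure the complex shifts stay within the radius controlled by $\mathsf{conv}_\beta$.
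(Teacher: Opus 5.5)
Your proof is correct, but it closes the loop differently from the paper.

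\textbf{The paper's route.} It keeps the full $g$-weight. After Duhamel and the reverse triangle inequality, it uses the sub-multiplicativity $g(t)\le \beta\, g(x)g(t-x)$ and Cauchy--Schwarz to bound the error by $D^{1/2}\bigl(\sum_j \CC_j[\vO]\bigr)^{1/2}$. The error is thus expressed through the $g(t)$-Dirichlet form itself, and the proof ends with a case split on whether $\sum_j\CC_j[\vO]\ge c_4^2 D$.

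\textbf{Your route.} You discard everything outside the window $|t|\le\theta\beta$, where $g$ is bounded below. You bound the Duhamel error by $\tau^3(\sd\sJ)^2 M$ with $M=\sup_{|s|\le\tau}\sum_b\|[\vA^b(s),\vO]\|_{\vrho}^2$. You then control $M$ by the Gr\"onwall-type self-consistency $M\le 2D+O((\tau\sd\sJ)^2)M$, so everything is expressed through $D$ alone.

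\textbf{What each buys.} Yours is more elementary: no case division, and it uses only that $g\gtrsim\beta^{-1}$ on a window of width $\sim\beta$, not the specific sub-multiplicativity of $1/\cosh$. Because $\theta$ may be chosen small in terms of $\beta\sd\sJ$, $\sq,\sk$ and $\mathsf{conv}_\beta$, you get $\gamma>0$ for every value of $\beta\sd\sJ$, as the lemma requires. Taking $\theta=1$ would need a smallness assumption the lemma does not make. The paper's mechanism, reabsorbing errors at arbitrary times into the $g$-form, is the one it reuses in \cref{lemma:ckg-to-gt}. There the target is $\CC[\vO]$ itself, so a short-time truncation is not available.

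\textbf{Bookkeeping remarks.}
\begin{itemize}
\item The ``delicate point'' you flag at the end is not actually delicate. Since $e^{i\vH s}$ commutes with $\vrho^{1/4}$, one has $\|\vrho^{\pm1/4}\vA^b(s)\vrho^{\mp1/4}\|=\|\vA^b(\pm i\beta/4)\|\le 1+\mathsf{conv}_{\beta/4}$ for every real $s$. So KMS H\"older imposes no constraint on $|s|$; $\tau$ only serves to make $(\tau\sd\sJ)^2$ small.
\item The number of edges containing a site is not bounded by $\sd$; it can be of order $\binom{\sd}{\sk-1}$. Only the weighted sum $\sum_{e\ni i}\|\vh_e\|\le\sd\sJ$ is controlled.
\item Your weighted Cauchy--Schwarz should therefore run over pairs $(e,j)$ with $e\ni i$, $j\in e$, weighted by $\|\vh_e\|$. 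Both the row sums (fixed $i$) and the column sums (fixed $j$) are at most $\sk\sd\sJ$, which gives the claimed $(\sd\sJ)^2$ up to a harmless $\sk^2$.
\end{itemize}
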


\begin{proof}

    [of \cref{gt-dirichlet}] For conciseness, let us denote the single-site $g(t)-$Dirichlet form as:
    \begin{equation}
        \CC_i[\vO] = \sum_{a\in \CS_{i}^1} \int_{-\infty}^\infty g(t) \|[\vA(t), \vO]\|_{\vrho}^2 \rd t
    \end{equation}
    Let us next fix a single jump operator $\vA$ supported on site $i\in [n]$. By Duhamel's formula, one can expand its time evolution in terms of the time-evolution of commutators:
    \begin{equation}
        \vA(t) - \vA = i\int_0^t e^{i\vH s} [\vH, \vA]e^{-i\vH s} \rd s.
    \end{equation}
    The reverse triangle inequality then exposes the Dirichlet form of $\CK_a$ of a given operator $\vO$:
    \begin{align}\label{eq:rev-triangle-t}
        \|[\vA(t), \vO]\|_{\vrho} \geq \|[\vA, \vO]\|_{\vrho} - \int_0^t \bigg\| \big[ [\vH, \vA](s), \vO\big]\bigg\|_{\vrho} \rd s
    \end{align}
    To proceed, we expand the commutator $[\vH, \vA] = \sum_{e\ni i} [\vh_{e}, \vA]$ as a linear combination of $\sk$-local Pauli operators, each of bounded norm:    
    \begin{align}
        \bigg\| \big[ [\vH, \vA](s), \vO\big]\bigg\|_{\vrho} &\leq 2^{2\sq\sk}\cdot  \sum_{e\ni i}  \|\vh_{e}\| \cdot \sum_{\vP\in \CS_e} \|[\vP(s), \vO]\|_{\vrho}\label{eq:comm-comm-A-gt}
    \end{align}
    with $\vP\in \CS_e$ the sum over all $\sk$-site Pauli operators on $e\subseteq [n]$. To proceed, we leverage the convergence of complex-time evolution and KMS H\"older's inequality \cref{lem:kms-holder} to relate the commutators of $\sk$-local Paulis to single-site Paulis. Let us denote as $c_\beta:=\max_{\vB\in \CS^1_{[n]}}\|e^{\pm \beta \vH}\vB e^{\mp \beta\vH}\|\leq 1+\mathsf{conv}_\beta$. Then, 
    \begin{equation}
      \forall\vP\in \CS_e:\quad   \|[\vP(s), \vO]\|_{\vrho} \leq 2(2c_\beta)^{\sk-1}\cdot \max_{\vB\in \CS^1_{e}} \|[\vB(s), \vO]\|_{\vrho},
    \end{equation}
    which gives us a much simplified version of 
    \begin{align}
        \eqref{eq:comm-comm-A-gt} &\leq (2^{2\sq}c_\beta)^\sk\cdot \sum_{e\ni i}  \|\vh_{e}\| \cdot \max_{\vB\in \CS^1_{e}} \|[\vB(s), \vO]\|_{\vrho} \\
        &\leq (2^{2\sq}c_\beta)^\sk\cdot  \sum_{e\ni i}\|\vh_{e}\| \cdot \sum_{j\in e}   \sum_{\vB\in \CS^1_{j}} \|[\vB(s), \vO]\|_{\vrho}.\label{eq:multi-to-single-gt}
    \end{align}
 This allows us to return to \eqref{eq:rev-triangle-t}:
\begin{align}
     \|[\vA(t), \vO]\|_{\vrho}^2 \geq   \|[\vA, \vO]\|_{\vrho}^2 - 2(2^{2\sq}c_\beta)^\sk\cdot  \|[\vA, \vO]\|_{\vrho} \cdot  \sum_{e\ni i}\|\vh_{e}\| \cdot \sum_{j\in e}   \sum_{\vB\in \CS^1_{j}} \int_0^t    \|[\vB(s), \vO]\|_{\vrho} \rd s\label{eq:comm-comm-simp-gt}
    \end{align}
    To proceed, we note that it suffices to consider the second term on the RHS above. To do so, we proceed by integrating over $g(t)$ as in the desired expression, by first using the fact $g(t)$ is sub-multiplicative: $0\leq x\leq t:g(t)\leq \beta g(x)g(t-x)$, and then sequentially applying the Cauchy-Schwarz inequality:
    \begin{align}
      \sum_{\vB\in \CS^1_{j}}  \int_0^\infty g(t) \int_0^t   \|[\vB(s), \vO]\|_{\vrho} \rd s \rd t &\leq \beta\cdot  \int_0^\infty \rd z g(z)\times  \sum_{\vB\in \CS^1_{j}}\int_0^\infty g(s)   \|[\vB(s), \vO]\|_{\vrho} \rd s \\
       &\leq \beta\cdot 2^{\sq}\cdot \bigg[ \sum_{\vB\in \CS^1_{j}}\int_0^\infty g(s)   \|[\vB(s), \vO]\|_{\vrho}^2 \rd s\bigg]^{1/2}\\
       &= \beta \cdot 2^\sq\cdot \CC_j[\vO]^{1/2}\label{eq:submult-application-gt}
    \end{align}
    which then exposes the $g(t)$-Dirichlet form at site $j$. This enables us to finally sum over $i\in [n]$ and single-jump operators $\vA$. For conciseness, let us denote $\eta_1:=\beta 2^{\sq+1}(2^{2\sq}c_\beta)^\sk$:
    \begin{align}
        \sum_i \CC_i[\vO] &\geq  \sum_{\substack{i\in [n] \\ a\in \CS_{i}^1}}\|[\vA^a, \vO]\|_{\vrho}^2 -  \eta_1 \sum_{\substack{i\in [n] \\ a\in \CS_{i}^1}}\|[\vA^a, \vO]\|_{\vrho}\cdot \sum_{e\ni i}\|\vh_{e}\| \cdot \sum_{j\in e}\CC_j[\vO]^{1/2} \\
        &\geq  \sum_{\substack{i\in [n] \\ a\in \CS_{i}^1}}\|[\vA^a, \vO]\|_{\vrho}^2 -  \eta_1 \sum_{\substack{i\in [n] \\ a\in \CS_{i}^1}}\|[\vA^a, \vO]\|_{\vrho}\cdot \bigg( \alpha\cdot \CC_i[\vO]^{1/2} + \sJ\cdot \sum_{j\sim i} \CC_j[\vO]^{1/2} \bigg)\\
        &\geq \sum_{ a\in \CS_{[n]}^1}\|[\vA^a, \vO]\|_{\vrho}^2 - 2^\sq\eta_1 (\alpha+\sd\sJ)\bigg[\sum_{ a\in \CS_{[n]}^1}\|[\vA^a, \vO]\|_{\vrho}^2\bigg]^{1/2} \cdot \bigg(\sum_i \CC_i[\vO]\bigg)^{1/2}\label{eq:we-are-done} 
    \end{align}
    where we recall the definitions: $\alpha=\max_i\sum_{e\ni i} \|\vh_e\|, \sJ=\max_{i\neq j} \sum_{e\ni i, j}\|\vh_e\|$ and $\sd=\max_i\{j: \exists e \text{ s.t. }i, j\in e\}$, such that $\alpha\leq \sd\cdot \sJ$.

     Finally, the desired result follows from case division. Assuming, for the purposes of contradiction, that:
    \begin{equation}
        \sum_i \CC_i[\vO] \leq c_4^2\cdot \sum_{\substack{i\in [n] \\ a\in \CS_{i}^1}}\|[\vA, \vO]\|_{\vrho}^2 
    \end{equation}
    then \eqref{eq:we-are-done} implies
    \begin{equation}
         \sum_i \CC_i[\vO] \geq  \sum_{\substack{i\in [n] \\ a\in \CS_{i}^1}}\|[\vA, \vO]\|_{\vrho}^2\cdot \bigg(1-\beta \cdot 2^{2\sq+2}(2^{2\sq}c_\beta)^\sk\cdot c_4\cdot   \sd\cdot \sJ \bigg)
    \end{equation}
    appropriately choosing $c_4$ then gives the desired bound.     
\end{proof}

Next, we show how to relate the \cite{chen2023efficient} Dirichlet form to the $g(t)-$Dirichlet form.

\begin{lemma}
    [\cite{chen2023efficient} to $g(t)$] \label{lemma:ckg-to-gt}
    Let $\vH$ be a $\sk$-local Hamiltonian of bounded degree $\sd$ and maximum pairwise interaction strength $\sJ$ (\cref{def:all-to-all}). There exists a universal constant $\mathsf{c}>0$ such that for any $\beta < (\sJ\sd \mathsf{c}^{\sq\sk})^{-1}$:
     \begin{equation}
      \sum_{a\in \CS_{[n]}^1}  \iint_{-\infty}^\infty h_\sM(\omega) g(t)\cdot \big\|[\hat{\vA}^a(\omega, t), \vO]\|_{\vrho}^2\rd t \rd \omega  \geq \frac{1}{4} \cdot \int_{-\infty}^\infty g(t) \sum_{a\in \CS_{[n]}^1}  \|[\vA(t), \vO]\|_{\vrho}^2 \rd t
    \end{equation}
    where $\hat{\vA}^a(\omega)$ is the operator Fourier transform with energy filter $\sigma=\beta^{-1}$.
\end{lemma}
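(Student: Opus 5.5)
The plan is to mirror the proof of \cref{gt-dirichlet}, now peeling off the frequency variable $\omega$ instead of time. First, integrate $\omega$ out of the operator Fourier transform. Since $\hat{\vA}^a(\omega,t) = \frac{1}{\sqrt{2\pi}}\int f_\sigma(s) e^{-i\omega s}\vA^a(t+s)\,\rd s$, Parseval in $\omega$ gives
\begin{equation}
\int_{-\infty}^\infty \big\|[\hat{\vA}^a(\omega,t),\vO]\big\|_{\vrho}^2\,\rd\omega = \int_{-\infty}^\infty f_\sigma(s)^2\,\big\|[\vA^a(t+s),\vO]\big\|_{\vrho}^2\,\rd s .
\end{equation}
The Metropolis filter $h_\sM(\omega)=e^{-\sigma^2\beta^2/8}e^{-|\omega|\beta/2}$ is not flat, so we cannot apply Parseval directly. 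Instead we split $\omega$ into $|\omega|\le \Omega$ and $|\omega|>\Omega$, with $\Omega = C\sigma$ for a constant $C$, and recall $\sigma=\beta^{-1}$. On the first window $h_\sM\ge e^{-1/8-C/2}$ is a constant. It then suffices to show that the high-frequency tail $\int_{|\omega|>\Omega}\|[\hat{\vA}^a(\omega,t),\vO]\|_{\vrho}^2\,\rd\omega$ is at most a small fraction of the full $\omega$-integral.

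Second, bound this tail by a Bohr-frequency argument. Write $[\hat{\vA}^a(\omega,t),\vO]$ using $\hat{\vA}^a(\omega)=\sum_\nu \hat f_\sigma(\omega-\nu)\vA_\nu$. Large $|\omega|$ with Gaussian weight forces large energy transfer $|\nu|$. In the KMS norm, transfer $\nu$ is controlled by the complex-time evolution: $\|e^{\pm\beta\vH/4}\vA e^{\mp\beta\vH/4}\|\le 1+\mathsf{conv}_{\beta/4}$, together with KMS H\"older (\cref{lem:kms-holder}). Together these give an exponential weight $e^{-\beta|\nu|/4}$ in the relevant norm. The idea is to rewrite the tail contribution as the KMS norm of $e^{\theta \vH}\vA e^{-\theta\vH}$-type conjugates filtered at high frequency, which are small at $|\omega|\sim C/\beta$ once $C$ is large. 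Concretely, we would try to show
\begin{equation}
\int_{|\omega|>\Omega} \big\|[\hat{\vA}^a(\omega,t),\vO]\big\|_{\vrho}^2\,\rd\omega \le \epsilon(C,\mathsf{conv}_\beta)\cdot \int f_\sigma(s)^2\,\big\|[\vA^a(t+s),\vO]\big\|_{\vrho}^2\,\rd s
\end{equation}
plus error terms, with $\epsilon\to 0$ as $C\to\infty$. If the error terms involve $k$-local commutators from $[\vH,\vA]$, we convert them back to single-site $g(t)$-Dirichlet forms exactly as in \eqref{eq:multi-to-single-gt}. That conversion costs factors $(2^{2\sq}c_\beta)^\sk\,\beta\sd\sJ$, which are absorbed under the hypothesis $\beta<(\sJ\sd\mathsf{c}^{\sq\sk})^{-1}$.

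Third, remove the smoothing in $s$. We now have a lower bound of the form $c\int g(t)\int f_\sigma(s)^2\|[\vA^a(t+s),\vO]\|^2\,\rd s\,\rd t$. We need to compare the convolution $g * f_\sigma^2$ with $g$ itself. Both have width of order $\beta$, and $g(t)=\frac{1}{\beta\cosh(2\pi t/\beta)}$. A change of variables $t'=t+s$ gives $\int g(t'-s) f_\sigma(s)^2\,\rd s \ge c' g(t')$ pointwise, because $g(t'-s)\ge e^{-2\pi|s|/\beta}g(t')$ and $f_\sigma^2$ is a Gaussian at scale $\beta$. This yields $\ge c''\int g(t)\|[\vA^a(t),\vO]\|^2\,\rd t$.

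Tracking the constants down to the claimed $\frac14$ may require choosing $C$ carefully, or conceding a constant. The main obstacle is the second step: making the high-frequency tail small in KMS norm without an all-to-all shell decomposition. I would first try to handle it by writing the frequency-shifted tail via an analytic continuation $t\mapsto t\pm i\beta/4$ of the Gaussian-filtered time evolution, and then bounding it with $\mathsf{conv}_{\beta/4}$ and KMS H\"older.
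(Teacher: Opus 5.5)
Your Steps 1 and 3 are sound. Plancherel applies to the KMS-Hilbert-space-valued map $s\mapsto f_\sigma(s)[\vA^a(t+s),\vO]$, and $g(u-s)\ge e^{-2\pi|s|/\beta}g(u)$ does give $\int g(u-s)f_\sigma(s)^2\,\rd s\ge c'\,g(u)$. The gap is Step 2, which carries essentially all of the content of the lemma: you leave it open, and the tools you propose for it cannot close it.

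\emph{Why the proposed tools fail.} Bounds built from $\mathsf{conv}_{\beta/4}$ and \cref{lem:kms-holder}, or from exponential weights on the Bohr components of $\vA$, control the \emph{operator} $\vA(u\pm i\beta/4)-\vA(u)$, or the high-frequency part of $\vA$. They therefore produce errors of size $\mathsf{conv}_{\beta/4}\,\|\vO-\tr[\vrho\vO]\vI\|_{\vrho}$, not errors measured by commutators with $\vO$. You incur one such error per jump, and this is fatal. For $\vO=\sum_i\vZ_i$, each jump's tail bound is of order $\mathsf{conv}_{\beta/4}^2\,n$, while each jump's Dirichlet-form contribution is $O(1)$. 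The gap of $\CK$ cannot rescue this, since it only trades a single copy of the variance for the Dirichlet form.

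\emph{Further defects.} The Bohr-frequency version has an additional problem. The pieces $[\vA_\nu,\vO]$ are not KMS-orthogonal across $\nu$, so reweighting them by $\hat f_\sigma(\omega-\nu)$ can destroy cancellations present in $[\vA(t),\vO]$. This is exactly the interference issue flagged before the proof. Relatedly, a per-jump bound of the tail by a small fraction of the \emph{same} jump's $\omega$-integral is not what one can obtain. The high-frequency content of $s\mapsto[\vA(t+s),\vO]$ is governed by its derivative $i[[\vH,\vA](t+s),\vO]$. That derivative is controlled by single-site commutators at \emph{neighbouring} sites, and it becomes small only after summing over all jumps.

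\emph{The missing idea.} You need to measure the deviation from $\hat f_\sigma(\omega)\vA(t)$ in Dirichlet-form currency. The paper does this without any frequency split, in four steps.
\begin{enumerate}
\item Duhamel gives $\hat{\vA}(\omega,t)-\hat f_\sigma(\omega)\vA(t)=\frac{1}{\sqrt{2\pi}}\int f_\sigma(z)e^{-i\omega z}\bigl(\vA(t+z)-\vA(t)\bigr)\,\rd z$, with $\vA(t+z)-\vA(t)=i\int_0^z[\vH,\vA](t+s)\,\rd s$.
\item The reverse triangle inequality then leaves an $\omega$-independent error.
\item The $\sk$-local commutators $[[\vH,\vA](u),\vO]$ are reduced to $\CC_j[\vO]^{1/2}$ at sites $j\sim i$, as in \eqref{eq:multi-to-single-gt}.
\item Cauchy--Schwarz in $(s,t)$ and summation over all jumps give a total error $\lesssim\beta\sJ\sd\,(2^{2\sq}c_\beta)^{\sk}\,\CC[\vO]$.
\end{enumerate}
The smallness comes from the filter's time width $\sigma^{-1}=\beta$ times the local energy-transfer rate $\|[\vH,\vA]\|\le 2\sd\sJ$.

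\emph{How to repair your route.} Within your framework, the corresponding fix is $\int_{|\omega|>\Omega}\|\hat F(\omega)\|_{\vrho}^2\,\rd\omega\le\Omega^{-2}\int\|\partial_sF(s)\|_{\vrho}^2\,\rd s$ for $F(s)=f_\sigma(s)[\vA(t+s),\vO]$, followed by the same reduction. Alternatively, your analytic-continuation idea can be salvaged. Expand $\vA(u\pm i\beta/4)-\vA(u)$ with \cref{lemma:cluster}, and reduce each cluster term's commutator with $\vO$ to single-site commutators on its support. This costs a factor exponential in the cluster size, which is absorbable at small $\beta\sd\sJ$. Bounding the term by $\mathsf{conv}_{\beta/4}$ instead does not work.

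\emph{The constant.} Even once Step 2 is repaired, your window argument loses a factor $e^{-1/8-C/2}$ with $C$ large, plus the Step 3 constant. That is enough for \cref{theorem:K-to-ckg}, but it is far below the $\frac14$ claimed in the lemma. The paper reaches $\frac14$ because its main term keeps the full frequency integral, $\int h_\sM\hat f_\sigma^2\,\rd\omega\ge\frac12$, together with the unsmoothed $\|[\vA(t),\vO]\|_{\vrho}$. No analogue of your Step 3 is needed there.
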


\begin{proof}

   We begin as in the proof of \cref{gt-dirichlet} by applying Duhamel's formula, now to the OFT:
    \begin{align}
         \hat{\vA}(\omega) -\vA \hat{f}_\sigma(\omega) &= \frac{1}{\sqrt{2\pi}}\int_{-\infty}^\infty (\vA(z)-\vA)\cdot  f_\sigma(z) e^{i\omega z} \rd z , \quad \\ \text{ with } z\ge 0:&\quad \vA(z)-\vA=i\int_0^z [\vH, \vA](s) \rd s
    \end{align}
    Once again by the reverse triangle inequality, we can then lower bound the KMS norm of commutators with $\vO$:
    \begin{align}
        \|[ \hat{\vA}(\omega, t), \vO]\|_{\vrho} \geq&  \|[ \vA(t), \vO]\|_{\vrho}\cdot \hat{f}_\sigma(\omega) \\&- \frac{1}{\sqrt{2\pi}} \int_{0}^\infty f_\sigma(z) \int_0^z    \bigg\|\big[[\vH, \vA](s+t), \vO\big] \bigg\|_{\vrho} \rd s\rd z  -\text{(opposite range)}\label{eq:integrand-rev-tri}
    \end{align}
    Our first course of action is to integrate out the $z$ degree-of-freedom, by explicitly evaluating the integral over $x=z-s$:
    \begin{align}
        &\int_{0}^\infty f_\sigma(z) \int_0^z    \bigg\|\big[[\vH, \vA](s+t), \vO\big] \bigg\|_{\vrho} \rd s\rd z = \\=&  \int_{0}^\infty \bigg\|\big[[\vH, \vA](s+t), \vO\big] \bigg\|_{\vrho} \cdot  \bigg(\int_{0}^\infty f_\sigma(s+x) \rd x\bigg) \rd s  \\
       \leq & \frac{1}{\sqrt{\sigma}} \cdot  \int_{0}^\infty \bigg\|\big[[\vH, \vA](s+t), \vO\big] \bigg\|_{\vrho} \cdot e^{-\sigma^2 s^2} \rd s.
    \end{align}
Analogous to the proof of \cref{gt-dirichlet}, our next goal is to express the norm of the double commutator $[[\vH, \vA](u), \vO]$ in terms of the $g-$Dirichlet form. This is immediate from the calculations in \eqref{eq:comm-comm-A-gt}, \eqref{eq:multi-to-single-gt}:
\begin{align}
    \bigg\|\big[[\vH, \vA](s+t), \vO\big] \bigg\|_{\vrho} \leq \eta_2\bigg(\alpha\sum_{\vB\in \CS_i^1}   \|[\vB(s+t), \vO]\|_{\vrho} + \sJ\cdot \sum_{j\sim i} \sum_{\vB\in \CS_j^1}  \|[\vB(s+t), \vO]\|_{\vrho}\bigg)
\end{align}
with $\eta_2:= (2^{2\sq}c_\beta)^\sk.$ To proceed, we will require the following bounds on integrals over $s, t$, $\forall \vB:$
\begin{align}
      &\iint_{-\infty}^\infty \bigg\|\big[\vB(s+t), \vO\big] \bigg\|_{\vrho}^2 \cdot e^{-\sigma^2 s^2}  g(t)\cdot \rd t\rd s \\ =&\int_{-\infty}^\infty \bigg\|\big[\vB(u), \vO\big] \bigg\|_{\vrho}^2 \cdot \bigg|\int_{-\infty}^\infty e^{-\sigma^2 s^2}  g(u-s)\cdot \rd s\bigg| \rd u
      \\ \leq &\frac{\sqrt{\pi}e^{\pi^2/(\sigma^2\beta^2)}}{\sigma} \int_{-\infty}^\infty \bigg\|\big[\vB(u), \vO\big] \bigg\|_{\vrho}^2 g(u) \rd u\\ =& \frac{\sqrt{\pi}e^{\pi^2/(\sigma^2\beta^2)}}{\sigma} \CC_\vB[\vO]  .
    \end{align}
    which exposes the $g$-Dirichlet form. In the above, we leveraged the explicit formula for $g(x)=\frac{1}{\beta\cosh (2\pi x/\beta)}$. 
    We can now integrate \eqref{eq:integrand-rev-tri} over $t$ under the filter function $g(t)$. We focus on the error term:
    \begin{align}
       &\iint_{-\infty}^\infty   g(t)\cdot   \|[ \vA(t), \vO]\|_{\vrho}\cdot \bigg\|\big[\vB(s+t), \vO\big] \bigg\|_{\vrho} \cdot e^{-\sigma^2 s^2} \rd s \rd t \\ \leq& \CC_{\vA}[\vO]^{1/2} \cdot \bigg| \iint_{-\infty}^\infty \bigg\|\big[\vB(s+t), \vO\big] \bigg\|_{\vrho}^2 \cdot e^{-\sigma^2 s^2}  g(t)\cdot \rd t\rd s\bigg|^{1/2} \cdot \bigg|\int \rd s e^{-\sigma^2 s^2}\bigg|^{1/2} \\
       \leq & \CC_{\vA}[\vO]^{1/2}\cdot \CC_{\vB}[\vO]^{1/2}\cdot \frac{\sqrt{\pi}e^{\pi^2/(2\sigma^2\beta^2)}}{\sigma }\label{eq:CC-exp}
       \end{align}
       To proceed we require two tail bounds on Gaussian integrals over the frequency $\omega$. Here, we explicitly assume $\sigma \leq \beta^{-1}$, such that:
    \begin{equation}
        \int h_\sM(\omega)\cdot \hat{f}_\sigma(\omega)^2 \rd \omega \geq \frac{1}{2}, \quad  \int  h_\sM(\omega)\cdot \hat{f}_\sigma(\omega) \rd \omega \leq 4 \sqrt{\sigma}.
    \end{equation}
    Now integrating \eqref{eq:CC-exp} over $\omega$ and summing over single-site jump operators $\vA^a\in \CS_{i}^1$ gives:
       \begin{align}
           \CE[\vO]&\geq \frac{1}{2}\CC[\vO]-\frac{\eta_2\cdot e^{\pi^2/(2\sigma^2\beta^2)}}{\sigma}\cdot \sum_i\CC_i[\vO]^{1/2}\cdot \bigg(\alpha \CC_i[\vO]^{1/2}+\sJ\sum_{j\sim i} \CC_j[\vO]^{1/2}\bigg)\\
           &\geq \bigg[\frac{1}{2}-\frac{\eta_2\cdot e^{\pi^2/(2\sigma^2\beta^2)}}{\sigma}\cdot \sJ \sd\bigg]\cdot \CC[\vO] \\
           &\geq \bigg[\frac{1}{2} - \beta\sJ\sd\cdot (c_1^\sq c_\beta)^{\sk}\bigg]\cdot \CC[\vO]
       \end{align}
Where in the last line we picked $\sigma=\beta^{-1}$. To conclude the proof, we pick $\beta\sJ\sd$ sufficiently small such that \cref{lemma:conv-cluster} implies the convergence of complex-time evolution with $c_\beta=1+\mathsf{conv}_\beta\leq 2$. Then, there exists a universal constant $\mathsf{c}>1$ such that $\beta\sJ\sd \leq \mathsf{c}^{-\sq\sk}$ implies 
\begin{align}
    \CE[\vO]&\geq \frac{1}{4}\CC[\vO]
\end{align}
as desired.
\end{proof}

\printbibliography

\end{document}